\documentclass{amsart}

\usepackage[lite]{amsrefs}
\usepackage{amssymb}
\usepackage[all,cmtip]{xy}
\usepackage{amsthm}
\usepackage{amsmath}
\usepackage{amsfonts}
\usepackage[hidelinks]{hyperref}
\usepackage{enumerate}
\usepackage{physics}
\usepackage[titletoc]{appendix}
\usepackage{fancyhdr}
\fancyhf[ch]{}

\numberwithin{equation}{section}
\newtheorem{definition}{Definition}[section]
\newtheorem{example}[definition]{Example}
\newtheorem{theorem}[definition]{Theorem}
\newtheorem{lemma}[definition]{Lemma}

\newtheorem{proposition}[definition]{Proposition}
\newtheorem{remark}[definition]{Remark}

\newtheorem{ansatz}[definition]{Ansatz}

\begin{document}
\title[Superconformal Structures of SUSY Free Fermion Vertex Algebras]{Superconformal Structures of Supersymmetric Free Fermion Vertex Algebras}

\author{Sangwon Yoon}
\address{Department of Mathematical Sciences, Seoul National University, Gwanak-ro 1, Gwanak-gu, Seoul 08826, Korea}
\email{ysw317@snu.ac.kr}

\begin{abstract}
In this paper, we define the shifted superconformal vector of supersymmetric charged free fermion vertex algebras, which is a 1-parameter deformation of the superconformal vector of the SUSY $bc$-$\beta\gamma$ system. Moreover, we find the corresponding shifted $N=2$ superconformal symmetry of SUSY charged free fermion vertex algebras, by using the $N_{K}=1$ SUSY vertex algebra formalism. Finally, in order to describe the shifted $N=2$ superconformal symmetry of the SUSY charged free fermion vertex algebra by $N=2$ superfields, we construct an $N_{K}=2$ SUSY version of the $bc$-$\beta\gamma$ system.
\end{abstract}

\maketitle

\section{Introduction}\label{section1}
\setcounter{equation}{0}

Superconformal symmetries have been studied widely in both mathematics and physics literature, see for example \cites{Ada99, GKO86, KW04, SS87}. Especially in the theory of vertex algebras, superconformal vectors give rise to the superconformal structures of vertex algebras (see \cite{Kac98}).
\\
\indent
In this paper, we study the superconformal structure and the deformation of superconformal vectors of the supersymmetric charged free fermion vertex algebras. In \cites{DSK06, KRW03}, the non-SUSY charged free fermion vertex algebra appears as a ghost part of the non-SUSY $W$-algebra. Just as in the non-SUSY case, the SUSY charged free fermion vertex algebra is the ghost part of the SUSY $W$-algebra, which was studied in \cites{MR94, MRS21}. On the other hand, conformal vectors play an important role in the theory of vertex algebras, for example, they induce the notion of energy-momentum fields (see \cite{Kac98}). Also, superconformal vectors can be regarded as supersymmetric analogues of conformal vectors.
\\
\indent
Despite some difficulties in finding superconformal vectors in general, the constructions of superconformal vectors were studied in some cases. For example, the superconformal vector of the non-SUSY free fermion vertex algebra was studied in \cite{GKO86}, and the Kac-Todorov construction (see \cite{KT85}) of the non-SUSY affine vertex algebras was given in \cite{Kac98}. In the SUSY case, some basics about the superconformal structures of SUSY vertex algebras and especially superconformal vectors of the SUSY affine vertex algebras and the SUSY $bc$-${\beta}{\gamma}$ system can be found in \cite{HK07}.
\\
\indent
In the context of BRST cohomology, deformations of the (super)conformal vectors are also important, in addition to the construction of (super)conformal vectors.
In \cites{EFH98}, the 1-parameter deformation of the conformal vector of the non-SUSY $bc$-${\beta}{\gamma}$ system was introduced. Also, the modification of the Sugawara construction of the conformal vector of the non-SUSY affine vertex algebra was used in \cites{DSK06, KRW03}, and for the SUSY counterpart, the modified Kac-Todorov superconformal vector of the SUSY affine vertex algebra was presented in \cite{HZ10}.
\\
\indent
In the present paper, we give a 1-parameter deformation of the superconformal vector of the SUSY $bc$-${\beta}{\gamma}$ system (Theorem \ref{theorem4.6}), which is called the shifted superconformal vector of the SUSY charged free fermion vertex algebras, in order to obtain varying conformal weights of monomial fields of the SUSY charged free fermion vertex algebras. Also we find an $N=2$ superconformal symmetry (Theorem \ref{theorem4.14}), compatible with our shifted superconformal vector.
\\
\indent
The organization of this paper is as follows. In Section \ref{section2}, following \cites{HK07, Kac98}, we briefly review the definitions and basic properties of the vertex algebras and the SUSY vertex algebras. In Section \ref{section3}, we recall the definitions of superconformal vertex algebras in the context of the vertex algebras and the SUSY vertex algebras. In Section \ref{section4}, we state the main results of this work. We define the shifted superconformal vector of the supersymmetric charged free fermion vertex algebras and find the corresponding $N=2$ superconformal symmetry. At the end of this section, we describe the shifted $N=2$ superconformal structure, using the $N_{K}=2$ SUSY vertex algebra formalism. In Appendix \ref{sectionA}, we present some calculations of lambda brackets of the SUSY charged free fermion vertex algebras. Appendix \ref{sectionB} contains the charge decomposition of the SUSY charged free fermion vertex algebras.

\section{Vertex algebras and SUSY vertex algebras}\label{section2}
\setcounter{equation}{0}

In this section, we recollect the definitions of vertex algebras and SUSY vertex algebras. We only consider the $N_{K}=1$ and $N_{K}=2$ SUSY vertex algebras. The main references are \cite{DSK06}, \cite{HK07} and \cite{Kac98}.

\begin{definition}\label{definition2.1}{\rm \cite{Kac98}}
\rm Let $V$ be a vector superspace. Let $z$ be an even indeterminate. A \textit{field} is an End($V$)-valued formal distribution of the form
\begin{align*}
\begin{split}
a(z) = \sum\limits_{j \in \mathbb{Z}}{z}^{-j-1}{a}_{(j)},
\end{split}
\end{align*}
where ${a}_{(j)}v = 0$ for all but finitely many $j \in \mathbb{Z}$, for each $v \in V$.
\end{definition}

\begin{definition}\label{definition2.2}{\rm \cites{DSK06, Kac98}}
\rm A \textit{vertex algebra} $(V, \ket{0}, \partial, Y)$ consists of a vector superspace $V$ over $\mathbb{C}$, an even vector $\ket{0}$, an even endomorphism $\partial$, and the \textit{state-field correspondence} $Y$ which is a parity preserving linear map from $V$ to the space of End($V$)-valued fields, such that the following axioms hold:
\\
$\bullet$ (Vacuum)
\begin{align*}
\begin{split}
\partial\ket{0}=0, \quad Y(a, z)\ket{0}|_{z=0} = a,
\end{split}
\end{align*}
$\bullet$ (Translation covariance)
\begin{align*}
\begin{split}
[\partial, Y(a, z)] = {\partial}_{z}Y(a, z),
\end{split}
\end{align*}
$\bullet$ (Locality) for any $a$, $b$ $\in$ $V$, there exists $n \in \mathbb{Z}_{\geq 0}$ such that
\begin{align*}
\begin{split}
(z-w)^{n}[Y(a, z), Y(b, w)] = 0.
\end{split}
\end{align*}
The \textit{normally ordered product} on $V$ is defined by
\begin{align*}
\begin{split}
:ab: \ =  {a}_{(-1)}b.
\end{split}
\end{align*}
In this paper, we just denote by $ab$ the normally ordered product and $:a:bc::$ is denoted by $abc$.
\end{definition}

\begin{definition}\label{definition2.3}{\rm \cite{DSK06}}
\rm A \textit{Lie conformal algebra} is a $\mathbb{Z}/2\mathbb{Z}$-graded $\mathbb{C}[\partial]$-module $R$ with a $\lambda$-\textit{bracket} which is a parity preserving $\mathbb{C}$-bilinear map:
\begin{align*}
\begin{split}
[ \ _{\lambda} \ ] : R \otimes R \rightarrow \mathbb{C}[\lambda] \otimes R, \quad a \otimes b \mapsto [a \ _{\lambda} \ b],
\end{split}
\end{align*}
satisfying the following conditions:
\\
$\bullet$ (Sesquilinearity)
\begin{align*}
\begin{split}
[\partial a \ _{\lambda} \ b] = - \lambda[a \ _{\lambda} \ b], \quad [a \ _{\lambda} \ \partial b] = (\partial + \lambda)[a \ _{\lambda} \ b],
\end{split}
\end{align*}
$\bullet$ (Skew-symmetry)
\begin{align*}
\begin{split}
[b \ _{\lambda} \ a] = (-1)^{ab+1}[a \ _{-\partial-\lambda} \ b],
\end{split}
\end{align*}
$\bullet$ (Jacobi identity)
\begin{align*}
\begin{split}
[a \ _{\lambda} \ [b \ _{\mu} \ c]] = [[a \ _{\lambda} \ b] \ _{\lambda+\mu} \ c] + (-1)^{ab}[b \ _{\mu} \ [a \ _{\lambda} \ c]],
\end{split}
\end{align*}
where we write the $\lambda$-brackets as:
\begin{align*}
\begin{split}
[a \ _{\lambda} \ b] = \sum\limits_{j \in \mathbb{Z}_{\geq 0}}\frac{{\lambda}^{j}}{j!}a_{(j)}b.
\end{split}
\end{align*}
\end{definition}

\begin{definition}\label{definition2.4}{\rm \cite{DSK06}}
\rm A \textit{vertex algebra} is a tuple $(V, \partial, [ \ _{\lambda} \ ], \ket{0}, : \ :)$ such that:
\\
$\bullet$ $(V, \partial, [ \ _{\lambda} \ ])$ is a Lie conformal algebra,
\\
$\bullet$ $(V, \partial, \ket{0},: \:)$ is a unital differential superalgebra with a derivation $\partial$, satisfying the following properties:
\\
(Quasi-commutativity)
\begin{align*}
\begin{split}
ab - (-1)^{ab}ba = \int_{-\partial}^{0} [a \ _{\lambda} \ b] \ d\lambda,
\end{split}
\end{align*}
(Quasi-associativity)
\begin{align*}
\begin{split}
(ab)c - a(bc) = \left( \int_{0}^{\partial} \ d\lambda a \right)[b \ _{\lambda} \ c] + (-1)^{ab}\left( \int_{0}^{\partial} \ d\lambda b \right)[a \ _{\lambda} \ c],
\end{split}
\end{align*}
$\bullet$ the $\lambda$-bracket and the product $: \:$ are related by:
\\
(Non-commutative Wick formula)
\begin{align*}
\begin{split}
[a \ _{\lambda} \ bc] = [a \ _{\lambda} \ b]c + (-1)^{ab}b[a \ _{\lambda} \ c] + \int_{0}^{\lambda} [[a \ _{\lambda} \ b] \ _{\mu} \ c] \ d\mu.
\end{split}
\end{align*}
\end{definition}

\begin{example}\label{exmaple2.5}{\rm \cites{DSK06, KRW03}}
\rm Let $U$ be a finite-dimensional vector superspace and $A$ be a basis of $U$. Define two vector superspaces:
\begin{align*}
\begin{split}
\varphi_{U} \simeq {\Pi}U, \quad \varphi^{U} \simeq {\Pi}{U}^{*},
\end{split}
\end{align*}
whose basis elements are denoted by $\varphi_{a}$ and $\varphi^{a}$ respectively. Here ${\Pi}$ denotes the parity reversing functor and ${U}^{*}$ is the dual vector space of $U$. Define the $\lambda$-brackets on $R^{ch} = \mathbb{C}[\partial] \otimes (\varphi_{U} \oplus \varphi^{U})$ as:
\begin{align*}
\begin{split}
[\varphi_{a} \ _{\lambda} \ \varphi^{b}] = {\delta}_{ab}, \quad [\varphi_{a} \ _{\lambda} \ \varphi_{b}] = [\varphi^{a} \ _{\lambda} \ \varphi^{b}] = 0,
\end{split}
\end{align*}
for $a, b \in A$. Then $R^{ch}$ is a Lie conformal algebra, and the \textit{charged free fermion vertex algebra} $F^{ch}$ associated to the vector superspace $U$ is the universal enveloping vertex algebra $V(R^{ch})$.
\end{example}

\begin{example}\label{exmaple2.6}{\rm \cite{EHKZ09}}
\rm Let $U$ be the vector superspace with the basis $A = A_{\bar{0}} \cup A_{\bar{1}}$ be a basis of $U$, where $A_{\bar{0}} = \{\beta, \gamma\}$ is the even part and $A_{\bar{1}} = \{b, c\}$ is the odd part of the basis.
Define the $\lambda$-brackets on $R^{bc\beta\gamma} = \mathbb{C}[\partial] \otimes U$ as:
\begin{align*}
\begin{split}
[\beta \ _{\lambda} \ \gamma] = 1, &\quad [\gamma \ _{\lambda} \ \beta] = -1,
\\
[b \ _{\lambda} \ c] = 1, &\quad [c \ _{\lambda} \ b] = 1,
\end{split}
\end{align*}
and other $\lambda$-brackets on the basis elements are zeros. Then $R^{bc\beta\gamma}$ is a Lie conformal algebra, and the \textit{$bc$-$\beta\gamma$ system} is the universal enveloping vertex algebra $V(R^{bc\beta\gamma})$.
\end{example}

\begin{definition}\label{definition2.7}{\rm \cite{HK07}}
\rm Let $V$ be a vector superspace. Let $z$ be an even indeterminate and $\theta$ be an odd indeterminate such that ${\theta}^{2} = 0$ and $\theta z = z\theta$. An $N = 1$ \textit{superfield} is an End($V$)-valued formal distribution of the form
\begin{align*}
\begin{split}
a(z, \theta) = \sum\limits_{j \in \mathbb{Z}}{z}^{-j-1}{a}_{(j|1)} + \theta\sum\limits_{j \in \mathbb{Z}}{z}^{-j-1}{a}_{(j|0)},
\end{split}
\end{align*}
where ${a}_{(j|*)}v = 0$ for all but finitely many $j \in \mathbb{Z}$, for each $v \in V$.
\end{definition}

\begin{definition}\label{definition2.8}{\rm \cite{HK07}}
\rm An $N_{K} = 1$ \textit{SUSY vertex algebra} is a tuple $(V, \ket{0}, D, Y)$ consisting of a vector superspace $V$ over $\mathbb{C}$, an even vector $\ket{0}$, an odd endomorphism $D$, and the \textit{state-field correspondence} $Y$ which is a parity preserving linear map from $V$ to the space of End($V$)-valued $N = 1$ superfields, such that the following axioms hold:
\\
$\bullet$ (Vacuum)
\begin{align*}
\begin{split}
D\ket{0}=0, \quad Y(a, z, \theta)\ket{0}|_{z=0, \theta=0} = a,
\end{split}
\end{align*}
$\bullet$ (Translation covariance)
\begin{align*}
\begin{split}
[D, Y(a, z, \theta)] = ({\partial}_{\theta} - {\theta}{\partial}_{z})Y(a, z, \theta),
\end{split}
\end{align*}
$\bullet$ (Locality) for any $a$, $b$ $\in$ $V$, there exists $n \in \mathbb{Z}_{\geq 0}$ such that
\begin{align*}
\begin{split}
(z-w)^{n}[Y(a, z, \theta), Y(b, w, \zeta)] = 0.
\end{split}
\end{align*}
The \textit{normally ordered product} on $V$ is defined by
\begin{align*}
\begin{split}
:ab: \ =  {a}_{(-1|1)}b.
\end{split}
\end{align*}
\end{definition}

\begin{remark}\label{remark2.9}{\rm \cite{HK07}}
\rm Consider the noncommutative associative superalgebra $\mathcal{H}_{N=1} = \mathbb{C}[\partial, D]$ generated by an even generator $\partial$ and an odd generator $D$, with relations:
\begin{align*}
\begin{split}
D^{2} = \partial, \quad [\partial, D] = 0, 
\end{split}
\end{align*}
and consider the noncommutative associative superalgebra 
$\mathcal{L}_{N=1} = \mathbb{C}[\lambda, \chi]$ generated by an even generator $\lambda$ and an odd generator $\chi$, with relations:
\begin{align*}
\begin{split}
{\chi}^{2} = -\lambda, \quad [\lambda, \chi] = 0.
\end{split}
\end{align*}
\end{remark}

\begin{definition}\label{definition2.10}{\rm \cites{HK07, MRS21}}
\rm An $N_{K} = 1$ \textit{supersymmetric Lie conformal algebra} is a $\mathbb{Z}/2\mathbb{Z}$-graded $\mathcal{H}_{N=1}$-module $R$ with a $\Lambda$-\textit{bracket} which is a $\mathbb{C}$-bilinear map of degree 1:
\begin{align*}
\begin{split}
[ \ _{\Lambda} \ ] : R \otimes R \rightarrow \mathcal{L}_{N=1} \otimes R, \quad a \otimes b \mapsto [a \ _{\Lambda} \ b],
\end{split}
\end{align*}
satisfying the following conditions:
\\
$\bullet$ (Sesquilinearity)
\begin{align*}
\begin{split}
[D a \ _{\Lambda} \ b] = \chi[a \ _{\Lambda} \ b], \quad [a \ _{\Lambda} \ Db] = (-1)^{a+1}(D + \chi)[a \ _{\Lambda} \ b],
\end{split}
\end{align*}
where $D$ and $\chi$ are subject to the relation $[D, \chi] = 2\lambda$,
\\
$\bullet$ (Skew-symmetry)
\begin{align*}
\begin{split}
[b \ _{\Lambda} \ a] = (-1)^{ab}[a \ _{-\nabla-\Lambda} \ b],
\end{split}
\end{align*}
where $-\nabla-\Lambda = (-\partial-\lambda, -D-\chi)$ with the relations
\begin{align*}
\begin{split}
[D, \chi] = 2\lambda, \quad [\partial, \lambda] = [\partial, \chi] = [D, \lambda] = 0,
\end{split}
\end{align*}
$\bullet$ (Jacobi identity)
\begin{align*}
\begin{split}
[a \ _{\Lambda} \ [b \ _{\Gamma} \ c]] = (-1)^{a+1}[[a \ _{\Lambda} \ b] \ _{\Lambda+\Gamma} \ c] + (-1)^{(a+1)(b+1)}[b \ _{\Gamma} \ [a \ _{\Lambda} \ c]],
\end{split}
\end{align*}
where $\Gamma = (\gamma, \eta)$ and $\Lambda+\Gamma = (\lambda+\gamma, \chi+\eta)$ with relations:
\begin{align*}
\begin{split}
[\eta, \eta] = -2\gamma, \quad [\gamma, \gamma] = [\gamma, \eta] = [\lambda, \gamma] = [\lambda, \eta] = [\chi, \gamma] = [\chi, \eta] = 0.
\end{split}
\end{align*}
We write the $\Lambda$-brackets as:
\begin{align*}
\begin{split}
[a \ _{\Lambda} \ b] = \sum\limits_{j \in \mathbb{Z}_{\geq 0}}\frac{{\lambda}^{j}}{j!}a_{(j|0)}b + \chi\sum\limits_{j \in \mathbb{Z}_{\geq 0}}\frac{{\lambda}^{j}}{j!}a_{(j|1)}b.
\end{split}
\end{align*}
\end{definition}

\begin{definition}\label{definition2.11}{\rm \cite{HK07}}
\rm An $N_{K} = 1$ \textit{supersymmetric vertex algebra} is a tuple $($$V$, $D$, $[$ $_{\Lambda}$ $]$, $\ket{0}$, : :$)$ such that:
\\
$\bullet$ $(V, D, [ \ _{\Lambda} \ ])$ is an $N_{K}=1$ SUSY Lie conformal algebra,
\\
$\bullet$ $(V, D, \ket{0},: \ :)$ is a unital differential superalgebra with an odd derivation $D$, satisfying the following properties:
\\
(Quasi-commutativity)
\begin{align*}
\begin{split}
ab - (-1)^{ab}ba = \int_{-\nabla}^{0} [a \ _{\Lambda} \ b] \ d\Lambda,
\end{split}
\end{align*}
(Quasi-associativity)
\begin{align*}
\begin{split}
(ab)c - a(bc) = \left( \int_{0}^{\nabla} \ d\Lambda a \right)[b \ _{\Lambda} \ c] + (-1)^{ab}\left( \int_{0}^{\nabla} \ d\Lambda b \right)[a \ _{\Lambda} \ c],
\end{split}
\end{align*}
$\bullet$ the $\Lambda$-bracket and the product $: \:$ are related by:
\\
(Non-commutative Wick formula)
\begin{align*}
\begin{split}
[a \ _{\Lambda} \ bc] = [a \ _{\Lambda} \ b]c + (-1)^{(a+1)b}b[a \ _{\Lambda} \ c] + \int_{0}^{\Lambda} [[a \ _{\Lambda} \ b] \ _{\Gamma} \ c] \ d\Gamma,
\end{split}
\end{align*}
where the integral $\int_{0}^{\Lambda} d\Gamma$ is computed as ${\partial}_{\eta}\int_{0}^{\lambda} d\gamma$.
\end{definition}

\begin{example}\label{exmaple2.12}{\rm \cite{MRS21}}
\rm Let $U$ be a finite-dimensional vector superspace and $A = A_{\bar{0}} \cup A_{\bar{1}}$ be a basis of $U$, where $A_{\bar{0}}$ is the even part and $A_{\bar{1}}$ is the odd part of the basis. Define two vector superspaces:
\begin{align*}
\begin{split}
\phi_{U} \simeq U, \quad \phi^{\bar{U}} \simeq \Pi{U}^{*},
\end{split}
\end{align*}
whose basis elements are denoted by $\phi_{a}$ and $\phi^{\bar{a}}$ respectively. Define the $\Lambda$-brackets on $R^{ch}_{N=1} = \mathcal{H}_{N=1} \otimes (\phi_{U} \oplus \phi^{\bar{U}})$ as:
\begin{align*}
\begin{split}
[\phi_{a} \ _{\Lambda} \ \phi^{\bar{b}}] = {\delta}_{ab}, \quad [\phi_{a} \ _{\Lambda} \ \phi_{b}] = [\phi^{\bar{a}} \ _{\Lambda} \ \phi^{\bar{b}}] = 0,
\end{split}
\end{align*}
for $a, b \in A$. Then $R^{ch}_{N=1}$ is an $N_{K} = 1$ supersymmetric Lie conformal algebra, and the \textit{supersymmetric charged free fermion vertex algebra} $F^{ch}_{N=1}$ associated to the vector superspace $U$ is the universal enveloping supersymmetric vertex algebra $V(R^{ch}_{N=1})$. For the description of the universal enveloping supersymmetric vertex algebras, see Definition 2.13 of \cite{MRS21}.
\end{example}

\begin{definition}\label{definition2.13}{\rm \cite{HK07}}
\rm Let $V$ be a vector superspace. Let $z$ be an even indeterminate and ${\theta}^{1}$, ${\theta}^{2}$ be odd indeterminates such that $[{\theta}^{i}, {\theta}^{j}] = 0$ and $[{\theta}^{i}, z] = 0$. An $N = 2$ \textit{superfield} is an End($V$)-valued formal distribution of the form
\begin{align*}
\begin{split}
a(z, {\theta}^{1}, {\theta}^{2}) = \sum\limits_{j \in \mathbb{Z}}{z}^{-j-1}{a}&_{(j|11)} + {\theta}^{1}\sum\limits_{j \in \mathbb{Z}}{z}^{-j-1}{a}_{(j|01)}
\\
&+ {\theta}^{2}\sum\limits_{j \in \mathbb{Z}}{z}^{-j-1}{a}_{(j|10)} + {\theta}^{1}{\theta}^{2}\sum\limits_{j \in \mathbb{Z}}{z}^{-j-1}{a}_{(j|00)},
\end{split}
\end{align*}
where ${a}_{(j|**)}v = 0$ for all but finitely many $j \in \mathbb{Z}$, for each $v \in V$.
\end{definition}

\begin{definition}\label{definition2.14}{\rm \cite{HK07}}
\rm An $N_{K} = 2$ \textit{SUSY vertex algebra} is a tuple $(V, \ket{0}, D^{1}, D^{2}, Y)$ consisting of a vector superspace $V$ over $\mathbb{C}$, an even vector $\ket{0}$, an odd endomorphisms $D^{1}$, $D^{2}$ and the \textit{state-field correspondence} $Y$ which is a parity preserving linear map from $V$ to the space of End($V$)-valued $N = 2$ superfields, such that the following axioms hold:
\\
$\bullet$ (Vacuum)
\begin{align*}
\begin{split}
{D}^{i}\ket{0}=0, \quad Y(a, z, {\theta}^{1}, {\theta}^{2})\ket{0}|_{z=0, {\theta}^{i}=0} = a,
\end{split}
\end{align*}
$\bullet$ (Translation covariance)
\begin{align*}
\begin{split}
[D^{i}, Y(a, z, {\theta}^{1}, {\theta}^{2})] = ({\partial}_{{\theta}^{i}} - {{\theta}^{i}}{\partial}_{z})Y(a, z, {\theta}^{1}, {\theta}^{2}),
\end{split}
\end{align*}
$\bullet$ (Locality) for any $a$, $b$ $\in$ $V$, there exists $n \in \mathbb{Z}_{\geq 0}$ such that
\begin{align*}
\begin{split}
(z-w)^{n}[Y(a, z, {\theta}^{1}, {\theta}^{2}), Y(b, w, {\zeta}^{1}, {\zeta}^{2})] = 0.
\end{split}
\end{align*}
The \textit{normally ordered product} on $V$ is defined by
\begin{align*}
\begin{split}
:ab: \ =  {a}_{(-1|11)}b.
\end{split}
\end{align*}
\end{definition}

\begin{remark}\label{remark2.15}{\rm \cite{HK07}}
\rm Consider the noncommutative associative superalgebra ${\mathcal{H}}_{N=2}$ $=$ $\mathbb{C}[\partial, D^{1}, D^{2}]$ generated by an even generator $\partial$ and odd generators $D^{1}$, $D^{2}$ with relations:
\begin{align*}
\begin{split}
[D^{i}, D^{j}] = 2{\delta}_{ij}\partial, \quad [\partial, D^{i}] = 0, 
\end{split}
\end{align*}
and the noncommutative associative superalgebra 
${\mathcal{L}}_{N=2} = \mathbb{C}[\lambda, {\chi}^{1}, {\chi}^{2}]$ generated by an even generator $\lambda$ and odd generators ${\chi}^{1}$, ${\chi}^{2}$ with relations:
\begin{align*}
\begin{split}
[{\chi}^{i}, {\chi}^{j}] = -2{\delta}_{ij}\lambda, \quad [\lambda, {\chi}^{i}] = 0.
\end{split}
\end{align*}
\end{remark}

\begin{definition}\label{definition2.16}{\rm \cites{EHZ12, HK07}}
\rm An $N_{K} = 2$ \textit{supersymmetric Lie conformal algebra} is a $\mathbb{Z}/2\mathbb{Z}$-graded ${\mathcal{H}}_{N=2}$-module $R$ with a $\Lambda$-\textit{bracket} which is a parity preserving $\mathbb{C}$-bilinear map:
\begin{align*}
\begin{split}
[ \ _{\Lambda} \ ] : R \otimes R \rightarrow {\mathcal{L}}_{N=2} \otimes R, \quad a \otimes b \mapsto [a \ _{\Lambda} \ b],
\end{split}
\end{align*}
satisfying the following conditions:
\\
$\bullet$ (Sesquilinearity)
\begin{align*}
\begin{split}
[D^{i} a \ _{\Lambda} \ b] = -{\chi}^{i}[a \ _{\Lambda} \ b], \quad [a \ _{\Lambda} \ D^{i}b] = (-1)^{a}(D^{i} + {\chi}^{i})[a \ _{\Lambda} \ b],
\end{split}
\end{align*}
where $D^{i}$ and ${\chi}^{i}$ are subject to the relation $[D^{i}, {\chi}^{j}] = 2{\delta}_{ij}\lambda$,
\\
$\bullet$ (Skew-symmetry)
\begin{align*}
\begin{split}
[b \ _{\Lambda} \ a] = (-1)^{ab+1}[a \ _{-\nabla-\Lambda} \ b],
\end{split}
\end{align*}
where $-\nabla-\Lambda = (- \partial - \lambda, - D^{1} - {\chi}^{1}, - D^{2} - {\chi}^{2})$ with the relations
\begin{align*}
\begin{split}
[D^{i}, {\chi}^{j}] = 2{\delta}_{ij}\lambda, \quad [\partial, \lambda] = [\partial, {\chi}^{i}] = [D^{i}, \lambda] = 0,
\end{split}
\end{align*}
$\bullet$ (Jacobi identity)
\begin{align*}
\begin{split}
[a \ _{\Lambda} \ [b \ _{\Gamma} \ c]] = [[a \ _{\Lambda} \ b] \ _{\Lambda+\Gamma} \ c] + (-1)^{ab}[b \ _{\Gamma} \ [a \ _{\Lambda} \ c]],
\end{split}
\end{align*}
where $\Gamma = (\gamma, {\eta}^{1}, {\eta}^{2})$ and $\Lambda+\Gamma = (\lambda+\gamma, {\chi}^{1}+{\eta}^{1}, {\chi}^{2}+{\eta}^{2})$ with relations:
\begin{align*}
\begin{split}
[{\eta}^{i}, {\eta}^{j}] = -2{\delta}_{ij}\gamma, \quad [\gamma, \gamma] = [\gamma, {\eta}^{i}] = [\lambda, \gamma] = [\lambda, {\eta}^{i}] = [{\chi}^{i}, \gamma] = [{\chi}^{i}, {\eta}^{j}] = 0.
\end{split}
\end{align*}
We write the $\Lambda$-brackets as:
\begin{align*}
\begin{split}
[a \ _{\Lambda} \ b]
 = \sum\limits_{j \geq 0}\frac{\lambda^{j}}{j!}{a}_{(j|00)}b
 &- {\chi}^{1}\sum\limits_{j \geq 0}\frac{\lambda^{j}}{j!}{a}_{(j|10)}b
\\
 &- {\chi}^{2}\sum\limits_{j \geq 0}\frac{\lambda^{j}}{j!}{a}_{(j|01)}b
 - {\chi}^{1}{\chi}^{2}\sum\limits_{j \geq 0}\frac{\lambda^{j}}{j!}{a}_{(j|11)}b.
\end{split}
\end{align*}
\end{definition}

\begin{definition}\label{definition2.17}{\rm \cite{HK07}}
\rm An $N_{K} = 2$ \textit{supersymmetric vertex algebra} is a tuple $($$V$, $D^{1}$, $D^{2}$, $[$ $_{\Lambda}$ $]$, $\ket{0}$, : :$)$ such that:
\\
$\bullet$ $(V, D^{1}, D^{2}, [ \ _{\Lambda} \ ])$ is an $N_{K}=2$ SUSY Lie conformal algebra,
\\
$\bullet$ $(V, D^{1}, D^{2}, \ket{0},: \ :)$ is a unital differential superalgebra with odd derivations $D^{1}$ and $D^{2}$ satisfying the following properties:
\\
(Quasi-commutativity)
\begin{align*}
\begin{split}
ab - (-1)^{ab}ba = \int_{-\nabla}^{0} [a \ _{\Lambda} \ b] \ d\Lambda,
\end{split}
\end{align*}
(Quasi-associativity)
\begin{align*}
\begin{split}
(ab)c - a(bc) = \left( \int_{0}^{\nabla} \ d\Lambda a \right)[b \ _{\Lambda} \ c] + (-1)^{ab}\left( \int_{0}^{\nabla} \ d\Lambda b \right)[a \ _{\Lambda} \ c],
\end{split}
\end{align*}
$\bullet$ the $\Lambda$-bracket and the product $: \:$ are related by:
\\
(Non-commutative Wick formula)
\begin{align*}
\begin{split}
[a \ _{\Lambda} \ bc] = [a \ _{\Lambda} \ b]c + (-1)^{ab}b[a \ _{\Lambda} \ c] + \int_{0}^{\Lambda} [[a \ _{\Lambda} \ b] \ _{\Gamma} \ c] \ d\Gamma,
\end{split}
\end{align*}
where the integral $\int_{0}^{\Lambda} d\Gamma$ is ${\partial}_{{\eta}^{1}}{\partial}_{{\eta}^{2}}\int_{0}^{\lambda} d\gamma$.
\end{definition}

\section{Superconformal vertex algebras}\label{section3}
\setcounter{equation}{0}

In this section, we review the definitions of $N=1$ and $N=2$ superconformal vertex algebras in the context of the vertex algebras and the SUSY vertex algebras. Details of the definitions can be found in \cite{HK07} and \cite{Kac98}.

\begin{definition}\label{definition3.1}{\rm \cite{Kac98}}
\rm Let $V$ be a vertex algebra. An even vector $L \in V$ is called a \textit{conformal vector} if it satisfies the following conditions:
\\
$\bullet$ $Y(L, z)$ is a Virasoro field with central charge $c$, i.e.
\begin{align*}
\begin{split}
[L \ _\lambda \ L] = (\partial + 2\lambda)L + \frac{c}{12}\lambda^{3},
\end{split}
\end{align*}
$\bullet$ $L_{(0)} = \partial$,
\\
$\bullet$ $L_{(1)}$ is diagonalizable on $V$.
\\
For $v \in V$, we say that $v$ has \textit{conformal weight} $\Delta \in \mathbb{C}$ if:
\begin{align}
\begin{split}
[L \ _\lambda \ v] = (\partial + \Delta\lambda)v + O({\lambda}^{2}),
\end{split}
\end{align}
where $O({\lambda}^{2})$ is a polynomial in $\lambda$ which has no constant and linear terms. Moreover, if there is no $O({\lambda}^{2})$ term in (3.1), the vector $v$ is called \textit{primary}.
\end{definition}

\begin{definition}\label{definition3.2}{\rm \cites{Hel09, Kac98}}
\rm The \textit{$N=1$ superconformal vertex algebra} is generated by a conformal vector $L$, and an odd vector $G$ which is called a \textit{superconformal vector}, satisfying the following super-Virasoro relation:
\begin{align*}
\begin{split}
[L \ _\lambda \ L&] = (\partial + 2\lambda)L + \frac{c}{12}{\lambda}^{3},
\\
[L \ _\lambda& \ G] = (\partial + \frac{3}{2}\lambda)G,
\\
[G \ &_\lambda \ G] = 2L + \frac{c}{3}{\lambda}^{2}.
\end{split}
\end{align*}
\end{definition}

\begin{definition}\label{definition3.3}{\rm \cites{Hel09, Kac98}}
\rm The \textit{$N=2$ superconformal vertex algebra} is generated by a conformal vector $L$, an even vector $J$, and two odd vectors $G^{+}$, $G^{-}$, satisfying the following $\lambda$-bracket relations:
\begin{align*}
\begin{split}
[&L \ _\lambda \ L] = (\partial + 2\lambda)L + \frac{c}{12}{\lambda}^{3}, \quad [L \ _\lambda \ G^{\pm}] = (\partial + \frac{3}{2}\lambda){G}^{\pm},
\\
&[G^{\pm} \ _\lambda \ G^{\pm}] = 0, \quad [G^{+} \ _\lambda \ G^{-}] = L + (\frac{1}{2}\partial + \lambda)J + \frac{c}{6}{\lambda}^{2},
\\
[&L \ _\lambda \ J] = (\partial + \lambda)J, \quad [G^{\pm} \ _\lambda \ J] = {\mp}{G}^{\pm}, \quad [J \ _\lambda \ J] = \frac{c}{3}\lambda.
\end{split}
\end{align*}
\end{definition}

\begin{definition}\label{definition3.4}{\rm \cite{HK07}}
\rm Let $V$ be an $N_{K}=1$ SUSY vertex algebra. A vector $T \in V$ is called an \textit{$N=1$ superconformal vector} if it satisfies the following conditions:
\\
$\bullet$ $Y(T, z, \theta)$ is a super-Virasoro field with central charge $c$, i.e.
\begin{align*}
\begin{split}
[T \ _\Lambda \ T] = (2\partial + 3\lambda + \chi D)T + \frac{c}{3}{\lambda}^{2}{\chi},
\end{split}
\end{align*}
$\bullet$ $T_{(0|0)} = 2\partial$, $T_{(0|1)} = D$,
\\
$\bullet$ $T_{(1|0)}$ is diagonalizable on $V$.
\\
For $v \in V$, we say that $v$ has \textit{conformal weight} $\Delta \in \mathbb{C}$ if:
\begin{align}
\begin{split}
[T \ _\Lambda \ v] = (2\partial + 2\Delta\lambda + \chi D)v + O({\Lambda}^{2}),
\end{split}
\end{align}
where $O({\Lambda}^{2})$ is a polynomial in $\Lambda$ which has no constant and linear terms. Moreover, if there is no $O({\Lambda}^{2})$ term in (3.2), the vector $v$ is called \textit{primary}.
\end{definition}

\begin{remark}\label{remark3.5}{\rm \cites{Hel09, HK07}}
\rm As an $N_{K}=1$ SUSY vertex algebra, the $N=2$ superconformal vertex algebra is generated by an $N=1$ superconformal vector $T$ and an even vector $J$ satisfying the following $N_{K}=1$ $\Lambda$-bracket relations:
\begin{align*}
\begin{split}
[T \ _\Lambda \ T] &= (2\partial + 3\lambda + \chi D)T + \frac{c}{3}{\lambda}^{2}{\chi},
\\
[T \ _\Lambda& \ J] = (2\partial + 2\lambda + \chi D)J,
\\
&[J \ _\Lambda \ J] = T + \frac{c}{3}{\lambda}{\chi}.
\end{split}
\end{align*}
\end{remark}

\begin{definition}\label{definition3.6}{\rm \cite{HK07}}
\rm Let $V$ be an $N_{K}=2$ SUSY vertex algebra. A vector $P \in V$ is called an \textit{$N=2$ superconformal vector} if it satisfies the following conditions:
\\
$\bullet$ $Y(P, z, {\theta}^{1}, {\theta}^{2})$ is an $N=2$ super-Virasoro field with central charge $c$, i.e.
\begin{align*}
\begin{split}
[P \ _\Lambda \ P] = (2\partial + 2\lambda + {\chi}^{1}{D}^{1} + {\chi}^{2}{D}^{2})P + \frac{c}{3}{\lambda}{\chi}^{1}{\chi}^{2},
\end{split}
\end{align*}
$\bullet$ $P_{(0|00)} = 2\partial$, $P_{(0|10)} = - {D}^{1}$, $P_{(0|01)} = {D}^{2}$,
\\
$\bullet$ $P_{(1|00)}$ is diagonalizable on $V$.
\\
For $v \in V$, we say that $v$ has \textit{conformal weight} $\Delta \in \mathbb{C}$ if:
\begin{align}
\begin{split}
[P \ _\Lambda \ v] = (2\partial + 2\Delta\lambda + {\chi}^{1}{D}^{1} + {\chi}^{2}{D}^{2})v + O({\Lambda}^{2}),
\end{split}
\end{align}
where $O({\Lambda}^{2})$ is a polynomial in $\Lambda$ which has no constant and linear terms. Moreover, if there is no $O({\Lambda}^{2})$ term in (3.3), the vector $v$ is called \textit{primary}.
\end{definition}

\section{Superconformal structures of SUSY charged free fermion vertex algebras}\label{section4}
\setcounter{equation}{0}

For the construction of superconformal vectors of the SUSY $W$-algebras in Remark \ref{remark4.10}, a particular deformation of the standard superconformal vector $T^{ch}_{st}$ of the supersymmetric charged free fermion vertex algebra $F^{ch}_{N=1}$ is needed.
In this section, we prove our main result Theorem \ref{theorem4.6} and find the corresponding $N=2$ superconformal symmetry of $F^{ch}_{N=1}$ in Theorem \ref{theorem4.14}. We also construct an $N_{K}=2$ version of the $bc$-$\beta\gamma$ system, so that Theorem \ref{theorem4.6} and Theorem \ref{theorem4.14} can be unified by the $N_{K}=2$ SUSY vertex algebra formalism in Theorem \ref{theorem4.21}.

\subsection{Standard superconformal vectors of SUSY charged free fermion vertex algebras}\label{section4.1}

First, we recall the standard superconformal vectors of the SUSY charged free fermion vertex algebras and the connection with the superconformal vectors of a non-SUSY $bc$-$\beta\gamma$ system. For the SUSY charged free fermion vertex algebra $F^{ch}_{N=1}$ associated to the vector superspace $U$, recall that $A = A_{\bar{0}} \cup A_{\bar{1}}$ refers to the basis for $U$.

\begin{example}\label{example4.1}
\rm In \textup{\cite{KRW03}}, the \textit{standard conformal vector} $L^{ch}_{st}$, for the non-SUSY charged free fermion vertex algebra associated to $\mathfrak{osp}(1|2)$, is defined as
\begin{align*}
\begin{split}
L^{ch}_{st} = -\frac{1}{2}\varphi^{1/2}\partial\varphi_{1/2} + \frac{1}{2}\partial\varphi^{1/2}\varphi_{1/2} + \partial\varphi_{1}\varphi^{1},
\end{split}
\end{align*}
and we can easily find that the corresponding \textit{standard superconformal vector} of $F^{ch}(\mathfrak{osp}(1|2))$ is
\begin{align*}
\begin{split}
G^{ch}_{st} = \varphi^{1/2}\varphi^{1} + \partial\varphi_{1}\varphi_{1/2},
\end{split}
\end{align*}
i.e. the above two vectors satisfy the following $\lambda$-brackets:
\begin{align*}
\begin{split}
[L^{ch}_{st} \ _\lambda \ L^{ch}_{st}] = (\partial + 2\lambda)L^{ch}_{st} + \frac{c}{12}\lambda^{3},
\end{split}
\end{align*}
\begin{align*}
\begin{split}
[L^{ch}_{st} \ _\lambda \ G^{ch}_{st}] = \Big(\partial + \frac{3}{2}\lambda\Big)G^{ch}_{st},
\end{split}
\end{align*}
\begin{align*}
\begin{split}
[G^{ch}_{st} \ _\lambda \ G^{ch}_{st}] = 2L^{ch}_{st} + \frac{c}{3}\lambda^{2},
\end{split}
\end{align*}
with central charge $c=-3$.
If we rename the fields $\varphi_{1/2}$, $\varphi^{1/2}$, $\varphi^{1}$, $\varphi_{1}$, as $b$, $c$, $\beta$, $\gamma$, respectively with reversing the parity, then the vertex algebra $F^{ch}(\mathfrak{osp}(1|2))$ with $L^{ch}_{st}$ and $G^{ch}_{st}$ is nothing but the well-known $bc$-$\beta\gamma$ system with the conformal vector and the superconformal vector:
\begin{align*}
\begin{split}
L^{ch}_{st} = \frac{1}{2}(-c\partial b + \partial cb + 2\partial\gamma\beta), \quad G^{ch}_{st} = c\beta + \partial\gamma b,
\end{split}
\end{align*}
with central charge 3. More detailed explanations of $bc$-$\beta\gamma$ system are in \textup{\cite{EHKZ09}}. For details of the physical statements, the reader is referred to \textup{\cite{Pol98}}.
\end{example}

\begin{definition}\label{definition4.2}
\rm For an $N=1$ superconformal structure of the SUSY charged free fermion vertex algebra $F^{ch}_{N=1}$, we slightly modify the superconformal vector of the SUSY $bc$-$\beta\gamma$ system in Example 5.12 of \cite{HK07}.
Let $T^{ch}_{st}$ be a \textit{standard superconformal vector} of $F^{ch}_{N=1}$ defined by
\begin{align*}
\begin{split}
T^{ch}_{st} = \sum\limits_{a \in A_{\bar{0}}}(\partial\phi_{a}\phi^{\bar{a}} + D\phi_{a}D\phi^{\bar{a}}) + \sum\limits_{a \in A_{\bar{1}}}(\phi_{a}\partial\phi^{\bar{a}} + D\phi_{a}D\phi^{\bar{a}}).
\end{split}
\end{align*}
Then the field $T^{ch}_{st}$ satisfies the $N=1$ superconformal relation
\begin{align*}
\begin{split}
[T^{ch}_{st} \ _{\Lambda} \ T^{ch}_{st}] = (2\partial + 3\lambda + \chi D)T^{ch}_{st} + \frac{c_{st}}{3}\lambda^{2}\chi,
\end{split}
\end{align*}
where the central charge $c_{st}$ is $3 \mathrm{dim}_{\mathbb{C}}U$. The verification of the above $\Lambda$-bracket follows from Theorem \ref{theorem4.6}.
\end{definition}

\begin{remark}\label{remark4.3}{\rm \cites{BZHS08, EHKZ09, EHKZ13, Hel09}}
\rm Let $F^{ch}_{N=1}$ be the SUSY charged free fermion vertex algebra associated to the vector superspace $U = U_{\bar{0}} \oplus U_{\bar{1}}$. For a basis element $a$ in $A_{\bar{0}}$, if we write each superfields $\phi_{a}$, $\phi^{\bar{a}}$ as follows:
\begin{align}
\begin{split}
Y(\phi_{a}, z, \theta) = Y(\gamma,z) + \theta Y(c,z), \quad Y(\phi^{\bar{a}}, z, \theta) = Y(b,z) + \theta Y(\beta,z),
\end{split}
\end{align}
then we can see that the $A_{\bar{0}}$-part of $T^{ch}_{st}$ is the $\mathrm{dim}_{\mathbb{C}}U_{\bar{0}}$ copies of the standard superconformal vector of non-SUSY $bc$-$\beta\gamma$ system, i.e.
\begin{align*}
\begin{split}
Y(\sum\limits_{a \in A_{\bar{0}}}(\partial\phi_{a}\phi^{\bar{a}} + D\phi_{a}D\phi^{\bar{a}}), z, \theta) = Y(\sum\limits_{a \in A_{\bar{0}}}(G^{ch}_{st}),z) + 2\theta Y(\sum\limits_{a \in A_{\bar{0}}}(L^{ch}_{st}),z).
\end{split}
\end{align*}
Similarly, if we expand the superfields of $\phi_{a}$, $\phi^{\bar{a}}$ as
\begin{align*}
\begin{split}
Y(\phi^{\bar{a}}, z, \theta) = Y(\gamma,z) + \theta Y(c,z), \quad Y(\phi_{a}, z, \theta) = Y(b,z) + \theta Y(\beta,z),
\end{split}
\end{align*}
for an odd element $a$ in $A$, then $T^{ch}_{st}$ is the sum of $\mathrm{dim}_{\mathbb{C}}U$ copies of $G^{ch}_{st}$ and $L^{ch}_{st}$, i.e.
\begin{align*}
\begin{split}
T^{ch}_{st} = \sum\limits_{a \in A}G^{ch}_{st} + 2\theta (\sum\limits_{a \in A}L^{ch}_{st}).
\end{split}
\end{align*}
In Remark \ref{remark4.11}, we also get a similar result for our shifted superconformal vectors.
\end{remark}

\begin{proof}
For any even element $a$, the superfield expansion of $Y(D\phi_{a}, z, \theta)$ is obtained as
\begin{align}
\begin{split}
Y(D\phi_{a}, z, \theta) & = (\partial_{\theta} + \theta \partial_{z})Y(\phi_{a}, z, \theta)
\\& = (\partial_{\theta} + \theta \partial_{z})(Y(\gamma,z) + \theta Y(c,z))
\\& = Y(c,z) + \theta Y(\partial\gamma,z)
\end{split}
\end{align}
by using the rules in Theorem 4.16 of \cite{HK07}.
Similarly, we have
\begin{align}
\begin{split}
Y(D\phi^{\bar{a}}, z, \theta) = Y(\beta,z) + \theta Y(\partial b,z),
\end{split}
\end{align}
so that
\begin{align}
\begin{split}
Y(D\phi_{a}D\phi^{\bar{a}}, z, \theta) & = (Y(c,z) + \theta Y(\partial\gamma,z))(Y(\beta,z) + \theta Y(\partial b,z))
\\& = Y(c\beta,z) + \theta Y(\partial\gamma\beta - c\partial b,z).
\end{split}
\end{align}
The negative sign of $c\partial b$ follows from the anti-commutativity of the odd field $Y(c,z)$ and the odd indeterminate $\theta$.
The expansion of $Y(\partial\phi_{a}\phi^{\bar{a}}, z, \theta)$ is also given by
\begin{align}
\begin{split}
Y(\partial\phi_{a}\phi^{\bar{a}}, z, \theta) & = Y(\partial\phi_{a}, z, \theta)Y(\phi^{\bar{a}}, z, \theta)
\\& = \partial_{z}Y(\phi_{a}, z, \theta)Y(\phi^{\bar{a}}, z, \theta)
\\& = (Y(\partial\gamma,z) + \theta Y(\partial c,z))(Y(b,z) + \theta Y(\beta,z))
\\& = Y(\partial\gamma b,z) + \theta Y(\partial cb + \partial\gamma\beta,z),
\end{split}
\end{align}
hence we have
\begin{align}
\begin{split}
Y(\partial\phi_{a}\phi^{\bar{a}} + D\phi_{a}D\phi^{\bar{a}}, z, \theta) & = Y(c\beta + \partial\gamma b,z) + 2\theta Y\Big(\frac{1}{2}(-c\partial b + \partial cb + 2\partial\gamma\beta),z\Big)
\\& = Y(G^{ch}_{st},z) + 2\theta Y(L^{ch}_{st},z).
\end{split}
\end{align}
For an odd element $a$, we also get the following superfield expansion
\begin{align}
\begin{split}
Y(\phi_{a}\partial\phi^{\bar{a}}, z, \theta) & = Y(\phi_{a}, z, \theta)Y(\partial\phi^{\bar{a}}, z, \theta)
\\& = Y(\phi_{a}, z, \theta)\partial_{z}Y(\phi^{\bar{a}}, z, \theta)
\\& = (Y(b,z) + \theta Y(\beta,z))(Y(\partial\gamma,z) + \theta Y(\partial c,z))
\\& = Y(b\partial\gamma,z) + \theta Y(\beta\partial\gamma - b\partial c,z)
\\& = Y(\partial\gamma b,z) + \theta Y(\partial cb + \partial\gamma\beta,z),
\\ Y(D\phi_{a}D\phi^{\bar{a}}, z, \theta) & = (Y(\beta,z) + \theta Y(\partial b,z))(Y(c,z) + \theta Y(\partial\gamma,z))
\\& = Y(\beta c,z) + \theta Y(\partial bc + \beta\partial\gamma,z)
\\& = Y(c\beta,z) + \theta Y(\partial\gamma\beta - c\partial b,z),
\end{split}
\end{align}
hence the field $Y(\phi_{a}\partial\phi^{\bar{a}} + D\phi_{a}D\phi^{\bar{a}}, z, \theta)$ of the $A_{\bar{1}}$-part has the same superfield expansion as in (4.6).
Thus we obtain the result
\begin{align*}
\begin{split}
Y(T^{ch}_{st}, z, \theta) & = Y(\sum\limits_{a \in A}G^{ch}_{st},z) + 2\theta Y(\sum\limits_{a \in A}L^{ch}_{st},z).
\end{split}
\end{align*}
\end{proof}

\subsection{Shifted superconformal vectors of SUSY charged free fermion vertex algebras}\label{section4.2}

In Theorem \ref{theorem4.6}, we define the shifted superconformal vector $T^{ch}_{sh}$, which is a deformation of the standard superconformal vector $T^{ch}_{st}$, by finding the proper ghost term using Ansatz \ref{ansatz4.5} that the requiring shifted superconformal vector is a linear combination of three particular quadratic monomial terms, in order to obtain varying conformal weights of each fields $\phi_{a}$ and $\phi^{\bar{a}}$.

\begin{lemma}\label{lemma4.4} Let $F^{ch}_{N=1}$ be the SUSY charged free fermion vertex algebra. For any even element $a$ in $A$, we have the following $\Lambda$-brackets\textup{:}
\begin{align*}
\begin{split}
[\partial\phi_{a}\phi^{\bar{a}} \ _\Lambda \ \partial\phi_{a}\phi^{\bar{a}}]
& = \partial(\partial\phi_{a}\phi^{\bar{a}}) + 2\lambda(\partial\phi_{a}\phi^{\bar{a}}),
\end{split}
\end{align*}
\begin{align*}
\begin{split}
[\partial\phi_{a}\phi^{\bar{a}} \ _\Lambda \ \phi_{a}\partial\phi^{\bar{a}}]
& = \partial(\phi_{a}\partial\phi^{\bar{a}}) + 2\lambda(\phi_{a}\partial\phi^{\bar{a}}) + \lambda^{2}(\phi_{a}\phi^{\bar{a}}),
\end{split}
\end{align*}
\begin{align}
\begin{split}
[\partial\phi_{a}\phi^{\bar{a}} \ _\Lambda \ D\phi_{a}D\phi^{\bar{a}}] 
& = \partial(D\phi_{a}D\phi^{\bar{a}}) + \lambda(D\phi_{a}D\phi^{\bar{a}}) + \chi D(D\phi_{a}D\phi^{\bar{a}})
\\
& \quad + \frac{1}{2}\lambda^{2}\chi - \lambda\chi(D\phi_{a}\phi^{\bar{a}}),
\end{split}
\end{align}
\begin{align*}
\begin{split}
[\phi_{a}\partial\phi^{\bar{a}} \ _\Lambda \ \partial\phi_{a}\phi^{\bar{a}}]
& = - \partial(\partial\phi_{a}\phi^{\bar{a}}) - 2\lambda(\partial\phi_{a}\phi^{\bar{a}}) - \lambda^{2}(\phi_{a}\phi^{\bar{a}}),
\end{split}
\end{align*}
\begin{align*}
\begin{split}
[\phi_{a}\partial\phi^{\bar{a}} \ _\Lambda \ \phi_{a}\partial\phi^{\bar{a}}]
& = - \partial(\phi_{a}\partial\phi^{\bar{a}}) - 2\lambda(\phi_{a}\partial\phi^{\bar{a}}),
\end{split}
\end{align*}
\begin{align*}
\begin{split}
[\phi_{a}\partial\phi^{\bar{a}} \ _\Lambda \ D\phi_{a}D\phi^{\bar{a}}]
& = - \partial(D\phi_{a}D\phi^{\bar{a}}) - \lambda(D\phi_{a}D\phi^{\bar{a}}) - \chi D(D\phi_{a}D\phi^{\bar{a}})
\\
& \quad + \frac{1}{2}\lambda^{2}\chi - \lambda\chi(\phi_{a}D\phi^{\bar{a}}),
\end{split}
\end{align*}
\begin{align*}
\begin{split}
[D\phi_{a}D\phi^{\bar{a}} \ _\Lambda \ \partial\phi_{a}\phi^{\bar{a}}]
& = \partial(\partial\phi_{a}\phi^{\bar{a}}) + \lambda(\partial\phi_{a}\phi^{\bar{a}}) + \chi D(\partial\phi_{a}\phi^{\bar{a}})
\\
& \quad + \frac{1}{2}\lambda^{2}\chi + \lambda\chi(D\phi_{a}\phi^{\bar{a}}),
\end{split}
\end{align*}
\begin{align*}
\begin{split}
[D\phi_{a}D\phi^{\bar{a}} \ _\Lambda \ \phi_{a}\partial\phi^{\bar{a}}]
& = \partial(\phi_{a}\partial\phi^{\bar{a}}) + \lambda(\phi_{a}\partial\phi^{\bar{a}}) + \chi D(\phi_{a}\partial\phi^{\bar{a}})
\\
& \quad + \frac{1}{2}\lambda^{2}\chi + \lambda\chi(\phi_{a}D\phi^{\bar{a}}),
\end{split}
\end{align*}
\begin{align}
\begin{split}
[D\phi_{a}D\phi^{\bar{a}} \ _\Lambda \ D\phi_{a}D\phi^{\bar{a}}]
& = \partial(D\phi_{a}D\phi^{\bar{a}}) + 2\lambda(D\phi_{a}D\phi^{\bar{a}}).
\end{split}
\end{align}
\end{lemma}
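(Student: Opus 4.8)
The plan is to reduce every bracket on the list to the single nonvanishing elementary bracket $[\phi_a \ _\Lambda \ \phi^{\bar a}] = 1$ of Example 2.12, together with its skew-symmetric partner $[\phi^{\bar a} \ _\Lambda \ \phi_a] = 1$, and then to reassemble the outputs into the stated normally ordered forms. Since $a$ is even, $\phi_a$ is even and $\phi^{\bar a}$ is odd, and this parity assignment fixes all of the signs that follow. First I would tabulate the elementary brackets with derivative insertions: by sesquilinearity (Definition 2.10) together with the relations $\chi^2 = -\lambda$ and $[D,\chi]=2\lambda$ from Remark 2.9, one finds $[D\phi_a \ _\Lambda \ \phi^{\bar a}] = \chi$, hence $[\partial\phi_a \ _\Lambda \ \phi^{\bar a}] = \chi^2 = -\lambda$, and likewise the brackets carrying $D$ or $\partial$ on $\phi^{\bar a}$ as well as $[D\phi_a \ _\Lambda \ D\phi^{\bar a}]$. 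Each of these is an element of $\mathcal{L}_{N=1}$ times $\ket{0}$.

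Next, for each bracket $[M_1 \ _\Lambda \ M_2]$ with $M_1,M_2$ the quadratic monomials, I would apply the non-commutative Wick formula (Definition 2.11) to the right entry $M_2 = :XY:$, expressing $[M_1 \ _\Lambda \ :XY:]$ in terms of $[M_1 \ _\Lambda \ X]$, $[M_1 \ _\Lambda \ Y]$ and the integral correction $\int_0^\Lambda [[M_1 \ _\Lambda \ X] \ _\Gamma \ Y]\,d\Gamma$. Each bracket $[M_1 \ _\Lambda \ X]$ of the product $M_1$ against a single generator is then handled by skew-symmetry (Definition 2.10): rewriting it as $(-1)^{\cdots}[X \ _{-\nabla-\Lambda} \ M_1]$ and applying the Wick formula once more to split $M_1$ reduces everything to the elementary brackets recorded above. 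It is precisely the integral terms that produce the central pieces $\tfrac12\lambda^2\chi$ and the mixed corrections such as $-\lambda\chi(D\phi_a\phi^{\bar a})$ in (A.1).

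To reduce the labor I would exploit skew-symmetry at the level of the identities themselves. Both monomials being odd, the relation $[M_2 \ _\Lambda \ M_1] = (-1)^{M_1 M_2}[M_1 \ _{-\nabla-\Lambda} \ M_2]$ with the substitution $(\lambda,\chi)\mapsto(-\partial-\lambda,-D-\chi)$ pairs the fourth bracket with the second, the seventh with (A.1), and the eighth with the sixth, so these three follow for free. This leaves six genuinely independent computations, namely the three diagonal brackets and the brackets $[\partial\phi_a\phi^{\bar a} \ _\Lambda \ \phi_a\partial\phi^{\bar a}]$, (A.1), and $[\phi_a\partial\phi^{\bar a} \ _\Lambda \ D\phi_a D\phi^{\bar a}]$. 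Of these, the ones built only from $\partial$-derivatives yield purely $\lambda$-dependent answers and are routine, whereas the three carrying a $D\phi_a D\phi^{\bar a}$ slot contain all of the $\chi$- and central-term subtleties. At the end I would rewrite the accumulated normally ordered products into the forms $\partial(\,\cdot\,)$ and $D(\,\cdot\,)$ appearing on the right, using quasi-associativity and quasi-commutativity (Definition 2.11) to absorb the reordering corrections.

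The main obstacle is bookkeeping rather than conceptual: the odd parities of $D$ and $\chi$, the rule $[D,\chi]=2\lambda$, and the parity-dependent signs in both the Wick formula and sesquilinearity all act at once, and the delicate mixed terms $\pm\lambda\chi(D\phi_a\phi^{\bar a})$, $\pm\lambda\chi(\phi_a D\phi^{\bar a})$ and the scalar terms $\tfrac12\lambda^2\chi$ surface exactly inside the integral corrections and the quasi-associativity rearrangements, where a single misplaced sign propagates. I would therefore carry out (A.1) and the diagonal bracket $[D\phi_a D\phi^{\bar a} \ _\Lambda \ D\phi_a D\phi^{\bar a}]$ in (A.2) with particular care, and recover the skew-symmetric partners and the $\partial$-only brackets from them.
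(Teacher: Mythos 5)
Your proposal is correct and takes essentially the same route as the paper: the paper likewise reduces everything to the elementary bracket $[\phi_a \ _\Lambda \ \phi^{\bar a}]=1$ via sesquilinearity and skew-symmetry (the monomial-versus-generator brackets are exactly Lemmas A.1--A.3 of Appendix A), then expands each monomial-versus-monomial bracket with the non-commutative Wick formula applied to the right entry, computing the representative case $[\partial\phi_a\phi^{\bar a} \ _\Lambda \ D\phi_a D\phi^{\bar a}]$ explicitly and invoking ``similarly'' for the remaining eight. Your use of skew-symmetry at the level of the full identities to obtain three brackets from their partners is a harmless labor-saving refinement the paper does not spell out; the only (cosmetic) slip in your outline is the attribution of the mixed corrections such as $-\lambda\chi(D\phi_a\phi^{\bar a})$ to the integral term --- in the paper's computation these arise from the ordinary Wick terms when the odd $\chi$ is moved past $D\phi_a$, while the integral contributes only the scalar piece $\tfrac12\lambda^2\chi$ --- which does not affect the validity of the plan.
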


\begin{proof}
For the $\Lambda$-bracket in (4.8), we have:
\begin{align*}
\begin{split}
[\partial\phi_{a}\phi^{\bar{a}} \ _\Lambda \ D\phi_{a}D\phi^{\bar{a}}] 
&= [\partial\phi_{a}\phi^{\bar{a}} \ _\Lambda \ D\phi_{a}]D\phi^{\bar{a}} 
+ D\phi_{a}[\partial\phi_{a}\phi^{\bar{a}} \ _\Lambda \ D\phi^{\bar{a}}] 
\\
& \quad + \int_{0}^{\Lambda} [[\partial\phi_{a}\phi^{\bar{a}} \ _\Lambda \ D\phi_{a}] \ _{\Gamma} \ D\phi^{\bar{a}}] \ d\Gamma
\\
&= ((D + \chi){\partial}{\phi_{a}})D{\phi^{\bar{a}}} 
+ D{\phi_{a}}((D + \chi)({\partial} + {\lambda}){\phi^{\bar{a}}}) 
\\
& \quad + \int_{0}^{\Lambda} [(D + \chi){\partial}{\phi_{a}} \ _{\Gamma} \ D\phi^{\bar{a}}] \ d\Gamma
\\
&= {\partial}D{\phi_{a}}D{\phi^{\bar{a}}} 
+ {\chi}{\partial}{\phi_{a}}D{\phi^{\bar{a}}} 
+ D{\phi_{a}}{\partial}D{\phi^{\bar{a}}} 
+ {\lambda}D{\phi_{a}}D{\phi^{\bar{a}}}
\\
& \quad - {\chi}D{\phi_{a}}{\partial}{\phi^{\bar{a}}} 
- {\lambda}{\chi}D{\phi_{a}}{\phi^{\bar{a}}} 
+ \int_{0}^{\Lambda} [(D + \chi){\partial}{\phi_{a}} \ _{\Gamma} \ D\phi^{\bar{a}}] \ d\Gamma
\\
&= \partial(D\phi_{a}D\phi^{\bar{a}}) + \lambda(D\phi_{a}D\phi^{\bar{a}})
\\
& \quad + \chi D(D\phi_{a}D\phi^{\bar{a}}) - \lambda\chi(D\phi_{a}\phi^{\bar{a}}) + \frac{1}{2}\lambda^{2}\chi,
\end{split}
\end{align*}
from (A.3) and (A.4) in Appendix \ref{sectionA}, and the non-commutative Wick formula of $N_{K}=1$ SUSY LCAs. Indeed, the integral term is computed as follows:
\begin{align*}
\begin{split}
\int_{0}^{\Lambda} [(D + \chi){\partial}{\phi_{a}} \ _{\Gamma} \ D\phi^{\bar{a}}] \ d\Gamma 
&= \int_{0}^{\Lambda} (\eta - \chi)(-\gamma)(-1)(D + \eta) [{\phi_{a}} \ _{\Gamma} \ {\phi^{\bar{a}}}] \ d\Gamma
\\
&= \int_{0}^{\Lambda} (\eta - \chi){\gamma}{\eta} \ d\Gamma
\\
&= \int_{0}^{\Lambda} -{\gamma}^{2} - {\chi}{\gamma}{\eta} \ d\Gamma
\\
&= {\partial}_{\eta}\left( \int_{0}^{\lambda} -{\gamma}^{2} - {\chi}{\gamma}{\eta} \ d\gamma \right)
\\
&= {\partial}_{\eta}\left( - \frac{1}{3}{\lambda}^{3} - \frac{1}{2}{\chi}{\lambda}^{2}{\eta} \right) = \frac{1}{2}{\chi}{\lambda}^{2}.
\end{split}
\end{align*}
Similarly, by using the results in Lemma \ref{lemmaA.1}, Lemma \ref{lemmaA.2} and Lemma \ref{lemmaA.3}, we can compute the other $\Lambda$-brackets.
\end{proof}

Now we will find the shifted superconformal vectors in the SUSY charged free fermion vertex algebras:
\begin{ansatz}\label{ansatz4.5}
\rm Let $T^{ch}_{sh}$ be the field written as a linear combination of three quadratic terms in Lemma \ref{lemma4.4}, i.e.
\begin{align*}
\begin{split}
T^{ch}_{sh} = m_{1}\partial\phi_{a}\phi^{\bar{a}} + m_{2}\phi_{a}\partial\phi^{\bar{a}} + m_{3}D\phi_{a}D\phi^{\bar{a}}.
\end{split}
\end{align*}
Then the $\Lambda$-bracket of $T^{ch}_{sh}$ with itself follows from Lemma \ref{lemma4.4}:
\begin{align*}
\begin{split}
[T^{ch}_{sh} \ _\Lambda \ T^{ch}_{sh}]& = 
\left(
\begin{aligned}
& (m_{1}^{2} - m_{1}m_{2} + m_{1}m_{3})\partial \\
& + (2m_{1}^{2} - 2m_{1}m_{2} + m_{1}m_{3})\lambda
+ (m_{1}m_{3})\chi D \\
\end{aligned}
\right)\partial\phi_{a}\phi^{\bar{a}}
\\& \quad + 
\left(
\begin{aligned}
& (m_{1}m_{2} - m_{2}^{2} + m_{2}m_{3})\partial \\
& + (2m_{1}m_{2} - 2m_{2}^{2} + m_{2}m_{3})\lambda
+ (m_{2}m_{3})\chi D \\
\end{aligned}
\right)\phi_{a}\partial\phi^{\bar{a}}
\\& \quad + 
\left(
\begin{aligned}
& (m_{1}m_{3} - m_{2}m_{3} + m_{3}^{2})\partial \\
& + (m_{1}m_{3} - m_{2}m_{3} + 2m_{3}^{2})\lambda
+ (m_{1}m_{3}-m_{2}m_{3})\chi D \\
\end{aligned}
\right)D\phi_{a}D\phi^{\bar{a}}
\\& \quad + (m_{1}m_{3} + m_{2}m_{3})\lambda^{2}\chi.
\end{split}
\end{align*}
On the other hand,
\begin{align*}
\begin{split}
(2\partial + 3\lambda + \chi D)T^{ch}_{sh} + \frac{c_{sh}}{3}\lambda^{2}\chi
& = \Big((2m_{1})\partial + (3m_{1})\lambda + (m_{1})\chi D\Big)\partial\phi_{a}\phi^{\bar{a}}
\\
& \quad + \Big((2m_{2})\partial + (3m_{2})\lambda + (m_{2})\chi D\Big)\phi_{a}\partial\phi^{\bar{a}}
\\
& \quad+ \Big((2m_{3})\partial + (3m_{3})\lambda + (m_{3})\chi D\Big)D\phi_{a}D\phi^{\bar{a}}
\\
& \quad+ \frac{c_{sh}}{3}\lambda^{2}\chi.
\end{split}
\end{align*}
By comparing the coefficients of each monomial terms, we know that the triple of the form
\begin{align*}
\begin{split}
(m_{1}, m_{2}, m_{3}) = (t_{a} + 1, t_{a}, 1),
\end{split}
\end{align*}
for any complex number $t_{a}$ satisfies the lambda bracket equality:
\begin{align*}
\begin{split}
[T^{ch}_{sh} \ _\Lambda \ T^{ch}_{sh}] = (2\partial + 3\lambda + \chi D)T^{ch}_{sh} + \frac{c_{sh}}{3}\lambda^{2}\chi,
\end{split}
\end{align*}
where the central charge is $c_{sh} = 6t_{a} + 3$. For the odd element $a \in A$, we obtain the result by replacing the fields $\phi_{a}$ and $\phi^{\bar{a}}$ with $\phi^{\bar{a}}$ and $\phi_{a}$ respectively.
\end{ansatz}
\
\\ Hence we have shown the following theorem:
\begin{theorem}\label{theorem4.6}
Let $T^{ch}_{sh}$ be the field defined by
\begin{align*}
\begin{split}
T^{ch}_{sh} 
& = \sum\limits_{a \in A_{\bar{0}}}(t_{a}+1)\partial\phi_{a}\phi^{\bar{a}} + t_{a}\phi_{a}\partial\phi^{\bar{a}} + D\phi_{a}D\phi^{\bar{a}}
\\
& \quad + \sum\limits_{a \in A_{\bar{1}}}t_{a}\partial\phi_{a}\phi^{\bar{a}} + (t_{a}+1)\phi_{a}\partial\phi^{\bar{a}} + D\phi_{a}D\phi^{\bar{a}},
\end{split}
\end{align*}
where $(t_{a})_{a \in A}$ is a sequence of complex numbers. Then the field $T^{ch}_{sh}$ is a superconformal vector of the SUSY charged free fermion vertex algebra $F^{ch}_{N=1}$ with the central charge 
\begin{align*}
\begin{split}
c_{sh} = \sum\limits_{a \in A}(6t_{a} + 3).
\end{split}
\end{align*}
\end{theorem}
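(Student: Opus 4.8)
The plan is to verify the three conditions of Definition~\ref{definition3.4} for $T^{ch}_{sh}$, reducing each of them to the single-index computations already assembled in Lemma~\ref{lemma4.4} and Ansatz~\ref{ansatz4.5}.

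First I would establish the super-Virasoro relation $[T^{ch}_{sh}\ _{\Lambda}\ T^{ch}_{sh}] = (2\partial + 3\lambda + \chi D)T^{ch}_{sh} + \tfrac{c_{sh}}{3}\lambda^{2}\chi$. Writing $T^{ch}_{sh} = \sum_{a \in A}T^{(a)}$ with $T^{(a)}$ the index-$a$ summand, the key observation is that distinct indices decouple: for $a \neq b$, repeated application of the non-commutative Wick formula to $[T^{(a)}\ _{\Lambda}\ T^{(b)}]$ reduces it to sums of products of the elementary generator brackets $[\phi_{a}\ _{\Lambda}\ \phi^{\bar{b}}]$, $[\phi^{\bar{a}}\ _{\Lambda}\ \phi_{b}]$, $[\phi_{a}\ _{\Lambda}\ \phi_{b}]$, $[\phi^{\bar{a}}\ _{\Lambda}\ \phi^{\bar{b}}]$, all of which vanish for $a \neq b$ by Example~\ref{exmaple2.12}. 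Hence $[T^{ch}_{sh}\ _{\Lambda}\ T^{ch}_{sh}] = \sum_{a}[T^{(a)}\ _{\Lambda}\ T^{(a)}]$. For even $a$, Ansatz~\ref{ansatz4.5} (which rests on Lemma~\ref{lemma4.4}) gives $[T^{(a)}\ _{\Lambda}\ T^{(a)}] = (2\partial + 3\lambda + \chi D)T^{(a)} + \tfrac{6t_{a}+3}{3}\lambda^{2}\chi$; for odd $a$ the same identity holds after the interchange $\phi_{a} \leftrightarrow \phi^{\bar{a}}$, which exchanges the two first-order terms and thereby produces the coefficients $(t_{a}, t_{a}+1, 1)$ recorded in the statement. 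Summing over $a \in A$ gives the relation with $c_{sh} = \sum_{a}(6t_{a}+3)$.

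Next I would check that $T_{(0|0)} = 2\partial$ and $T_{(0|1)} = D$ hold as operators on all of $V$. Isolating the $\chi^{0}\lambda^{0}$ and $\chi^{1}\lambda^{0}$ coefficients of the non-commutative Wick formula, and using that the integral correction $\int_{0}^{\Lambda}[[T\ _{\Lambda}\ b]\ _{\Gamma}\ c]\,d\Gamma$ is divisible by $\lambda$ (so it drops out at order $\lambda^{0}$), shows that $T_{(0|0)}$ is an even derivation and $T_{(0|1)}$ an odd derivation of the normally ordered product; sesquilinearity then shows both commute with $\partial$ and with $D$ in the appropriate graded sense. Since $F^{ch}_{N=1}$ is freely generated by the $\phi_{a}, \phi^{\bar{a}}$ under $\partial$, $D$ and $:\ :$, it suffices to verify $T_{(0|0)}\phi_{a} = 2\partial\phi_{a}$, $T_{(0|1)}\phi_{a} = D\phi_{a}$ and the analogues for $\phi^{\bar{a}}$, which I read off as the $\lambda^{0}$ parts of the single-index brackets $[T^{(a)}\ _{\Lambda}\ \phi_{a}]$ and $[T^{(a)}\ _{\Lambda}\ \phi^{\bar{a}}]$.

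Finally, for diagonalizability of $T_{(1|0)}$ I would exhibit an eigenbasis. Extracting the $\chi^{0}\lambda^{1}$ coefficient of the Wick formula, the integral correction again contributes nothing at this order (its relevant coefficient involves $[T_{(0|0)}b\ _{\Gamma}\ c] = -2\gamma[b\ _{\Gamma}\ c]$, which is divisible by $\gamma$ and so has vanishing $\gamma^{0}$ part), so $T_{(1|0)}$ is an even derivation; together with the commutators $[T_{(1|0)}, \partial] = 2\partial$ and $[T_{(1|0)}, D] = D$ obtained from sesquilinearity, this makes $T_{(1|0)}$ an energy operator. Computing the $\chi^{0}\lambda^{1}$ parts of $[T^{(a)}\ _{\Lambda}\ \phi_{a}]$ and $[T^{(a)}\ _{\Lambda}\ \phi^{\bar{a}}]$ shows each generator is a $T_{(1|0)}$-eigenvector with weight depending linearly on $t_{a}$, whence every monomial of the PBW basis is an eigenvector and $T_{(1|0)}$ is diagonalizable. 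The main obstacle is not conceptual but organizational: all genuinely new input is the table of quadratic brackets in Lemma~\ref{lemma4.4}, so the crux is to (i) justify the index-decoupling cleanly enough to invoke the single-index Ansatz~\ref{ansatz4.5}, and (ii) track the Koszul signs and the $\lambda$- and $\gamma$-divisibility of the Wick integral accurately enough to pin down the low-order coefficients that identify $T_{(0|0)}$, $T_{(0|1)}$ and $T_{(1|0)}$.
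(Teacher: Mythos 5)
Your proposal is correct, and its core is the same route the paper takes: the super-Virasoro relation $[T^{ch}_{sh}\ _{\Lambda}\ T^{ch}_{sh}] = (2\partial + 3\lambda + \chi D)T^{ch}_{sh} + \tfrac{c_{sh}}{3}\lambda^{2}\chi$ is obtained by decoupling distinct indices and invoking the single-index computation of Lemma~\ref{lemma4.4} and Ansatz~\ref{ansatz4.5}, with the swap $\phi_{a} \leftrightarrow \phi^{\bar{a}}$ handling odd indices. Where you differ is in thoroughness rather than in method: the paper's proof consists only of Lemma~\ref{lemma4.4} plus Ansatz~\ref{ansatz4.5} (and tacitly the index decoupling when summing over $a \in A$), so it establishes only the first axiom of Definition~\ref{definition3.4}; the conditions $T_{(0|0)} = 2\partial$, $T_{(0|1)} = D$ and the diagonalizability of $T_{(1|0)}$ are left implicit, recoverable from the generator brackets computed afterwards in Proposition~\ref{proposition4.9}. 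Your explicit verification of these — showing the low-order coefficients of the non-commutative Wick formula act as derivations because the integral term is divisible by $\lambda$, and at order $\lambda\chi^{0}$ vanishes since $[T_{(0|0)}b\ _{\Gamma}\ c] = -2\gamma[b\ _{\Gamma}\ c]$ is divisible by $\gamma$, then checking the action on the free generators and propagating through $\partial$, $D$ and the PBW monomials — is sound and supplies exactly what the paper omits. So your write-up can be viewed as the paper's argument together with a completed check of all three axioms; nothing in it would fail.
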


\begin{definition}\label{definition4.7}
\rm The field $T^{ch}_{sh}$ defined in Theorem \ref{theorem4.6} is called the \textit{shifted superconformal vector} of the supersymmetric charged free fermion vertex algebra $F^{ch}_{N=1}$.
\end{definition}

\begin{remark}\label{remark4.8}
\rm If we define the ghost field $T^{ch}_{ghost}$ as
\begin{align*}
\begin{split}
T^{ch}_{ghost} = \sum\limits_{a \in A}t_{a}\partial(\phi_{a}\phi^{\bar{a}}),
\end{split}
\end{align*}
then the shifted superconformal vector $T^{ch}_{sh}$ can be written as
\begin{align*}
\begin{split}
T^{ch}_{sh} = T^{ch}_{st} + T^{ch}_{ghost}.
\end{split}
\end{align*}
Moreover, if the complex numbers $t_{a}$ are all zeros, then the standard superconformal vector $T^{ch}_{st}$ is recovered.
\end{remark}
\
\\ Now, the conformal weight of each of fields $\phi_{a}$ and $\phi^{\bar{a}}$ has one degree of freedom.
\begin{proposition}\label{proposition4.9}
For an even element $a$ in $A$, the conformal weights of $\phi_{a}$ and $\phi^{\bar{a}}$ are respectively $-\frac{1}{2}t_{a}$ and $\frac{1}{2}t_{a} + \frac{1}{2}$. For an odd element $a$ in $A$, the conformal weights are $\frac{1}{2}t_{a} + \frac{1}{2}$ and $-\frac{1}{2}t_{a}$ respectively. Moreover, the fields $\phi_{a}$ and $\phi^{\bar{a}}$ are primary with respect to $T^{ch}_{sh}$.
\end{proposition}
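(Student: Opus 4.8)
The plan is to verify directly from Definition \ref{definition3.4} that, for each basis index $a$, the two $\Lambda$-brackets $[T^{ch}_{sh}\ _\Lambda\ \phi_a]$ and $[T^{ch}_{sh}\ _\Lambda\ \phi^{\bar a}]$ take exactly the primary form $(2\partial + 2\Delta\lambda + \chi D)v$ with vanishing $O(\Lambda^2)$ remainder, and to read off $\Delta$ from the coefficient of $\lambda v$. The first simplification is that in $T^{ch}_{sh} = \sum_b[\cdots]$ only the summand carrying the same index $a$ contributes: all generators $\Lambda$-commute except for the dual pairing $[\phi_b\ _\Lambda\ \phi^{\bar c}] = \delta_{bc}$ and its skew-symmetric partner $[\phi^{\bar b}\ _\Lambda\ \phi_c] = \delta_{bc}$, so every cross term with $b \neq a$ vanishes. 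It therefore suffices to work with the single quadratic expression $m_1\partial\phi_a\phi^{\bar a} + m_2\phi_a\partial\phi^{\bar a} + m_3 D\phi_a D\phi^{\bar a}$, where $(m_1,m_2,m_3) = (t_a+1, t_a, 1)$ for even $a$ and $(t_a, t_a+1, 1)$ for odd $a$.

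Since $\phi_a$ and $\phi^{\bar a}$ are single generators, the non-commutative Wick formula of Definition \ref{definition2.11} does not apply to $[T^{ch}_{sh}\ _\Lambda\ \cdot]$ directly; instead I would first compute $[\phi_a\ _\Lambda\ T^{ch}_{sh}]$ and $[\phi^{\bar a}\ _\Lambda\ T^{ch}_{sh}]$, where the product sits on the right, and then pass to $[T^{ch}_{sh}\ _\Lambda\ \cdot]$ by skew-symmetry. The only inputs needed are the elementary brackets $[\phi_a\ _\Lambda\ \phi^{\bar a}] = 1$, $[\phi_a\ _\Lambda\ \phi_a] = [\phi^{\bar a}\ _\Lambda\ \phi^{\bar a}] = 0$, together with their $D$- and $\partial$-descendants from sesquilinearity, e.g. $[\phi_a\ _\Lambda\ D\phi^{\bar a}] = -\chi$ and $[\phi_a\ _\Lambda\ \partial\phi^{\bar a}] = \lambda$ (using $\chi^2 = -\lambda$ and $[D,\chi] = 2\lambda$). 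Applying Wick term by term, each monomial yields a single elementary summand: for even $a$ the pieces $\partial\phi_a\phi^{\bar a}$, $\phi_a\partial\phi^{\bar a}$, $D\phi_a D\phi^{\bar a}$ contribute $\partial\phi_a$, $\lambda\phi_a$, $-\chi D\phi_a$ respectively, while all integral corrections vanish because the inner bracket is either zero or a multiple of the vacuum. This gives $[\phi_a\ _\Lambda\ T^{ch}_{sh}] = (t_a+1)\partial\phi_a + t_a\lambda\phi_a - \chi D\phi_a$, and an analogous three-term expression $[\phi^{\bar a}\ _\Lambda\ T^{ch}_{sh}] = (t_a+1)\lambda\phi^{\bar a} + t_a\partial\phi^{\bar a} + \chi D\phi^{\bar a}$.

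The final step is to apply the $N_K=1$ skew-symmetry $[b\ _\Lambda\ a] = (-1)^{ab}[a\ _{-\nabla-\Lambda}\ b]$, which amounts to substituting $\lambda \mapsto -\partial-\lambda$ and $\chi \mapsto -D-\chi$ in the computed bracket and letting these operators act on the $V$-valued coefficients (using $D^2 = \partial$). For even $a$ this turns the above into $[T^{ch}_{sh}\ _\Lambda\ \phi_a] = 2\partial\phi_a - t_a\lambda\phi_a + \chi D\phi_a$, whence $2\Delta = -t_a$, i.e. $\Delta(\phi_a) = -\tfrac12 t_a$; the same procedure, now carrying the sign $(-1)^{\tilde T\tilde\phi^{\bar a}} = -1$ for the odd generator $\phi^{\bar a}$, gives $[T^{ch}_{sh}\ _\Lambda\ \phi^{\bar a}] = 2\partial\phi^{\bar a} + (t_a+1)\lambda\phi^{\bar a} + \chi D\phi^{\bar a}$, so $\Delta(\phi^{\bar a}) = \tfrac12 t_a + \tfrac12$. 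The odd-$a$ case follows verbatim with $\phi_a$ and $\phi^{\bar a}$ (and their parities) interchanged and $(m_1,m_2)$ swapped, producing the weights $\tfrac12 t_a + \tfrac12$ and $-\tfrac12 t_a$. Primality holds in every case because each bracket contains no term beyond $\partial v$, $\lambda v$, and $\chi Dv$, so the $O(\Lambda^2)$ part is identically zero.

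I expect the sign bookkeeping to be the only genuine obstacle. The parities of $\phi_a$ and $\phi^{\bar a}$ are opposite for each fixed $a$, and both $D$ and $\chi$ are odd, so the Koszul signs in the Wick formula and especially the correct handling of the substitution $\Lambda \mapsto -\nabla-\Lambda$ — including the overall factor $(-1)^{\tilde T\tilde\phi^{\bar a}} = -1$ that appears for the odd generator — must be tracked carefully; a single sign slip would either shift a conformal weight or spuriously generate an $O(\Lambda^2)$ term and destroy primality. Everything else is a routine application of the identities already recorded in Lemma \ref{lemma4.4} and Appendix \ref{sectionA}.
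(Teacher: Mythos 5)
Your proposal is correct and takes essentially the same approach as the paper: the paper's proof simply combines the Appendix~\ref{sectionA} brackets (A.1), (A.5), (A.7) and (A.2), (A.6), (A.8) --- which are themselves established by exactly your strategy of applying the non-commutative Wick formula to $[\phi\ _{\Lambda}\ (\text{quadratic})]$ and then skew-symmetry (see the proof of Lemma~\ref{lemmaA.1}) --- with the coefficients $(t_a+1,t_a,1)$, arriving at the same brackets $[T^{ch}_{sh}\ _{\Lambda}\ \phi_a]=(2\partial-t_a\lambda+\chi D)\phi_a$ and $[T^{ch}_{sh}\ _{\Lambda}\ \phi^{\bar a}]=(2\partial+(t_a+1)\lambda+\chi D)\phi^{\bar a}$. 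Your sign bookkeeping (including the factor $(-1)^{\tilde{T}\tilde{\phi}^{\bar a}}=-1$ for the odd generator) checks out, and applying skew-symmetry once to the summed bracket rather than term by term as in Appendix~\ref{sectionA} is only a cosmetic reordering.
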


\begin{proof}
From the lambda brackets (A.1), (A.5) and (A.7) in Appendix \ref{sectionA}, we have
\begin{align*}
\begin{split}
[(t_{a}+1) & \partial\phi_{a}\phi^{\bar{a}} + t_{a}\phi_{a}\partial\phi^{\bar{a}} + D\phi_{a}D\phi^{\bar{a}} \ _{\Lambda} \ \phi_{a}] 
\\& = (t_{a}+1)[\partial\phi_{a}\phi^{\bar{a}} \ _{\Lambda} \ \phi_{a}] + t_{a}[\phi_{a}\partial\phi^{\bar{a}} \ _{\Lambda} \ \phi_{a}] + [D\phi_{a}D\phi^{\bar{a}} \ _{\Lambda} \ \phi_{a}]
\\& = (t_{a}+1)\partial\phi_{a} - t_{a}(\partial + \lambda)\phi_{a} + (\partial + \chi D)\phi_{a}
\\& = 2\partial\phi_{a} - t_{a}\lambda\phi_{a} + \chi D\phi_{a},
\end{split}
\end{align*}
for any even element $a$, and also we have
\begin{align*}
\begin{split}
[(t_{a}+1)\partial\phi_{a}\phi^{\bar{a}} + t_{a}&\phi_{a}\partial\phi^{\bar{a}} + D\phi_{a}D\phi^{\bar{a}} \ _{\Lambda} \ \phi^{\bar{a}}] 
\\& = (t_{a}+1)[\partial\phi_{a}\phi^{\bar{a}} \ _{\Lambda} \ \phi^{\bar{a}}] + t_{a}[\phi_{a}\partial\phi^{\bar{a}} \ _{\Lambda} \ \phi^{\bar{a}}] + [D\phi_{a}D\phi^{\bar{a}} \ _{\Lambda} \ \phi^{\bar{a}}]
\\& = (t_{a}+1)(\partial + \lambda)\phi^{\bar{a}} - t_{a}\partial\phi^{\bar{a}} + (\partial + \chi D)\phi^{\bar{a}}
\\& = 2\partial\phi^{\bar{a}} + (t_{a} + 1)\lambda\phi^{\bar{a}} + \chi D\phi^{\bar{a}},
\end{split}
\end{align*}
from the lambda brackets in (A.2), (A.6) and (A.8).
Hence, we have the following $\Lambda$-brackets:
\begin{align*}
\begin{split}
[T^{ch}_{sh} \ _{\Lambda} \ \phi_{a}] & = (2\partial - t_{a}\lambda + \chi D)\phi_{a},
\\
[T^{ch}_{sh} \ _{\Lambda} \ \phi^{\bar{a}}] = &\ (2\partial + (t_{a}+1)\lambda + \chi D)\phi^{\bar{a}},
\end{split}
\end{align*}
for any even element $a$, and
\begin{align*}
\begin{split}
[T^{ch}_{sh} \ _{\Lambda} \ \phi_{a}] = &\ (2\partial + (t_{a}+1)\lambda + \chi D)\phi_{a},
\\
[T^{ch}_{sh} \ _{\Lambda} \ \phi^{\bar{a}}] & = (2\partial - t_{a}\lambda + \chi D)\phi^{\bar{a}},
\end{split}
\end{align*}
for any odd element $a$ in $A$.
\end{proof}

\begin{remark}\label{remark4.10}
\rm Roughly speaking, the SUSY $W$-algebra is the BRST cohomology of the tensor product of the SUSY affine vertex algebra and the SUSY charged free fermion vetex algebra, i.e.:
\begin{align*}
\begin{split}
W(\overline{\mathfrak{g}}, f_{odd}, k) = H(V^{k}(\overline{\mathfrak{g}}) \otimes F^{ch}_{N=1}, Q),
\end{split}
\end{align*}
where $Q$ is the differential of $V^{k}(\overline{\mathfrak{g}}) \otimes F^{ch}_{N=1}$ and $\overline{\mathfrak{g}}$ denotes ${\mathfrak{g}}$ with reversed parity (for details, see Section 4.1 of \cite{MRS21}). For the SUSY affine vertex algebra $V^{k}(\overline{\mathfrak{g}})$, there exists an $N=1$ superconformal vector $T^{KT}_{st}$, the Kac-Todorov superconformal vector which originally came from \cite{KT85}, of the SUSY affine vertex algebra (see Example 5.9 of \cite{HK07}). Also, for any vector $\overline{v} \in \overline{\mathfrak{g}}$, we know that the vector
\begin{align*}
\begin{split}
T^{KT}_{sh} = T^{KT}_{st} + {\partial}\overline{v},
\end{split}
\end{align*}
is also an $N=1$ superconformal vector (see Example 2.13 of \cite{HZ10}). Hence, we propose that the sum of our shifted superconformal vector in Theorem \ref{theorem4.6} and the Kac-Todorov superconformal vector
\begin{align*}
\begin{split}
T^{W}:= T^{KT}_{sh} \otimes 1 + 1 \otimes T^{ch}_{sh}
\end{split}
\end{align*}
gives a superconformal vector of the supersymmetric $W$-algebra.
\end{remark}

\begin{remark}\label{remark4.11}
\rm Decomposition of the field $T^{ch}_{sh}$ via non-SUSY superconformal vectors can be done in the same way as Remark \ref{remark4.3}, i.e. the shifted superconformal field $T^{ch}_{sh}$ of the SUSY charged free fermion vertex algebra $F^{ch}_{N=1}$ is the sum of shifted superconformal fields of the non-SUSY $bc$-$\beta\gamma$ system.
\end{remark}

\begin{proof}
For $a \in A_{\bar{0}}$, we have
\begin{align*}
\begin{split}
Y(\phi_{a}\partial\phi^{\bar{a}}, z, \theta) 
& = Y(\phi_{a}, z, \theta)Y(\partial\phi^{\bar{a}}, z, \theta)
\\& = Y(\phi_{a}, z, \theta)\partial_{z}Y(\phi^{\bar{a}}, z, \theta)
\\& = (Y(\gamma,z) + \theta Y(c,z))(Y(\partial b,z) + \theta Y(\partial\beta,z))
\\& = Y(\gamma\partial b,z) + \theta Y(c\partial b + \gamma\partial\beta,z),
\end{split}
\end{align*}
and, with the equations (4.4) and (4.5), the $A_{\bar{0}}$-part of the shifted superconformal field $T^{ch}_{sh}$ can be expressed as
\begin{align*}
\begin{split}
(t_{a}+1)\partial\phi_{a}&\phi^{\bar{a}} + t_{a}\phi_{a}\partial\phi^{\bar{a}} + D\phi_{a}D\phi^{\bar{a}}
\\
&= (t_{a})\gamma\partial b + (t_{a}+1)\partial\gamma b + c\beta 
\\
& \quad + 2\theta \Big(\frac{1}{2}((t_{a}-1)c\partial b + (t_{a}+1)\partial cb + t_{a}\gamma\partial\beta + (t_{a}+2)\partial\gamma\beta)\Big).
\end{split}
\end{align*}
If we define the ghost parts as
\begin{align*}
\begin{split}
L^{ch}_{ghost} = \frac{t_{a}}{2}(c\partial b + \partial cb + \gamma\partial\beta + \partial\gamma\beta), \quad G^{ch}_{ghost} = t_{a}(\gamma\partial b + \partial\gamma b),
\end{split}
\end{align*}
we obtain the shifted superconformal vectors in the $bc$-$\beta\gamma$ system as
\begin{align}
\begin{split}
L^{ch}_{sh} & = L^{ch}_{st} + L^{ch}_{ghost}
\\ & = \frac{1}{2}((t_{a}-1)c\partial b + (t_{a}+1)\partial cb + t_{a}\gamma\partial\beta + (t_{a}+2)\partial\gamma\beta),
\end{split}
\end{align}
\begin{align}
\begin{split}
G^{ch}_{sh} & = G^{ch}_{st} + G^{ch}_{ghost}
\\& = (t_{a})\gamma\partial b + (t_{a}+1)\partial\gamma b + c\beta.
\end{split}
\end{align}
On the other hand, for the odd elements $a \in A_{\bar{1}}$, we have
\begin{align*}
\begin{split}
Y(\partial\phi_{a}\phi^{\bar{a}}, z, \theta) 
& = Y(\partial\phi_{a}, z, \theta)Y(\phi^{\bar{a}}, z, \theta)
\\& = \partial_{z}Y(\phi_{a}, z, \theta)Y(\phi^{\bar{a}}, z, \theta)
\\& = (Y(\partial b,z) + \theta Y(\partial\beta,z))(Y(\gamma,z) + \theta Y(c,z))
\\& = Y(\partial b\gamma,z) + \theta Y(\partial\beta\gamma - \partial bc, z),
\end{split}
\end{align*}
and, also with the equations in (4.7), the $A_{\bar{1}}$-part of the shifted superconformal field $T^{ch}_{sh}$ can be written as
\begin{align*}
\begin{split}
(t_{a}+1)\partial\phi_{a}&\phi^{\bar{a}} + t_{a}\phi_{a}\partial\phi^{\bar{a}} + D\phi_{a}D\phi^{\bar{a}}
\\
&= (t_{a})\gamma\partial b + (t_{a}+1)\partial\gamma b + c\beta 
\\
& \quad + 2\theta \Big(\frac{1}{2}((t_{a}-1)c\partial b + (t_{a}+1)\partial cb + t_{a}\gamma\partial\beta + (t_{a}+2)\partial\gamma\beta)\Big),
\end{split}
\end{align*}
of which the expression is same as the $A_{\bar{0}}$-parts.
\ 
\\ Hence the shifted superconformal field of the SUSY charged free fermion vertex algebra is the sum of $\mathrm{dim}_{\mathbb{C}}U$ copies of the shifted superconformal field of the non-SUSY $bc$-$\beta\gamma$ system, i.e.
\begin{align*}
\begin{split}
T^{ch}_{sh} = \sum\limits_{a \in A}G^{ch}_{sh} + 2\theta (\sum\limits_{a \in A}L^{ch}_{sh}).
\end{split}
\end{align*}
\end{proof}

\begin{remark}\label{remark4.12}
\rm In Remark \ref{remark4.8}, the SUSY vertex algebra case, the field $T^{ch}_{st}$ is a special case of the field $T^{ch}_{sh}$ when $t_{a} = 0$ for all $a \in A$. Likewise, we can see that the each field of $L^{ch}_{st}$ and $G^{ch}_{st}$ is also a special case of the each field of $L^{ch}_{sh}$ and $G^{ch}_{sh}$ in the non-SUSY $bc$-$\beta\gamma$ system.
\end{remark}

\subsection{Shifted $N=2$ superconformal structures of SUSY charged free fermion vertex algebras} \label{section4.3}

From \cite{BZHS08} and \cite{HK07}, we can use the fact that the field
\begin{align*}
\begin{split}
J^{ch}_{st} = \sum\limits_{a \in A_{\bar{0}}}D\phi_{a}\phi^{\bar{a}} + \sum\limits_{a \in A_{\bar{1}}}D\phi^{\bar{a}}\phi_{a},
\end{split}
\end{align*}
together with $T^{ch}_{st}$ form an $N=2$ superconformal vertex algebra structure.
For the $N=2$ superconformal symmetry of $F^{ch}_{N=1}$ compatible with the shifted superconformal vector $T^{ch}_{sh}$, we find the corresponding ghost term in Theorem \ref{theorem4.14}. We also write the shifted $N=2$ superconformal fields as $N=2$ superfields in Theorem \ref{theorem4.21}.

\begin{lemma}\label{lemma4.13} Let $F^{ch}_{N=1}$ be the SUSY charged free fermion vertex algebra. For any even element $a$ in $A$, we have the following $\Lambda$-brackets\textup{:}
\begin{align}
\begin{split}
[\partial\phi_{a}\phi^{\bar{a}} \ _\Lambda \ D\phi_{a}\phi^{\bar{a}}]
& = \partial(D\phi_{a}\phi^{\bar{a}}) + \lambda(D\phi_{a}\phi^{\bar{a}}) + \chi D(D\phi_{a}\phi^{\bar{a}}) 
\\
& \quad + \chi(D\phi_{a}D\phi^{\bar{a}}) - \frac{1}{2}\lambda^{2},
\end{split}
\end{align}
\begin{align}
\begin{split}
[\partial\phi_{a}\phi^{\bar{a}} \ _\Lambda \ \phi_{a}D\phi^{\bar{a}}]
& = \partial(\phi_{a}D\phi^{\bar{a}}) + \lambda(\phi_{a}D\phi^{\bar{a}}) + \chi D(\phi_{a}D\phi^{\bar{a}}) 
\\
& \quad - \chi(D\phi_{a}D\phi^{\bar{a}}) + \lambda\chi(\phi_{a}\phi^{\bar{a}}) + \frac{1}{2}\lambda^{2},
\end{split}
\end{align}
\begin{align}
\begin{split}
[\phi_{a}\partial\phi^{\bar{a}} \ _\Lambda \ D\phi_{a}\phi^{\bar{a}}]
& = - \partial(D\phi_{a}\phi^{\bar{a}}) - \lambda(D\phi_{a}\phi^{\bar{a}}) - \chi D(D\phi_{a}\phi^{\bar{a}}) 
\\
& \quad - \chi(D\phi_{a}D\phi^{\bar{a}}) - \lambda\chi(\phi_{a}\phi^{\bar{a}}) - \frac{1}{2}\lambda^{2},
\end{split}
\end{align}
\begin{align}
\begin{split}
[\phi_{a}\partial\phi^{\bar{a}} \ _\Lambda \ \phi_{a}D\phi^{\bar{a}}]
& = - \partial(\phi_{a}D\phi^{\bar{a}}) - \lambda(\phi_{a}D\phi^{\bar{a}}) - \chi D(\phi_{a}D\phi^{\bar{a}}) 
\\
& \quad + \chi(D\phi_{a}D\phi^{\bar{a}}) + \frac{1}{2}\lambda^{2},
\end{split}
\end{align}
\begin{align}
\begin{split}
[D\phi_{a}D\phi^{\bar{a}} \ _\Lambda \ D\phi_{a}\phi^{\bar{a}}]
& = \partial(D\phi_{a}\phi^{\bar{a}}) + \lambda(D\phi_{a}\phi^{\bar{a}}) - \chi(D\phi_{a}D\phi^{\bar{a}}) + \frac{1}{2}\lambda^{2},
\end{split}
\end{align}
\begin{align}
\begin{split}
[D\phi_{a}D\phi^{\bar{a}} \ _\Lambda \ \phi_{a}D\phi^{\bar{a}}]
& = \partial(\phi_{a}D\phi^{\bar{a}}) + \lambda(\phi_{a}D\phi^{\bar{a}}) + \chi(D\phi_{a}D\phi^{\bar{a}}) + \frac{1}{2}\lambda^{2},
\end{split}
\end{align}
\begin{align}
\begin{split}
[D\phi_{a}\phi^{\bar{a}} \ _\Lambda \ D\phi_{a}\phi^{\bar{a}}]
& = \partial\phi_{a}\phi^{\bar{a}} + D\phi_{a}D\phi^{\bar{a}} + \lambda\chi,
\end{split}
\end{align}
\begin{align}
\begin{split}
[D\phi_{a}\phi^{\bar{a}} \ _\Lambda \ \phi_{a}D\phi^{\bar{a}}]
& = \phi_{a}\partial\phi^{\bar{a}} - D\phi_{a}D\phi^{\bar{a}} + \lambda(\phi_{a}\phi^{\bar{a}}),
\end{split}
\end{align}
\begin{align}
\begin{split}
[\phi_{a}D\phi^{\bar{a}} \ _\Lambda \ D\phi_{a}\phi^{\bar{a}}]
& = - \partial\phi_{a}\phi^{\bar{a}} - D\phi_{a}D\phi^{\bar{a}} - \lambda(\phi_{a}\phi^{\bar{a}}),
\end{split}
\end{align}
\begin{align}
\begin{split}
[\phi_{a}D\phi^{\bar{a}} \ _\Lambda \ \phi_{a}D\phi^{\bar{a}}]
& = - \phi_{a}\partial\phi^{\bar{a}} + D\phi_{a}D\phi^{\bar{a}} - \lambda\chi.
\end{split}
\end{align}
\end{lemma}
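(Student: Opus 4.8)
The plan is to establish every one of the listed $\Lambda$-brackets by the mechanism already illustrated in the proof of Lemma~\ref{lemma4.4}: the non-commutative Wick formula of Definition~\ref{definition2.11} is the engine, and it reduces each bracket of two quadratic monomials to brackets of a quadratic monomial against a single generator, all of which are tabulated in Appendix~\ref{sectionA} (Lemmas~\ref{lemmaA.1}, \ref{lemmaA.2}, \ref{lemmaA.3}). Concretely, presenting the right-hand monomial as a normally ordered product $YZ$ --- for instance $D\phi_{a}\phi^{\bar{a}}=(D\phi_{a})\phi^{\bar{a}}$ or $\phi_{a}D\phi^{\bar{a}}=\phi_{a}(D\phi^{\bar{a}})$ --- I would expand
\begin{align*}
[X \ _{\Lambda} \ YZ] = [X \ _{\Lambda} \ Y]Z + (-1)^{(X+1)Y}Y[X \ _{\Lambda} \ Z] + \int_{0}^{\Lambda}[[X \ _{\Lambda} \ Y] \ _{\Gamma} \ Z]\,d\Gamma,
\end{align*}
where $X$ runs over the monomials $\partial\phi_{a}\phi^{\bar{a}}$, $\phi_{a}\partial\phi^{\bar{a}}$, $D\phi_{a}D\phi^{\bar{a}}$, $D\phi_{a}\phi^{\bar{a}}$, $\phi_{a}D\phi^{\bar{a}}$ occurring as left entries in the statement.

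In each expansion the first two summands are handled by pulling the operators $D$ and $\partial$ out of the inner brackets by sesquilinearity (Definition~\ref{definition2.10}), which produces factors $(D+\chi)$ and $(\partial+\lambda)$ up to sign, and then substituting the generator-level brackets from Appendix~\ref{sectionA}; the integral correction is evaluated with the convention $\int_{0}^{\Lambda}d\Gamma={\partial}_{\eta}\int_{0}^{\lambda}d\gamma$, exactly as in the sample computation inside Lemma~\ref{lemma4.4}. Wherever the resulting products appear in a non-standard order, I would reorganize them into the monomial basis displayed in the statement using the quasi-commutativity and quasi-associativity relations of Definition~\ref{definition2.11}. A useful economy is that skew-symmetry halves the labor: for an even index $a$ the $T$-type monomials $\partial\phi_{a}\phi^{\bar{a}}$, $\phi_{a}\partial\phi^{\bar{a}}$, $D\phi_{a}D\phi^{\bar{a}}$ are odd while the current-type monomials $D\phi_{a}\phi^{\bar{a}}$, $\phi_{a}D\phi^{\bar{a}}$ are even, so the relation $[b \ _{\Lambda} \ a]=(-1)^{ab}[a \ _{-\nabla-\Lambda} \ b]$ lets me deduce $[\phi_{a}D\phi^{\bar{a}} \ _{\Lambda} \ D\phi_{a}\phi^{\bar{a}}]$ from $[D\phi_{a}\phi^{\bar{a}} \ _{\Lambda} \ \phi_{a}D\phi^{\bar{a}}]$ and furnishes consistency checks on the two self-brackets, after substituting $-\nabla-\Lambda=(-\partial-\lambda,-D-\chi)$ and re-expanding. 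The odd-index case then follows by the $\phi_{a}\leftrightarrow\phi^{\bar{a}}$ swap, as in Ansatz~\ref{ansatz4.5}.

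The main obstacle will be the sign and integral bookkeeping rather than the bulk polynomial algebra. The inhomogeneous pieces $\pm\frac{1}{2}\lambda^{2}$, $\pm\lambda\chi$, $\pm\lambda(\phi_{a}\phi^{\bar{a}})$, and the $\pm\chi(D\phi_{a}D\phi^{\bar{a}})$ terms on the right-hand sides arise precisely from the integral corrections together with the Koszul signs incurred in commuting the odd objects $D$, $\chi$, $\eta$, and the odd generators past one another; these are exactly the terms that will later encode the central charge and the $N=2$ current normalization in Theorem~\ref{theorem4.14}. Verifying that each of these coefficients emerges as stated --- and in particular that the current-current brackets reproduce the $T$-type output $\partial\phi_{a}\phi^{\bar{a}}+D\phi_{a}D\phi^{\bar{a}}$ with the correct relative sign --- is where care is required.
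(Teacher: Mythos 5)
Your proposal follows essentially the same route as the paper's proof: the paper establishes (4.12) by exactly this non-commutative Wick expansion with the right factor split as $(D\phi_{a})\phi^{\bar{a}}$, substitutes the generator-level brackets from Appendix \ref{sectionA}, evaluates the integral term via ${\partial}_{\eta}\int_{0}^{\lambda}d\gamma$, and treats (4.13)--(4.21) "similarly," so your plan (including the skew-symmetry economies) is a faithful blueprint of it. One small citation correction: the generator-level brackets needed when the left entry is $D\phi_{a}\phi^{\bar{a}}$ or $\phi_{a}D\phi^{\bar{a}}$, i.e.\ for the current-current brackets (4.18)--(4.21), are tabulated in Lemma \ref{lemmaA.4} rather than in Lemmas \ref{lemmaA.1}--\ref{lemmaA.3}, though they are derivable by the same sesquilinearity and skew-symmetry manipulations you describe.
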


\begin{proof}
For the $\Lambda$-bracket in (4.12), we have:
\begin{align*}
\begin{split}
[\partial\phi_{a}\phi^{\bar{a}} \ _\Lambda \ D\phi_{a}\phi^{\bar{a}}] 
&= [\partial\phi_{a}\phi^{\bar{a}} \ _\Lambda \ D\phi_{a}]\phi^{\bar{a}} 
+ D\phi_{a}[\partial\phi_{a}\phi^{\bar{a}} \ _\Lambda \ \phi^{\bar{a}}] 
\\
& \qquad \qquad \qquad \qquad \quad + \int_{0}^{\Lambda} [[\partial\phi_{a}\phi^{\bar{a}} \ _\Lambda \ D\phi_{a}] \ _{\Gamma} \ \phi^{\bar{a}}] \ d\Gamma
\\
&= ((D + \chi){\partial}{\phi_{a}}){\phi^{\bar{a}}} 
+ D{\phi_{a}}({\partial} + {\lambda}){\phi^{\bar{a}}} 
\\
& \qquad \qquad \qquad \qquad \quad + \int_{0}^{\Lambda} [(D + \chi){\partial}{\phi_{a}} \ _{\Gamma} \ \phi^{\bar{a}}] \ d\Gamma
\\
&= {\partial}D{\phi_{a}}{\phi^{\bar{a}}} 
+ {\chi}{\partial}{\phi_{a}}{\phi^{\bar{a}}} 
+ D{\phi_{a}}{\partial}{\phi^{\bar{a}}} 
+ {\lambda}D{\phi_{a}}{\phi^{\bar{a}}}
\\
& \qquad \qquad \qquad \qquad \quad + \int_{0}^{\Lambda} - (\eta - \chi){\gamma} \ d\Gamma
\\
&= \partial(D\phi_{a}\phi^{\bar{a}}) + \lambda(D\phi_{a}\phi^{\bar{a}}) + \chi D(D\phi_{a}\phi^{\bar{a}}) 
\\
& \qquad \qquad \qquad \qquad \quad + \chi(D\phi_{a}D\phi^{\bar{a}}) - \frac{1}{2}\lambda^{2},
\end{split}
\end{align*}
from (A.3) and (A.2) in Lemma \ref{lemmaA.1}. Similarly, we can compute the $\Lambda$-brackets in (4.13)-(4.17) by using Lemma \ref{lemmaA.1}, Lemma \ref{lemmaA.2} and Lemma \ref{lemmaA.3}. The $\Lambda$-brackets in (4.18)-(4.21) follows from Lemma \ref{lemmaA.4}.
\end{proof}

\begin{theorem}\label{theorem4.14}
Let $J^{ch}_{ghost}$ be the field defined as
\begin{align*}
\begin{split}
J^{ch}_{ghost} = \sum\limits_{a \in A_{\bar{0}}}t_{a}(D\phi_{a}\phi^{\bar{a}}+\phi_{a}D\phi^{\bar{a}}) + \sum\limits_{a \in A_{\bar{1}}}t_{a}(D\phi^{\bar{a}}\phi_{a}+\phi^{\bar{a}}D\phi_{a}).
\end{split}
\end{align*}
Then the current field
\begin{align*}
\begin{split}
J^{ch}_{sh}=J^{ch}_{st}+J^{ch}_{ghost},
\end{split}
\end{align*}
and the superconformal field $T^{ch}_{sh}$ generates an $N=2$ superconformal structure of central charge $c_{sh}$.
\end{theorem}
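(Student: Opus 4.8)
The plan is to verify directly the three defining $\Lambda$-bracket relations of an $N=2$ superconformal structure in the $N_{K}=1$ formalism, as recorded in Remark \ref{remark3.5}: namely
\begin{align*}
[T^{ch}_{sh} \ _\Lambda \ T^{ch}_{sh}] &= (2\partial + 3\lambda + \chi D)T^{ch}_{sh} + \frac{c_{sh}}{3}\lambda^{2}\chi, \\
[T^{ch}_{sh} \ _\Lambda \ J^{ch}_{sh}] &= (2\partial + 2\lambda + \chi D)J^{ch}_{sh}, \\
[J^{ch}_{sh} \ _\Lambda \ J^{ch}_{sh}] &= T^{ch}_{sh} + \frac{c_{sh}}{3}\lambda\chi.
\end{align*}
The first relation is already established in Theorem \ref{theorem4.6}, so only the latter two remain. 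I would also observe at the outset that $J^{ch}_{sh}$ is an even vector: for an even element $a$ both $D\phi_{a}\phi^{\bar{a}}$ and $\phi_{a}D\phi^{\bar{a}}$ are products of two odd (respectively two even) vectors, and dually for odd $a$.

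The key structural simplification is that $T^{ch}_{sh}$ and $J^{ch}_{sh}$ are both sums over $A$ of expressions built, for each fixed index, only from $\phi_{a}$ and $\phi^{\bar{a}}$, while the generating brackets satisfy $[\phi_{a} \ _{\Lambda} \ \phi^{\bar{b}}] = \delta_{ab}$ with all others vanishing (Example \ref{exmaple2.12}). Hence every cross term between distinct basis elements drops out and the two brackets factor as sums of single-index contributions. I would therefore fix an even $a$, abbreviate $X_{1} = \partial\phi_{a}\phi^{\bar{a}}$, $X_{2} = \phi_{a}\partial\phi^{\bar{a}}$, $X_{3} = D\phi_{a}D\phi^{\bar{a}}$, $Y_{1} = D\phi_{a}\phi^{\bar{a}}$, $Y_{2} = \phi_{a}D\phi^{\bar{a}}$, so that the single-index pieces read $T^{ch}_{sh}|_{a} = (t_{a}+1)X_{1} + t_{a}X_{2} + X_{3}$ and $J^{ch}_{sh}|_{a} = (t_{a}+1)Y_{1} + t_{a}Y_{2}$. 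Expanding bilinearly and substituting the ten brackets (4.12)--(4.21) of Lemma \ref{lemma4.13} then reduces each relation to collecting the coefficients of $X_{1}, X_{2}, X_{3}, Y_{1}, Y_{2}, \phi_{a}\phi^{\bar{a}}$ and their derivatives.

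For the current relation I expect the coefficients of $X_{1}, X_{2}, X_{3}$ in $[J^{ch}_{sh} \ _{\Lambda} \ J^{ch}_{sh}]|_{a}$ to collapse to $t_{a}+1$, $t_{a}$, $1$, thereby reproducing $T^{ch}_{sh}|_{a}$, with the stray $\lambda(\phi_{a}\phi^{\bar{a}})$ contributions from (4.19) and (4.20) cancelling and the $\lambda\chi$ terms from (4.18) and (4.21) combining to $(2t_{a}+1)\lambda\chi$; this matches $\tfrac{c_{sh}}{3}\lambda\chi$ because the per-index central charge is $\tfrac{6t_{a}+3}{3} = 2t_{a}+1$. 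For $[T^{ch}_{sh} \ _{\Lambda} \ J^{ch}_{sh}]|_{a}$ I expect the $\partial Y_{i}$, $\lambda Y_{i}$ and $\chi D Y_{i}$ coefficients to assemble into exactly $(2\partial + 2\lambda + \chi D)J^{ch}_{sh}|_{a}$, which shows that $J^{ch}_{sh}$ is primary of conformal weight one.

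The hard part will not be any individual bracket but the bookkeeping of the cancellations. The brackets of Lemma \ref{lemma4.13} carry several unwanted terms --- $\chi(D\phi_{a}D\phi^{\bar{a}})$, $\lambda\chi(\phi_{a}\phi^{\bar{a}})$ and $\lambda^{2}$ --- and the entire content of the theorem is that, after weighting by the coefficients $(t_{a}+1, t_{a}, 1)$, these vanish identically. The underlying mechanism is the repeated occurrence of the factor $(t_{a}+1)-t_{a} = 1$ and of the perfect square $\big((t_{a}+1)-t_{a}\big)^{2} = 1$, which is precisely the rigid coefficient pattern already forced in Ansatz \ref{ansatz4.5}; it is this rigidity that makes the deformation compatible with the $N=2$ symmetry for \emph{every} value of $t_{a}$. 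Finally, for an odd element $a$ the same computation carries over after interchanging the roles of $\phi_{a}$ and $\phi^{\bar{a}}$, exactly as in the proof of Theorem \ref{theorem4.6} and in Remark \ref{remark4.11}, and summing the single-index identities over all $a \in A$ yields the three relations in full with central charge $c_{sh} = \sum_{a \in A}(6t_{a}+3)$.
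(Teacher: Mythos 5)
Your proposal is correct and takes essentially the same route as the paper's proof: both treat the $[T^{ch}_{sh}\ _{\Lambda}\ T^{ch}_{sh}]$ relation as already settled by Theorem \ref{theorem4.6}, reduce to a single basis index, verify the remaining two relations of Remark \ref{remark3.5} by bilinear expansion over the brackets of Lemma \ref{lemma4.13}, and handle odd indices by the swap $\phi_{a} \leftrightarrow \phi^{\bar{a}}$. The only difference is bookkeeping: the paper organizes the expansion into the four standard/ghost cross-brackets (e.g. $[T^{ch}_{st}\ _{\Lambda}\ J^{ch}_{ghost}]$, $[T^{ch}_{ghost}\ _{\Lambda}\ J^{ch}_{st}]$), whereas you carry the weights $(t_{a}+1, t_{a}, 1)$ and $(t_{a}+1, t_{a})$ on the monomials directly --- the same cancellations occur either way.
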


\begin{proof}
For the $A_{\bar{0}}$-parts, by using the Lemma \ref{lemma4.13}, we have the following $\Lambda$-brackets:
\begin{align*}
\begin{split}
[T^{ch}_{st} \ _{\Lambda} \ J^{ch}_{st}] & = (2\partial + 2\lambda + \chi D)J^{ch}_{st},
\\
[T^{ch}_{st} \ _{\Lambda} \ J^{ch}_{ghost}] & = (2\partial + 2\lambda + \chi D)J^{ch}_{ghost} + \sum\limits_{a \in A_{\bar{0}}}t_{a}\lambda^{2} + \sum\limits_{a \in A_{\bar{0}}}t_{a}\lambda\chi\phi_{a}\phi^{\bar{a}},
\\
[T^{ch}_{ghost} \ _{\Lambda} \ J^{ch}_{st}] & = - \sum\limits_{a \in A_{\bar{0}}}t_{a}\lambda^{2} - \sum\limits_{a \in A_{\bar{0}}}t_{a}\lambda\chi\phi_{a}\phi^{\bar{a}},
\\
[T^{ch}_{ghost} \ _{\Lambda} \ J^{ch}_{ghost}] & = 0,
\end{split}
\end{align*}
hence, we have
\begin{align*}
\begin{split}
[T^{ch}_{sh} \ _{\Lambda} \ J^{ch}_{sh}] & = (2\partial + 2\lambda + \chi D)J^{ch}_{sh}.
\end{split}
\end{align*}
Also, we can check the following $\Lambda$-brackets:
\begin{align*}
\begin{split}
[J^{ch}_{st} \ _{\Lambda} \ J^{ch}_{st}] & = T^{ch}_{st} + \sum\limits_{a \in A_{\bar{0}}}\lambda\chi,
\\
[J^{ch}_{st} \ _{\Lambda} \ J^{ch}_{ghost}] & = T^{ch}_{ghost} + \sum\limits_{a \in A_{\bar{0}}}t_{a}\lambda\phi_{a}\phi^{\bar{a}} + \sum\limits_{a \in A_{\bar{0}}}t_{a}\lambda\chi,
\\
[J^{ch}_{ghost} \ _{\Lambda} \ J^{ch}_{st}] & = - \sum\limits_{a \in A_{\bar{0}}}t_{a}\lambda\phi_{a}\phi^{\bar{a}} + \sum\limits_{a \in A_{\bar{0}}}t_{a}\lambda\chi,
\\
[J^{ch}_{ghost} \ _{\Lambda} \ J^{ch}_{ghost}] & = 0,
\end{split}
\end{align*}
thus we obtain the following result:
\begin{align*}
\begin{split}
[J^{ch}_{sh} \ _{\Lambda} \ J^{ch}_{sh}] & = T^{ch}_{sh} + \sum\limits_{a \in A_{\bar{0}}}(2t_{a} + 1)\lambda\chi.
\end{split}
\end{align*}
For the $A_{\bar{1}}$-part, we have the same result by replacing the fields $\phi_{a}$ with $\phi^{\bar{a}}$, and vice versa.
Hence, the fields $T^{ch}_{sh}$ and $J^{ch}_{sh}$ satisfy the $N=2$ superconformal symmetry.
\end{proof}

\begin{remark}\label{remark4.15}
\rm Indeed, we can write (see Example 3.13 in \cite{BZHS08}) the $N=1$ superfields $T^{ch}_{sh}$ and $J^{ch}_{sh}$ as:
\begin{align}
\begin{split}
T^{ch}_{sh}(z, \theta) & = (G^{ch, +}_{sh}(z) + G^{ch, -}_{sh}(z)) + 2\theta L^{ch}_{sh}(z),
\\
J^{ch}_{sh}(z, \theta) & = J^{ch}_{sh}(z) + \theta(G^{ch, -}_{sh}(z) - G^{ch, +}_{sh}(z)),
\end{split}
\end{align}
where
\begin{align*}
\begin{split}
L^{ch}_{sh}(z) = \sum\limits_{a \in A}\frac{1}{2}
\left(
\begin{aligned}
& (t_{a}-1)c\partial b + (t_{a}+1)\partial cb \\
& + t_{a}\gamma\partial\beta + (t_{a}+2)\partial\gamma\beta \\
\end{aligned}
\right),
\end{split}
\end{align*}
\begin{align*}
\begin{split}
G^{ch, +}_{sh}(z) = \sum\limits_{a \in A}c\beta,
\end{split}
\end{align*}
\begin{align*}
\begin{split}
G^{ch, -}_{sh}(z) = \sum\limits_{a \in A}(t_{a}+1)\partial\gamma b + (t_{a})\gamma\partial b,
\end{split}
\end{align*}
\begin{align*}
\begin{split}
J^{ch}_{sh}(z) = \sum\limits_{a \in A}(t_{a}+1)cb + t_{a}\gamma\beta,
\end{split}
\end{align*}
from the equations (4.1), (4.2), (4.3), (4.10) and (4.11). Then we know that the fields $L^{ch}_{sh}$, $G^{ch, +}_{sh}$, $G^{ch, -}_{sh}$ and $J^{ch}_{sh}$ satisfy the non-SUSY lambda bracket relations of $N=2$ superconformal vertex algebras.
\end{remark}

\begin{remark}\label{remark4.16}
\rm From Definition 4.10 of \cite{HK07}, recall that the lambda brackets of $N_{K}=2$ SUSY Lie conformal algebras are given by:
\begin{align*}
\begin{split}
[u \ _{\Lambda} \ v]
 = \sum\limits_{j \geq 0}\frac{\lambda^{j}}{j!}{u}_{(j|00)}v
 &- {\chi}^{1}\sum\limits_{j \geq 0}\frac{\lambda^{j}}{j!}{u}_{(j|10)}v
\\
 &- {\chi}^{2}\sum\limits_{j \geq 0}\frac{\lambda^{j}}{j!}{u}_{(j|01)}v
 - {\chi}^{1}{\chi}^{2}\sum\limits_{j \geq 0}\frac{\lambda^{j}}{j!}{u}_{(j|11)}v.
\end{split}
\end{align*}
On the other hand, by sesquilinearity of $N_{K}=2$ SUSY LCAs, we have:
\begin{align*}
\begin{split}
{u}_{(j|00)}v &= - D^{1}D^{2}{u}_{(j|11)}v,
\\
{u}_{(j|10)}v &= D^{2}{u}_{(j|11)}v,
\\
{u}_{(j|01)}v &= - D^{1}{u}_{(j|11)}v,
\end{split}
\end{align*}
hence we obtain the following equality:
\begin{align}
\begin{split}
[u \ _{\Lambda} \ v] =
 - [D^{1}D^{2}u \ _{\lambda} \ v]
 &- {\chi}^{1}[D^{2}u \ _{\lambda} \ v]
\\
 &+ {\chi}^{2}[D^{1}u \ _{\lambda} \ v]
 - {\chi}^{1}{\chi}^{2}[u \ _{\lambda} \ v].
\end{split}
\end{align}
\end{remark}

\begin{remark}\label{remark4.17}
\rm Following the Remark \ref{remark4.15}, consider the $N_{K}=2$ SUSY vertex algebra structure (see Remark 2.8 in \cite{HK07}) where the odd endomorphisms are defined by
\begin{align*}
\begin{split}
D^{1} = (G^{ch, +}_{sh} + G^{ch, -}_{sh})_{(0)}& = (G^{ch, +}_{st} + G^{ch, -}_{st})_{(0)},
\\
D^{2} = i(G^{ch, +}_{sh} - G^{ch, -}_{sh})_{(0)}& = i(G^{ch, +}_{st} - G^{ch, -}_{st})_{(0)},
\end{split}
\end{align*}
and the state-field correspondence is given by
\begin{align*}
\begin{split}
Y(v, z, {\theta}^{1}, {\theta}^{2}) = Y(v, z) + {\theta}^{1}Y(D^{1}v, z) + {\theta}^{2}Y(D^{2}v, z) + {\theta}^{2}{\theta}^{1}Y(D^{1}D^{2}v, z).
\end{split}
\end{align*}
Let $P^{ch}_{sh}$ be the vector defined as:
\begin{align*}
\begin{split}
P^{ch}_{sh} & = -i{J^{ch}_{sh}}_{(-1)}\ket{0}
\\
& = -i\sum\limits_{a \in A}(t_{a}+1)cb + t_{a}\gamma\beta.
\end{split}
\end{align*}
Then the $N=2$ superfield corresponding to $P^{ch}_{sh}$ is expanded as:
\begin{align}
\begin{split}
Y(P^{ch}_{sh}, z, {\theta}^{1}, {\theta}^{2})
= &- iY(J^{ch}_{sh}, z) - i{\theta}^{1}Y(D^{1}J^{ch}_{sh}, z)
\\
&- i{\theta}^{2}Y(D^{2}J^{ch}_{sh}, z) - i{\theta}^{2}{\theta}^{1}Y(D^{1}D^{2}J^{ch}_{sh}, z)
\\
= &- iY(J^{ch}_{sh}, z) - i{\theta}^{1}Y(G^{ch, -}_{sh} - G^{ch, +}_{sh}, z)
\\
&- {\theta}^{2}Y(G^{ch, +}_{sh} + G^{ch, -}_{sh}, z) - {\theta}^{2}{\theta}^{1}Y(2L^{ch}_{sh}, z)
\\
= &- iY(J^{ch}_{sh}, z, {\theta}^{1}) - {\theta}^{2}Y(T^{ch}_{sh}, z, {\theta}^{1}),
\end{split}
\end{align}
which follows from the non-SUSY lambda bracket relations of $N=2$ superconformal vertex algebras.
\end{remark}

\begin{proposition}\label{proposition4.18}
The field $P^{ch}_{sh}$ defined in Remark \ref{remark4.17} satisfies the following lambda bracket:
\begin{align*}
\begin{split}
[P^{ch}_{sh} \ _{\Lambda} \ P^{ch}_{sh}]_{N_{K}=2}
= (2\partial + 2\lambda + {\chi}^{1}D^{1} + {\chi}^{2}D^{2})P^{ch}_{sh} 
+ \frac{c_{sh}}{3}\lambda{\chi}^{1}{\chi}^{2}.
\end{split}
\end{align*}
Hence, the shifted $N=2$ superconformal symmetry in Theorem \ref{theorem4.14} is generated by one $N=2$ superfield $P^{ch}_{sh}$ in the $N_{K}=2$ SUSY vertex algebra formalism.
\end{proposition}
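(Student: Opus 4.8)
The plan is to reduce the $N_{K}=2$ self-bracket of $P^{ch}_{sh}$ to ordinary (non-SUSY) $\lambda$-brackets, for which the $N=2$ superconformal relations of Remark \ref{remark4.15} are already available. Concretely, I would apply the identity of Remark \ref{remark4.16} with $u = v = P^{ch}_{sh}$, which rewrites $[P^{ch}_{sh} \ _{\Lambda} \ P^{ch}_{sh}]_{N_{K}=2}$ as the combination
\[
-[D^{1}D^{2}P^{ch}_{sh} \ _{\lambda} \ P^{ch}_{sh}] - \chi^{1}[D^{2}P^{ch}_{sh} \ _{\lambda} \ P^{ch}_{sh}] + \chi^{2}[D^{1}P^{ch}_{sh} \ _{\lambda} \ P^{ch}_{sh}] - \chi^{1}\chi^{2}[P^{ch}_{sh} \ _{\lambda} \ P^{ch}_{sh}].
\]
Thus the entire computation rests on only four non-SUSY $\lambda$-brackets, each of which is a bracket among the familiar fields.

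Next, I would read off the derivatives $D^{i}P^{ch}_{sh}$ from the superfield expansion in Remark \ref{remark4.17}. Matching the coefficients of $1$, $\theta^{1}$, $\theta^{2}$ and $\theta^{2}\theta^{1}$ against the general $N=2$ superfield gives the identifications $P^{ch}_{sh} = -iJ^{ch}_{sh}$, $D^{1}P^{ch}_{sh} = i(G^{ch, +}_{sh} - G^{ch, -}_{sh})$, $D^{2}P^{ch}_{sh} = -(G^{ch, +}_{sh} + G^{ch, -}_{sh})$ and $D^{1}D^{2}P^{ch}_{sh} = -2L^{ch}_{sh}$. With these substituted, each of the four brackets becomes a bracket among the non-SUSY fields $L^{ch}_{sh}$, $J^{ch}_{sh}$ and $G^{ch, \pm}_{sh}$, which by Remark \ref{remark4.15} satisfy the relations of Definition \ref{definition3.3}.

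I would then feed in the relevant non-SUSY relations: $[L^{ch}_{sh} \ _{\lambda} \ J^{ch}_{sh}] = (\partial+\lambda)J^{ch}_{sh}$, $[G^{ch, \pm}_{sh} \ _{\lambda} \ J^{ch}_{sh}] = \mp G^{ch, \pm}_{sh}$, and $[J^{ch}_{sh} \ _{\lambda} \ J^{ch}_{sh}] = \tfrac{c}{3}\lambda$. The first bracket reproduces $(2\partial+2\lambda)P^{ch}_{sh}$ after carrying the factor $-i$; the $\chi^{1}$ summand collapses to $\chi^{1}D^{1}P^{ch}_{sh}$ and the $\chi^{2}$ summand to $\chi^{2}D^{2}P^{ch}_{sh}$, precisely by the identifications above; and the last summand yields $\tfrac{c}{3}\lambda\chi^{1}\chi^{2}$. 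To finish, I would identify $c$ with $c_{sh}$: since $J^{ch}_{sh}$ is the current of the structure of Theorem \ref{theorem4.14}, comparing its $N_{K}=1$ self-bracket $T^{ch}_{sh} + \sum_{a \in A}(2t_{a}+1)\lambda\chi$ with the normalization $T + \tfrac{c}{3}\lambda\chi$ of Remark \ref{remark3.5} forces $c = \sum_{a \in A}(6t_{a}+3) = c_{sh}$.

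The computation is essentially bookkeeping, so the main obstacle is keeping the factors of $i$ and the signs consistent throughout. The delicate point is that the two middle summands must land exactly on $\chi^{1}D^{1}P^{ch}_{sh}$ and $\chi^{2}D^{2}P^{ch}_{sh}$, with the correct pairing of $\chi^{1}$ with $D^{1}$ and $\chi^{2}$ with $D^{2}$; it is this matching that makes the output coincide with the $N=2$ super-Virasoro relation of Definition \ref{definition3.6} rather than a sign-twisted variant, and it is the only place where a careless sign would break the conclusion.
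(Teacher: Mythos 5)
Your proposal is correct and follows essentially the same route as the paper's proof: both apply the reduction identity of Remark \ref{remark4.16} with $u = v = P^{ch}_{sh}$, identify $D^{1}P^{ch}_{sh}$, $D^{2}P^{ch}_{sh}$, $D^{1}D^{2}P^{ch}_{sh}$ with $\pm i(G^{ch,+}_{sh}-G^{ch,-}_{sh})$, $-(G^{ch,+}_{sh}+G^{ch,-}_{sh})$, $-2L^{ch}_{sh}$ via Remark \ref{remark4.17}, and close the computation using the non-SUSY $N=2$ relations of Remark \ref{remark4.15}. Your explicit verification that $c = c_{sh}$ by comparing $[J^{ch}_{sh}\ _{\Lambda}\ J^{ch}_{sh}]$ from Theorem \ref{theorem4.14} with the normalization of Remark \ref{remark3.5} is a detail the paper leaves implicit, but it does not change the argument.
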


\begin{proof}
Note that the fields $L^{ch}_{sh}$, $G^{ch, +}_{sh}$, $G^{ch, -}_{sh}$, $J^{ch}_{sh}$ in Remark \ref{remark4.15} satisfy the non-SUSY lambda bracket relations of $N=2$ superconformal vertex algebras (see Definition \ref{definition3.3}), and by the definitions of $D^{1}$ and $D^{2}$ in Remark \ref{remark4.17}, we have:
\begin{align*}
\begin{split}
{D}^{1}{J}^{ch}_{sh} &= G^{ch, -}_{sh} - G^{ch, +}_{sh},
\\
i{D}^{2}{J}^{ch}_{sh} &= G^{ch, +}_{sh} + G^{ch, -}_{sh},
\\
i{D}^{1}{D}^{2}{J}^{ch}_{sh} &= 2L^{ch}_{sh}.
\end{split}
\end{align*}
Then, from the equality (4.23) in Remark \ref{remark4.16}, we have:
\begin{align*}
\begin{split}
[P^{ch}_{sh} \ _{\Lambda} \ P^{ch}_{sh}]
= &- i[i{D}^{1}{D}^{2}{J}^{ch}_{sh} \ _{\lambda} \ {J}^{ch}_{sh}]
 - i{\chi}^{1}[i{D}^{2}{J}^{ch}_{sh} \ _{\lambda} \ {J}^{ch}_{sh}]
\\
 & \qquad \qquad \quad
 - {\chi}^{2}[{D}^{1}{J}^{ch}_{sh} \ _{\lambda} \ {J}^{ch}_{sh}]
 + {\chi}^{1}{\chi}^{2}[{J}^{ch}_{sh} \ _{\lambda} \ {J}^{ch}_{sh}]
\\
= &- i[2L^{ch}_{sh} \ _{\lambda} \ {J}^{ch}_{sh}]
 - i{\chi}^{1}[G^{ch, +}_{sh} + G^{ch, -}_{sh} \ _{\lambda} \ {J}^{ch}_{sh}]
\\
 & \qquad \qquad \quad
 - {\chi}^{2}[G^{ch, -}_{sh} - G^{ch, +}_{sh} \ _{\lambda} \ {J}^{ch}_{sh}]
 + {\chi}^{1}{\chi}^{2}[{J}^{ch}_{sh} \ _{\lambda} \ {J}^{ch}_{sh}]
\\
= &- i(2\partial + 2\lambda){J}^{ch}_{sh}
 - i{\chi}^{1}(G^{ch, -}_{sh} - G^{ch, +}_{sh})
\\
 & \qquad \qquad \quad
 - {\chi}^{2}(G^{ch, +}_{sh} + G^{ch, -}_{sh})
 + {\chi}^{1}{\chi}^{2}\frac{c_{sh}}{3}\lambda
\\
= &- i(2\partial + 2\lambda){J}^{ch}_{sh}
 - i{\chi}^{1}{D}^{1}{J}^{ch}_{sh} - i{\chi}^{2}{D}^{2}{J}^{ch}_{sh}
 + {\chi}^{1}{\chi}^{2}\frac{c_{sh}}{3}\lambda
\\
= & \ (2\partial + 2\lambda + {\chi}^{1}D^{1} + {\chi}^{2}D^{2})(-i{J}^{ch}_{sh})
+ \frac{c_{sh}}{3}\lambda{\chi}^{1}{\chi}^{2}.
\end{split}
\end{align*}
\end{proof}

\begin{remark}\label{remark4.19}
\rm Now, using the notations in Remark \ref{remark4.3}, let us define the $N=2$ superfields ${\Phi}_{a}$ and ${\Phi}^{\bar{a}}$ as:
\begin{align*}
\begin{split}
{\Phi}_{a} 
&= Y(\gamma, z, {\theta}^{1}, {\theta}^{2}) 
\\
&= Y(\gamma, z) + {\theta}^{1}Y({D}^{1}\gamma, z) + {\theta}^{2}Y({D}^{2}\gamma, z) + {\theta}^{2}{\theta}^{1}Y({D}^{1}{D}^{2}\gamma, z)
\\
&= Y(\gamma, z) + {\theta}^{1}Y(c, z) + {\theta}^{2}Y(ic, z) + {\theta}^{2}{\theta}^{1}Y(i\partial\gamma, z)
\\
&= {\phi}_{a} + i{\theta}^{2}{D}^{1}{\phi}_{a},
\end{split}
\end{align*}
\begin{align*}
\begin{split}
{\Phi}^{\bar{a}}
&= Y(b, z, {\theta}^{1}, {\theta}^{2}) 
\\
&= Y(b, z) + {\theta}^{1}Y({D}^{1}b, z) + {\theta}^{2}Y({D}^{2}b, z) + {\theta}^{2}{\theta}^{1}Y({D}^{1}{D}^{2}b, z)
\\
&= Y(b, z) + {\theta}^{1}Y(\beta, z) + {\theta}^{2}Y(i\beta, z) + {\theta}^{2}{\theta}^{1}Y(i\partial b, z)
\\
&= {\phi}^{\bar{a}} + i{\theta}^{2}{D}^{1}{\phi}^{\bar{a}},
\end{split}
\end{align*}
where the $N_{K}=2$ SUSY vertex algebra structure is given by Remark \ref{remark4.17}. Then the $\Lambda$-bracket of ${\Phi}_{a}$ and ${\Phi}^{\bar{a}}$ is:
\begin{align}
\begin{split}
[{\Phi}_{a} \ _{\Lambda} \ {\Phi}^{\bar{a}}] = - i{\chi}^{1} + {\chi}^{2}.
\end{split}
\end{align}
We also know that:
\begin{align*}
\begin{split}
{D}^{1}{\Phi}_{a} = {D}^{1}{\phi}_{a} - i{\theta}^{2}\partial{\phi}_{a},
\quad & 
{D}^{1}{\Phi}^{\bar{a}} = {D}^{1}{\phi}^{\bar{a}} - i{\theta}^{2}\partial{\phi}^{\bar{a}},
\\
{D}^{2}{\Phi}_{a} = i{D}^{1}{\phi}_{a} + {\theta}^{2}\partial{\phi}_{a},
\quad & 
{D}^{2}{\Phi}^{\bar{a}} = i{D}^{1}{\phi}^{\bar{a}} + {\theta}^{2}\partial{\phi}^{\bar{a}},
\end{split}
\end{align*}
so that, for any even element $a \in A_{\bar{0}}$, we have:
\begin{align*}
\begin{split}
{D}^{1}{\Phi}_{a}{\Phi}^{\bar{a}} &=
{D}^{1}{\phi}_{a}{\phi}^{\bar{a}} - i{\theta}^{2}(\partial{\phi}_{a}{\phi}^{\bar{a}} + {D}^{1}{\phi}_{a}{D}^{1}{\phi}^{\bar{a}}),
\\
{\Phi}_{a}{D}^{1}{\Phi}^{\bar{a}} &=
{\phi}_{a}{D}^{1}{\phi}^{\bar{a}} - i{\theta}^{2}({\phi}_{a}\partial{\phi}^{\bar{a}} - {D}^{1}{\phi}_{a}{D}^{1}{\phi}^{\bar{a}}),
\end{split}
\end{align*}
and, for any odd element $a \in A_{\bar{1}}$:
\begin{align*}
\begin{split}
{D}^{1}{\Phi}^{\bar{a}}{\Phi}_{a} &=
{D}^{1}{\phi}^{\bar{a}}{\phi}_{a} - i{\theta}^{2}({\phi}_{a}\partial{\phi}^{\bar{a}} + {D}^{1}{\phi}_{a}{D}^{1}{\phi}^{\bar{a}}),
\\
{\Phi}^{\bar{a}}{D}^{1}{\Phi}_{a} &=
{\phi}^{\bar{a}}{D}^{1}{\phi}_{a} - i{\theta}^{2}(\partial{\phi}_{a}{\phi}^{\bar{a}} - {D}^{1}{\phi}_{a}{D}^{1}{\phi}^{\bar{a}}).
\end{split}
\end{align*}
Therefore, the $N=2$ superfield $P^{ch}_{sh}$ in (4.24) can be written as:
\begin{align*}
\begin{split}
P^{ch}_{sh}
= - i\left(
\begin{aligned}
& \sum\limits_{a \in A_{\bar{0}}}(t_{a}+1){D}^{1}{\Phi}_{a}{\Phi}^{\bar{a}} + t_{a}{\Phi}_{a}{D}^{1}{\Phi}^{\bar{a}}
\\
& \quad + \sum\limits_{a \in A_{\bar{1}}}(t_{a}+1){D}^{1}{\Phi}^{\bar{a}}{\Phi}_{a} + t_{a}{\Phi}^{\bar{a}}{D}^{1}{\Phi}_{a}
\\
\end{aligned}
\right).
\end{split}
\end{align*}
Then, by using the $N_{K}=2$ lambda bracket (4.25) and the relations:
\begin{align}
\begin{split}
i{D}^{1}{\Phi}_{a} = {D}^{2}{\Phi}_{a}, \quad i{D}^{1}{\Phi}^{\bar{a}} = {D}^{2}{\Phi}^{\bar{a}},
\end{split}
\end{align}
one can also check directly the field $P^{ch}_{sh}$ is an $N=2$ superconformal vector (see Theorem \ref{theorem4.21}).
\end{remark}

\begin{definition}\label{definition4.20}
\rm Let $U$ be a finite-dimensional vector superspace and $A = A_{\bar{0}} \cup A_{\bar{1}}$ be a basis of $U$, where $A_{\bar{0}}$ is the even part and $A_{\bar{1}}$ is the odd part of the basis. Define two vector superspaces:
\begin{align*}
\begin{split}
\Phi_{U} \simeq U, \quad \Phi^{\bar{U}} \simeq \Pi{U}^{*},
\end{split}
\end{align*}
whose basis elements are denoted by $\Phi_{a}$ and $\Phi^{\bar{a}}$ respectively. Define the $\Lambda$-brackets on $R^{bc\beta\gamma}_{N=2} = {\mathcal{H}}_{N=2} \otimes ({\Phi}_{U} \oplus {\Phi}^{\bar{U}})$ by:
\begin{align*}
\begin{split}
[{\Phi}_{a} \ _{\Lambda} \ {\Phi}^{\bar{b}}] = {\delta}_{ab}(- i{\chi}^{1} + {\chi}^{2}), \quad [{\Phi}_{a} \ _{\Lambda} \ {\Phi}_{b}] = [{\Phi}^{\bar{a}} \ _{\Lambda} \ {\Phi}^{\bar{b}}] = 0,
\end{split}
\end{align*}
for $a, b \in A$. Then the ${\mathcal{H}}_{N=2}$-module $R^{bc\beta\gamma}_{N=2}$ is an $N_{K} = 2$ SUSY Lie conformal algebra. Now let $V(R^{bc\beta\gamma}_{N=2})$ be the universal enveloping $N_{K} = 2$ SUSY vertex algebra of $R^{bc\beta\gamma}_{N=2}$, and let $I(R^{bc\beta\gamma}_{N=2})$ be the ideal generated by elements of the form (4.26). Then we define the \textit{$N=2$ supersymmetric $bc$-$\beta\gamma$ system} associated to the vector superspace $U$ as the quotient $V(R^{bc\beta\gamma}_{N=2})/I(R^{bc\beta\gamma}_{N=2})$.
\end{definition}

\begin{theorem}\label{theorem4.21}
Let $P^{ch}_{sh}$ be the vector defined by
\begin{align*}
\begin{split}
P^{ch}_{sh} = - i\left(
\begin{aligned}
& \sum\limits_{a \in A_{\bar{0}}}(t_{a}+1){D}^{1}{\Phi}_{a}{\Phi}^{\bar{a}} + t_{a}{\Phi}_{a}{D}^{1}{\Phi}^{\bar{a}}
\\
& \quad + \sum\limits_{a \in A_{\bar{1}}}(t_{a}+1){D}^{1}{\Phi}^{\bar{a}}{\Phi}_{a} + t_{a}{\Phi}^{\bar{a}}{D}^{1}{\Phi}_{a}
\\
\end{aligned}
\right),
\end{split}
\end{align*}
where $(t_{a})_{a \in A}$ is a sequence of complex numbers. Then the vector $P^{ch}_{sh}$ is an $N=2$ superconformal vector of the $N=2$ SUSY $bc$-$\beta\gamma$ system with the central charge 
\begin{align*}
\begin{split}
c_{sh} = \sum\limits_{a \in A}(6t_{a} + 3).
\end{split}
\end{align*}
\end{theorem}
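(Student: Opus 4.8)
The plan is to verify the three conditions characterizing an $N=2$ superconformal vector in Definition \ref{definition3.6}, working inside the $N=2$ SUSY $bc$-$\beta\gamma$ system of Definition \ref{definition4.20}. The key observation is that every bracket can be reduced, through $N_{K}=2$ sesquilinearity, the non-commutative Wick formula of Definition \ref{definition2.17}, and the Leibniz rule for $D^{1}$, $D^{2}$, to the single fundamental $\Lambda$-bracket (4.25) together with the defining relations (4.26). Throughout I would keep in mind the identification of the abstract generators $\Phi_{a}$, $\Phi^{\bar{a}}$ with the superfields constructed in Remark \ref{remark4.19}, under which the vector $P^{ch}_{sh}$ of this theorem becomes the vector $P^{ch}_{sh}$ of Remark \ref{remark4.17}.

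The first and main condition is the $N=2$ super-Virasoro relation
\[
[P^{ch}_{sh} \ _{\Lambda} \ P^{ch}_{sh}] = (2\partial + 2\lambda + {\chi}^{1}D^{1} + {\chi}^{2}D^{2})P^{ch}_{sh} + \frac{c_{sh}}{3}\lambda{\chi}^{1}{\chi}^{2}.
\]
Under the identification above this is exactly Proposition \ref{proposition4.18}; to prove it intrinsically I would expand $[P^{ch}_{sh} \ _{\Lambda} \ P^{ch}_{sh}]$ by the Wick formula, collapsing each term to (4.25) and eliminating $D^{2}$ in favour of $D^{1}$ via (4.26). The central-charge term then emerges from the double-contraction integrals of the Wick formula, contributing $6t_{a}+3$ for each generating pair $(\Phi_{a}, \Phi^{\bar{a}})$ and summing to $c_{sh} = \sum_{a \in A}(6t_{a}+3)$.

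For the second condition I must establish the mode identities $(P^{ch}_{sh})_{(0|00)} = 2\partial$, $(P^{ch}_{sh})_{(0|10)} = -D^{1}$, and $(P^{ch}_{sh})_{(0|01)} = D^{2}$ as operators on the whole algebra. These are read off as the $\lambda$-independent, $\chi$-graded coefficients of $[P^{ch}_{sh} \ _{\Lambda} \ -]$ in the expansion of Definition \ref{definition2.16}. First I would compute $[P^{ch}_{sh} \ _{\Lambda} \ \Phi_{a}]$ and $[P^{ch}_{sh} \ _{\Lambda} \ \Phi^{\bar{a}}]$ on generators; equivalently, these follow from the $N_{K}=1$ data of Theorem \ref{theorem4.6} and Theorem \ref{theorem4.14} via the superfield decomposition (4.24) and the $N_{K}=2$-to-$N_{K}=1$ translation (4.23). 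Since $2\partial$, $D^{1}$, $D^{2}$ act as (super)derivations and the relevant $(0|\ast\ast)$ parts of the Wick formula respect the normally ordered product, an induction on word length in the $\Phi_{a}$, $\Phi^{\bar{a}}$ extends the identities from the generators to the entire system. The third condition, diagonalizability of $(P^{ch}_{sh})_{(1|00)}$, then follows because this operator is the $N=2$ energy operator: by (3.3) a weight-$\Delta$ vector satisfies $(P^{ch}_{sh})_{(1|00)}v = 2\Delta v$, and a computation parallel to Proposition \ref{proposition4.9} assigns each generator a definite conformal weight, which is additive under normally ordered products and shifted by fixed amounts under $\partial$, $D^{1}$, $D^{2}$; hence the algebra decomposes as a direct sum of weight spaces.

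The step I expect to be the main obstacle is not any single computation but the sign and factor-of-$i$ bookkeeping of the $N_{K}=2$ formalism: one must track the minus signs in the $\chi^{1}$, $\chi^{2}$ expansion of Definition \ref{definition2.16} and the imaginary units appearing in the superfields of Remark \ref{remark4.19} and in (4.26), so that the linear-in-$\chi$ part of $[P^{ch}_{sh} \ _{\Lambda} \ -]$ matches $\chi^{1}D^{1} + \chi^{2}D^{2}$ precisely and the mode identities emerge with the correct signs $-D^{1}$ and $+D^{2}$.
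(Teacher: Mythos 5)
Your proposal is correct, and its core coincides with the paper's proof: the paper establishes the super-Virasoro relation intrinsically, one pair $(\Phi_{a}, \Phi^{\bar{a}})$ at a time, by rewriting the summand as ${D}^{1}{\Phi}_{a}{\Phi}^{\bar{a}} + t_{a}{D}^{1}({\Phi}_{a}{\Phi}^{\bar{a}})$ and assembling the auxiliary brackets (4.27)--(4.33) via the non-commutative Wick formula, sesquilinearity and skew-symmetry, everything reducing to the fundamental bracket (4.25); the central term $-(2t_{a}+1)\lambda{\chi}^{1}{\chi}^{2}$ per pair then yields $c_{sh} = \sum_{a}(6t_{a}+3)$ after the overall factor $(-i)^{2}$ is taken into account. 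Two points of divergence are worth noting. First, your preferred shortcut (``under the identification this is exactly Proposition \ref{proposition4.18}'') is not free: Proposition \ref{proposition4.18} lives in the $N_{K}=2$ structure placed on $F^{ch}_{N=1}$ in Remark \ref{remark4.17}, whereas Theorem \ref{theorem4.21} concerns the abstract quotient $V(R^{bc\beta\gamma}_{N=2})/I(R^{bc\beta\gamma}_{N=2})$ of Definition \ref{definition4.20}, so one would first have to show that the superfields of Remark \ref{remark4.19} induce a well-defined bracket-preserving map out of that quotient --- precisely the verification that the paper's intrinsic computation replaces; since you also propose the intrinsic route, this is a matter of emphasis rather than a gap. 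Second, you go beyond the paper by checking the remaining two conditions of Definition \ref{definition3.6} (the zero-mode identities and diagonalizability of $(P^{ch}_{sh})_{(1|00)}$): the paper's proof verifies only the $\Lambda$-bracket relation and leaves these implicit, and your induction on word length --- using that the Wick integral contributes no terms relevant to the $(0|{*}{*})$ modes or to the $\lambda$-linear eigenvalue, so zero modes act as derivations and weights add on normally ordered monomials --- is the standard way to make that step rigorous.
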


\begin{proof}
Let $a$ be an even element of $A$. From the non-commutative Wick formula of $N_{K}=2$ SUSY LCAs, we have:
\begin{align*}
\begin{split}
[{\Phi}_{a} \ _{\Lambda} \ {D}^{1}{\Phi}_{a}{\Phi}^{\bar{a}}] = (i{\chi}^{1}{D}^{1} - {\chi}^{2}{D}^{1}){\Phi}_{a},
\end{split}
\end{align*}
hence, by skew-symmetry:
\begin{align}
\begin{split}
[{D}^{1}{\Phi}_{a}{\Phi}^{\bar{a}} \ _{\Lambda} \ {\Phi}_{a}] &= - [{\Phi}_{a} \ _{- \nabla - \Lambda} \ {D}^{1}{\Phi}_{a}{\Phi}^{\bar{a}}]
\\
&= (-i(-{D}^{1}-{\chi}^{1}){D}^{1} + (-{D}^{2}-{\chi}^{2}){D}^{1}){\Phi}_{a}
\\
&= i(2{\partial} + {\chi}^{1}{D}^{1} + {\chi}^{2}{D}^{2}){\Phi}_{a}.
\end{split}
\end{align}
Therefore we have:
\begin{align}
\begin{split}
[{D}^{1}{\Phi}_{a}{\Phi}^{\bar{a}} \ _{\Lambda} \ {D}^{1}{\Phi}_{a}] 
&= i({D}^{1}+{\chi}^{1})(2{\partial} + {\chi}^{1}{D}^{1} + {\chi}^{2}{D}^{2}){\Phi}_{a}
\\
&= i(2{\partial} + {\lambda} + {\chi}^{1}{D}^{1} + {\chi}^{2}{D}^{2}){D}^{1}{\Phi}_{a} - {\chi}^{1}{\chi}^{2}{D}^{1}{\Phi}_{a}.
\end{split}
\end{align}
Similarly, we obtain:
\begin{align}
\begin{split}
[{D}^{1}{\Phi}_{a}{\Phi}^{\bar{a}} \ _{\Lambda} \ {\Phi}^{\bar{a}}] = i(2{\partial} + {\lambda} + {\chi}^{1}{D}^{1} + {\chi}^{2}{D}^{2}){\Phi}^{\bar{a}} + {\chi}^{1}{\chi}^{2}{\Phi}^{\bar{a}}.
\end{split}
\end{align}
By the $\Lambda$-brackets in (4.28) and (4.29), and from the non-commutative Wick formula, we have:
\begin{align}
\begin{split}
[{D}^{1}{\Phi}_{a}{\Phi}^{\bar{a}} \ _{\Lambda} \ {D}^{1}{\Phi}_{a}{\Phi}^{\bar{a}}] 
&= [{D}^{1}{\Phi}_{a}{\Phi}^{\bar{a}} \ _{\Lambda} \ {D}^{1}{\Phi}_{a}]{\Phi}^{\bar{a}} + {D}^{1}{\Phi}_{a}[{D}^{1}{\Phi}_{a}{\Phi}^{\bar{a}} \ _{\Lambda} \ {\Phi}^{\bar{a}}]
\\
& \quad + \int_{0}^{\Lambda} [[{D}^{1}{\Phi}_{a}{\Phi}^{\bar{a}} \ _{\Lambda} \ {D}^{1}{\Phi}_{a}] \ _{\Gamma} \ {\Phi}^{\bar{a}}] \ d\Gamma
\\
&= i(2{\partial} + 2{\lambda} + {\chi}^{1}{D}^{1} + {\chi}^{2}{D}^{2})({D}^{1}{\Phi}_{a}{\Phi}^{\bar{a}}) - {\lambda}{\chi}^{1}{\chi}^{2}.
\end{split}
\end{align}
On the other hand, we can compute:
\begin{align}
\begin{split}
[{\Phi}_{a}{\Phi}^{\bar{a}} \ _{\Lambda} \ {\Phi}_{a}{\Phi}^{\bar{a}}] = 0.
\end{split}
\end{align}
From the $\Lambda$-brackets in (4.27) and (4.29), we have:
\begin{align*}
\begin{split}
[{D}^{1}{\Phi}_{a}{\Phi}^{\bar{a}} \ _{\Lambda} \ {\Phi}_{a}{\Phi}^{\bar{a}}] &= i(2{\partial} + {\lambda} + {\chi}^{1}{D}^{1} + {\chi}^{2}{D}^{2})({\Phi}_{a}{\Phi}^{\bar{a}}) 
\\
& \quad + {\chi}^{1}{\chi}^{2}({\Phi}_{a}{\Phi}^{\bar{a}}) - {\lambda}(-i{\chi}^{1} + {\chi}^{2}).
\end{split}
\end{align*}
Hence, using sesquilinearity, we find:
\begin{align}
\begin{split}
[{D}^{1}{\Phi}_{a}{\Phi}^{\bar{a}} \ _{\Lambda} \ {D}^{1}({\Phi}_{a}{\Phi}^{\bar{a}})] &= i(2{\partial} + 2{\lambda} + {\chi}^{1}{D}^{1} + {\chi}^{2}{D}^{2})({D}^{1}({\Phi}_{a}{\Phi}^{\bar{a}}))
\\
& \quad + (i{\lambda}{\chi}^{1} + {\lambda}{\chi}^{2})({\Phi}_{a}{\Phi}^{\bar{a}}) - i{\lambda}^{2} - {\lambda}{\chi}^{1}{\chi}^{2},
\end{split}
\end{align}
and, by skew-symmetry, we have:
\begin{align}
\begin{split}
[{D}^{1}({\Phi}_{a}{\Phi}^{\bar{a}}) \ _{\Lambda} \ {D}^{1}{\Phi}_{a}{\Phi}^{\bar{a}}] = -(i{\lambda}{\chi}^{1} + {\lambda}{\chi}^{2})({\Phi}_{a}{\Phi}^{\bar{a}}) + i{\lambda}^{2} - {\lambda}{\chi}^{1}{\chi}^{2}.
\end{split}
\end{align}
Combining the results in (4.30), (4.31), (4.32) and (4.33), we obtain:
\begin{align*}
\begin{split}
[{D}^{1}{\Phi}_{a}&{\Phi}^{\bar{a}} + t_{a}{D}^{1}({\Phi}_{a}{\Phi}^{\bar{a}}) \ _{\Lambda} \ {D}^{1}{\Phi}_{a}{\Phi}^{\bar{a}} + t_{a}{D}^{1}({\Phi}_{a}{\Phi}^{\bar{a}})]
\\
&= i(2{\partial} + 2{\lambda} + {\chi}^{1}{D}^{1} + {\chi}^{2}{D}^{2})({D}^{1}{\Phi}_{a}{\Phi}^{\bar{a}} + t_{a}{D}^{1}({\Phi}_{a}{\Phi}^{\bar{a}})) - (2{t_{a}}+1){\lambda}{\chi}^{1}{\chi}^{2},
\end{split}
\end{align*}
hence, we get the result.
\end{proof}


\newpage
\begin{appendices}
\section{Calculations of lambda brackets}\label{sectionA}
\setcounter{equation}{0}

For any even element $a$ in $A$, we have the following lambda brackets in $F^{ch}_{N=1}$.

\begin{lemma}\label{lemmaA.1} Lambda brackets for $\partial\phi_{a}\phi^{\bar{a}}$\textup{:}
\begin{align}
\begin{split}
[\partial\phi_{a}\phi^{\bar{a}} \ _\Lambda \ \phi_{a}]
& = \partial\phi_{a},
\end{split}
\end{align}
\begin{align}
\begin{split}
[\partial\phi_{a}\phi^{\bar{a}} \ _\Lambda \ \phi^{\bar{a}}]
& = (\partial + \lambda)\phi^{\bar{a}},
\end{split}
\end{align}
\begin{align}
\begin{split}
[\partial\phi_{a}\phi^{\bar{a}} \ _\Lambda \ D\phi_{a}]
& = (D + \chi)\partial\phi_{a},
\end{split}
\end{align}
\begin{align}
\begin{split}
[\partial\phi_{a}\phi^{\bar{a}} \ _\Lambda \ D\phi^{\bar{a}}]
& = (D + \chi)(\partial + \lambda)\phi^{\bar{a}},
\end{split}
\end{align}
\begin{align*}
\begin{split}
[\partial\phi_{a}\phi^{\bar{a}} \ _\Lambda \ \partial\phi_{a}]
& = (\partial + \lambda)\partial\phi_{a},
\end{split}
\end{align*}
\begin{align*}
\begin{split}
[\partial\phi_{a}\phi^{\bar{a}} \ _\Lambda \ \partial\phi^{\bar{a}}]
& = (\partial + \lambda)^{2}\phi^{\bar{a}}.
\end{split}
\end{align*}
\end{lemma}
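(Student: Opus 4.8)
The plan is to reduce every bracket in the lemma to the elementary brackets $[\phi_a \ _\Lambda \ \phi^{\bar b}] = \delta_{ab}$ and $[\phi_a \ _\Lambda \ \phi_b] = [\phi^{\bar a} \ _\Lambda \ \phi^{\bar b}] = 0$ of Example \ref{exmaple2.12}, using the $N_K=1$ sesquilinearity and skew-symmetry of Definition \ref{definition2.10} together with the non-commutative Wick formula of Definition \ref{definition2.11}. A useful preliminary step is to record an auxiliary sesquilinearity rule for $\partial$: since $\partial = D^2$ in $\mathcal{H}_{N=1}$, applying $[a \ _\Lambda \ Dc] = (-1)^{a+1}(D+\chi)[a \ _\Lambda \ c]$ twice and using $D^2 = \partial$, $\chi^2 = -\lambda$, and $D\chi + \chi D = 2\lambda$ gives $(D+\chi)^2 = \partial + \lambda$, hence $[a \ _\Lambda \ \partial c] = (\partial+\lambda)[a \ _\Lambda \ c]$. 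This rule, together with the given sesquilinearity for $D$, reduces any bracket carrying $D\phi$ or $\partial\phi$ in the second slot to one with a bare $\phi_a$ or $\phi^{\bar a}$ there.

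It therefore suffices to establish the two base identities (A.1) and (A.2). The obstacle here is that the normally ordered product $\partial\phi_a\phi^{\bar a}$ occupies the \emph{first} slot, while the Wick formula expands products in the second slot; I would resolve this by flipping with skew-symmetry, writing $[\partial\phi_a\phi^{\bar a} \ _\Lambda \ w] = (-1)^{|w|}[w \ _{-\nabla-\Lambda} \ \partial\phi_a\phi^{\bar a}]$, noting that $\partial\phi_a\phi^{\bar a}$ is odd while $\phi_a$ is even and $\phi^{\bar a}$ is odd. For $w = \phi_a$, the Wick formula gives $[\phi_a \ _\Lambda \ \partial\phi_a\phi^{\bar a}] = [\phi_a \ _\Lambda \ \partial\phi_a]\phi^{\bar a} + \partial\phi_a[\phi_a \ _\Lambda \ \phi^{\bar a}] = \partial\phi_a$, using $[\phi_a \ _\Lambda \ \partial\phi_a] = (\partial+\lambda)[\phi_a \ _\Lambda \ \phi_a] = 0$; being $\Lambda$-independent, the skew-symmetry substitution returns $\partial\phi_a$, which is (A.1). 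For $w = \phi^{\bar a}$, one finds $[\phi^{\bar a} \ _\Lambda \ \partial\phi_a\phi^{\bar a}] = \lambda\phi^{\bar a}$ from $[\phi^{\bar a} \ _\Lambda \ \partial\phi_a] = (\partial+\lambda)[\phi^{\bar a} \ _\Lambda \ \phi_a] = \lambda$ and $[\phi^{\bar a} \ _\Lambda \ \phi^{\bar a}] = 0$; substituting $\lambda \mapsto -\partial-\lambda$ with $\partial$ acting on $\phi^{\bar a}$ and absorbing the overall sign $(-1)^{|w|} = -1$ yields $(\partial+\lambda)\phi^{\bar a}$, which is (A.2).

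The remaining four brackets then follow mechanically: (A.3) and (A.4) are obtained by applying $(D+\chi)$ to (A.1) and (A.2) through sesquilinearity in the second argument (the prefactor $(-1)^{a+1}$ is trivial since $\partial\phi_a\phi^{\bar a}$ is odd), while the last two come from applying the auxiliary operator $(\partial+\lambda)$ to (A.1) and (A.2), giving $(\partial+\lambda)\partial\phi_a$ and $(\partial+\lambda)^2\phi^{\bar a}$. I expect the only genuine difficulty to be bookkeeping rather than conceptual: one must track the parities in the Wick and skew-symmetry signs with care, and—most delicately—carry out the skew-symmetry substitution $(\lambda,\chi) \mapsto (-\partial-\lambda,-D-\chi)$ so that $\partial$ and $D$ act on the coefficient vectors to their right, since it is precisely this step that generates the derivative corrections $(\partial+\lambda)$ appearing in (A.2), (A.5) and (A.6).
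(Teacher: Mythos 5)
Your proposal is correct and takes essentially the same route as the paper: the paper also proves (A.1) by flipping the bracket with skew-symmetry, expanding via the non-commutative Wick formula, and killing terms with sesquilinearity, and then obtains the remaining five brackets ``similarly,'' exactly as your systematic propagation by $(D+\chi)$ and $(\partial+\lambda)$ makes precise. The only detail you gloss over is the Wick integral term in the computation behind (A.2): there the inner bracket $[\phi^{\bar{a}} \ _{\Lambda} \ \partial\phi_{a}] = \lambda$ is a nonzero multiple of the vacuum, so the integral vanishes not because the inner bracket is zero (as in (A.1)) but because $\Lambda$-brackets with the vacuum vanish --- a routine point worth stating.
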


\begin{proof}
The $\Lambda$-brackets can be obtained by using the rules of $N_{K}=1$ SUSY vertex algebras. For the $\Lambda$-bracket in (A.1):
\begin{align*}
\begin{split}
[\phi_{a} \ _\Lambda \ \partial\phi_{a}\phi^{\bar{a}}]
& = [\phi_{a} \ _\Lambda \ \partial\phi_{a}]\phi^{\bar{a}} + (-1)^{(a+1)a}{\partial}{\phi_{a}}[\phi_{a} \ _\Lambda \ \phi^{\bar{a}}] + \int_{0}^{\Lambda} [[\phi_{a} \ _\Lambda \ \partial\phi_{a}] \ _{\Gamma} \ {\phi}^{\bar{a}}] \ d\Gamma
\\
&= \partial\phi_{a},
\end{split}
\end{align*}
since, from sesquilinearity, we know that:
\begin{align*}
\begin{split}
[\phi_{a} \ _\Lambda \ \partial\phi_{a}] = (\partial + \lambda)[\phi_{a} \ _\Lambda \ \phi_{a}] = 0.
\end{split}
\end{align*}
Hence, by skew-symmetry, we get:
\begin{align*}
\begin{split}
[\partial\phi_{a}\phi^{\bar{a}} \ _\Lambda \ \phi_{a}] = (-1)^{(a+a+1)a}[\phi_{a} \ _{- \nabla - \Lambda} \ \partial\phi_{a}\phi^{\bar{a}}] = \partial\phi_{a}.
\end{split}
\end{align*}
Similarly, the other $\Lambda$-brackets also can be obtained by using the rules of $\Lambda$-brackets of $N_{K}=1$ SUSY vertex algebras.
\end{proof}

\begin{lemma}\label{lemmaA.2} Lambda brackets for $\phi_{a}\partial\phi^{\bar{a}}$\textup{:}
\begin{align}
\begin{split}
[\phi_{a}\partial\phi^{\bar{a}} \ _\Lambda \ \phi_{a}]
& = - (\partial + \lambda)\phi_{a},
\end{split}
\end{align}
\begin{align}
\begin{split}
[\phi_{a}\partial\phi^{\bar{a}} \ _\Lambda \ \phi^{\bar{a}}]
& = - \partial\phi^{\bar{a}},
\end{split}
\end{align}
\begin{align*}
\begin{split}
[\phi_{a}\partial\phi^{\bar{a}} \ _\Lambda \ D\phi_{a}]
& = - (D + \chi)(\partial + \lambda)\phi_{a},
\end{split}
\end{align*}
\begin{align*}
\begin{split}
[\phi_{a}\partial\phi^{\bar{a}} \ _\Lambda \ D\phi^{\bar{a}}]
& = - (D + \chi)\partial\phi^{\bar{a}},
\end{split}
\end{align*}
\begin{align*}
\begin{split}
[\phi_{a}\partial\phi^{\bar{a}} \ _\Lambda \ \partial\phi_{a}]
& = - (\partial + \lambda)^{2}\phi_{a},
\end{split}
\end{align*}
\begin{align*}
\begin{split}
[\phi_{a}\partial\phi^{\bar{a}} \ _\Lambda \ \partial\phi^{\bar{a}}]
& = -(\partial + \lambda)\partial\phi^{\bar{a}}.
\end{split}
\end{align*}
\end{lemma}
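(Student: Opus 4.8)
The plan is to derive all six $\Lambda$-brackets by the same mechanism used in Lemma \ref{lemmaA.1}, invoking only the generating brackets $[\phi_{a} \ _{\Lambda} \ \phi^{\bar{b}}] = \delta_{ab}$, $[\phi_{a} \ _{\Lambda} \ \phi_{b}] = [\phi^{\bar{a}} \ _{\Lambda} \ \phi^{\bar{b}}] = 0$ of Example \ref{exmaple2.12}, the sesquilinearity and skew-symmetry of Definition \ref{definition2.10}, and the non-commutative Wick formula of Definition \ref{definition2.11}. The six identities fall into two groups: the base brackets (A.5) and (A.6), whose second entry is a generator, and the remaining four, whose second entry is a generator acted on by $D$ or $\partial$; the latter will be reduced to the former purely by sesquilinearity.

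First I would establish (A.5) and (A.6). As in Lemma \ref{lemmaA.1}, it is cleaner to place the generator in the first slot and appeal to skew-symmetry at the end. For (A.5), the Wick expansion of $[\phi_{a} \ _{\Lambda} \ \phi_{a}\partial\phi^{\bar{a}}]$ has a vanishing $[\phi_{a} \ _{\Lambda} \ \phi_{a}]\partial\phi^{\bar{a}}$ term and a vanishing Wick integral, leaving only $\phi_{a}[\phi_{a} \ _{\Lambda} \ \partial\phi^{\bar{a}}] = \lambda\phi_{a}$ after the $\partial$-sesquilinearity $[\phi_{a} \ _{\Lambda} \ \partial\phi^{\bar{a}}] = (\partial+\lambda)[\phi_{a} \ _{\Lambda} \ \phi^{\bar{a}}]$. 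Flipping by skew-symmetry, the substitution $(\lambda,\chi) \mapsto (-\partial-\lambda, -D-\chi)$ then turns $\lambda\phi_{a}$ into $-(\partial+\lambda)\phi_{a}$. For (A.6) the reversed bracket $[\phi^{\bar{a}} \ _{\Lambda} \ \phi_{a}\partial\phi^{\bar{a}}]$ collapses to the single contraction $[\phi^{\bar{a}} \ _{\Lambda} \ \phi_{a}]\partial\phi^{\bar{a}} = \partial\phi^{\bar{a}}$, since $[\phi^{\bar{a}} \ _{\Lambda} \ \partial\phi^{\bar{a}}] = 0$; the Koszul sign $(-1)^{1\cdot 1}$ arising from skew-symmetry then produces $-\partial\phi^{\bar{a}}$.

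With the base cases in hand, the four remaining brackets follow by sesquilinearity in the second argument, in the same manner as the $D$-brackets of Lemma \ref{lemmaA.1}. The rule $[x \ _{\Lambda} \ Dy] = (-1)^{x+1}(D+\chi)[x \ _{\Lambda} \ y]$, applied with $x = \phi_{a}\partial\phi^{\bar{a}}$ (which is odd, so the sign is $+1$), converts (A.5) and (A.6) into the two $D$-brackets $-(D+\chi)(\partial+\lambda)\phi_{a}$ and $-(D+\chi)\partial\phi^{\bar{a}}$; the rule $[x \ _{\Lambda} \ \partial y] = (\partial+\lambda)[x \ _{\Lambda} \ y]$ converts them into the two $\partial$-brackets $-(\partial+\lambda)^{2}\phi_{a}$ and $-(\partial+\lambda)\partial\phi^{\bar{a}}$.

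I expect the only real difficulty to be sign bookkeeping rather than anything structural. One must keep straight that for even $a$ the generator $\phi_{a}$ is even while $\phi^{\bar{a}}$ is odd, so that $\phi_{a}\partial\phi^{\bar{a}}$ is odd, and one must execute the skew-symmetry substitution $(\lambda,\chi)\mapsto(-\partial-\lambda,-D-\chi)$ with the convention that the newly produced $\partial$ differentiates the surviving field to its right. This is precisely what accounts for the overall minus signs distinguishing Lemma \ref{lemmaA.2} from Lemma \ref{lemmaA.1}: because here the derivative sits on $\phi^{\bar{a}}$ rather than on $\phi_{a}$, the factor left uncontracted is the \emph{undifferentiated} generator, on which the substitution reintroduces a full $(\partial+\lambda)$ together with the sign.
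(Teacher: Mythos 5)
Your proposal is correct and is essentially the paper's own approach: the paper disposes of Lemma \ref{lemmaA.2} with ``direct calculations,'' meaning exactly the method it demonstrates for Lemma \ref{lemmaA.1}, namely placing the generator in the first slot, applying the non-commutative Wick formula and sesquilinearity, and then flipping by skew-symmetry. Your execution, including the Koszul signs $(-1)^{1\cdot 0}$ and $(-1)^{1\cdot 1}$ and the reduction of the four $D$- and $\partial$-brackets to (A.5) and (A.6) by sesquilinearity, checks out.
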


\begin{proof}
It follows from direct calculations.
\end{proof}

\begin{lemma}\label{lemmaA.3} Lambda brackets for $D\phi_{a}D\phi^{\bar{a}}$\textup{:}
\begin{align}
\begin{split}
[D\phi_{a}D\phi^{\bar{a}} \ _\Lambda \ \phi_{a}]
& = (\partial + \chi D)\phi_{a},
\end{split}
\end{align}
\begin{align}
\begin{split}
[D\phi_{a}D\phi^{\bar{a}} \ _\Lambda \ \phi^{\bar{a}}]
& = (\partial + \chi D)\phi^{\bar{a}},
\end{split}
\end{align}
\begin{align*}
\begin{split}
[D\phi_{a}D\phi^{\bar{a}} \ _\Lambda \ D\phi_{a}]
& = (\partial + \lambda)D\phi_{a},
\end{split}
\end{align*}
\begin{align*}
\begin{split}
[D\phi_{a}D\phi^{\bar{a}} \ _\Lambda \ D\phi^{\bar{a}}]
& = (\partial + \lambda)D\phi^{\bar{a}},
\end{split}
\end{align*}
\begin{align*}
\begin{split}
[D\phi_{a}D\phi^{\bar{a}} \ _\Lambda \ \partial\phi_{a}]
& = (\partial + \lambda)(\partial + \chi D)\phi_{a},
\end{split}
\end{align*}
\begin{align*}
\begin{split}
[D\phi_{a}D\phi^{\bar{a}} \ _\Lambda \ \partial\phi^{\bar{a}}]
& = (\partial + \lambda)(\partial + \chi D)\phi^{\bar{a}}.
\end{split}
\end{align*}
\end{lemma}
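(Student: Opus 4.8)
The plan is to reduce all six $\Lambda$-brackets to the elementary relations $[\phi_a\,_\Lambda\,\phi^{\bar a}]=1$ and $[\phi_a\,_\Lambda\,\phi_a]=[\phi^{\bar a}\,_\Lambda\,\phi^{\bar a}]=0$ from Example \ref{exmaple2.12}, via the non-commutative Wick formula, sesquilinearity and skew-symmetry of Definitions \ref{definition2.10} and \ref{definition2.11}, following the pattern of the proof of Lemma \ref{lemmaA.1}. I split the six identities into three groups.

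First I establish the two base brackets (A.7) and (A.8), whose second argument is an undifferentiated field. Because the Wick formula expands a product in the \emph{second} slot, I compute the reversed brackets $[\phi_a\,_\Lambda\,D\phi_aD\phi^{\bar a}]$ and $[\phi^{\bar a}\,_\Lambda\,D\phi_aD\phi^{\bar a}]$ and then flip by skew-symmetry. In the first, sesquilinearity gives $[\phi_a\,_\Lambda\,D\phi_a]=(-1)^{a+1}(D+\chi)[\phi_a\,_\Lambda\,\phi_a]=0$, which annihilates both the leading and the integral terms, while $[\phi_a\,_\Lambda\,D\phi^{\bar a}]=(-1)^{a+1}(D+\chi)(1)=(-1)^{a+1}\chi$ survives; hence only the middle Wick term remains, proportional to $\chi D\phi_a$. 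Skew-symmetry then gives $[D\phi_aD\phi^{\bar a}\,_\Lambda\,\phi_a]=\pm[\phi_a\,_{-\nabla-\Lambda}\,D\phi_aD\phi^{\bar a}]$, and the substitution $\chi\mapsto-D-\chi$ converts $\chi D\phi_a$ into $(\partial+\chi D)\phi_a$, since the odd derivation $D$ now acts on the coefficient $D\phi_a$ and produces $D(D\phi_a)=\partial\phi_a$. The bracket (A.8) is identical after interchanging $\phi_a$ and $\phi^{\bar a}$ and using $[\phi^{\bar a}\,_\Lambda\,\phi_a]=1$.

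Next I deduce the brackets against $D\phi_a$ and $D\phi^{\bar a}$ from the base brackets by sesquilinearity in the second argument. Since $D\phi_aD\phi^{\bar a}$ is odd, $[D\phi_aD\phi^{\bar a}\,_\Lambda\,D\phi_a]=(D+\chi)[D\phi_aD\phi^{\bar a}\,_\Lambda\,\phi_a]=(D+\chi)(\partial+\chi D)\phi_a$; expanding and applying $D^2=\partial$, $\chi^2=-\lambda$ and $D\chi+\chi D=2\lambda$ collapses this to $(\partial+\lambda)D\phi_a$, and likewise for $\phi^{\bar a}$. Finally, the brackets against $\partial\phi_a$ and $\partial\phi^{\bar a}$ follow immediately from $\partial$-sesquilinearity $[u\,_\Lambda\,\partial v]=(\partial+\lambda)[u\,_\Lambda\,v]$ — obtained by iterating the $D$-sesquilinearity rule and using $(D+\chi)^2=\partial+\lambda$ — applied to (A.7) and (A.8), giving $(\partial+\lambda)(\partial+\chi D)\phi_a$ and $(\partial+\lambda)(\partial+\chi D)\phi^{\bar a}$.

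The main obstacle is sign and operator bookkeeping, concentrated in two spots. First, the skew-symmetry step requires reading the substitution $\chi\mapsto-D-\chi$ correctly as letting $D$ act on the odd $V$-valued coefficient, which is exactly what produces the $\partial$-term that is absent before flipping. Second, the collapse of $(D+\chi)(\partial+\chi D)$ relies on applying the anticommutator relation $D\chi+\chi D=2\lambda$ together with $\chi^2=-\lambda$ in the correct order, so that the cross terms cancel and only $(\partial+\lambda)D$ remains. Once these two identities are handled carefully, every bracket reduces to routine algebra.
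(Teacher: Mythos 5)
Your proposal is correct: all six brackets check out (e.g.\ $[\phi_a\,_\Lambda\,D\phi_a D\phi^{\bar a}] = -\chi D\phi_a$ flips under skew-symmetry to $(D+\chi)D\phi_a = (\partial+\chi D)\phi_a$, and $(D+\chi)(\partial+\chi D) = (\partial+\lambda)D$ via $D\chi+\chi D=2\lambda$, $\chi^2=-\lambda$), and your route is essentially the paper's, which dismisses the lemma with ``direct computations'' patterned on the proof of Lemma \ref{lemmaA.1} (Wick formula on the reversed bracket, then skew-symmetry). Your reduction of the last four brackets to (A.7) and (A.8) via sesquilinearity and $(D+\chi)^2 = \partial+\lambda$ is a tidy economy over six independent computations, but not a different method.
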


\begin{proof}
The results can be verified by direct computations.
\end{proof}

\begin{lemma}\label{lemmaA.4} Lambda brackets for $D\phi_{a}\phi^{\bar{a}}$ and $\phi_{a}D\phi^{\bar{a}}$\textup{:}
\begin{align*}
\begin{split}
[D\phi_{a}\phi^{\bar{a}} \ _\Lambda \ \phi_{a}]
& = - D\phi_{a},
\end{split}
\end{align*}
\begin{align*}
\begin{split}
[D\phi_{a}\phi^{\bar{a}} \ _\Lambda \ \phi^{\bar{a}}]
& = - (D + \chi)\phi^{\bar{a}},
\end{split}
\end{align*}
\begin{align*}
\begin{split}
[D\phi_{a}\phi^{\bar{a}} \ _\Lambda \ D\phi_{a}]
& = (\partial + \chi D)\phi_{a},
\end{split}
\end{align*}
\begin{align*}
\begin{split}
[D\phi_{a}\phi^{\bar{a}} \ _\Lambda \ D\phi^{\bar{a}}]
& = (\partial + \lambda)\phi^{\bar{a}},
\end{split}
\end{align*}
\begin{align*}
\begin{split}
[\phi_{a}D\phi^{\bar{a}} \ _\Lambda \ \phi_{a}]
& = (D + \chi)\phi_{a},
\end{split}
\end{align*}
\begin{align*}
\begin{split}
[\phi_{a}D\phi^{\bar{a}} \ _\Lambda \ \phi^{\bar{a}}]
& = D\phi^{\bar{a}},
\end{split}
\end{align*}
\begin{align*}
\begin{split}
[\phi_{a}D\phi^{\bar{a}} \ _\Lambda \ D\phi_{a}]
& = - (\partial + \lambda)\phi_{a},
\end{split}
\end{align*}
\begin{align*}
\begin{split}
[\phi_{a}D\phi^{\bar{a}} \ _\Lambda \ D\phi^{\bar{a}}]
& = - (\partial + \chi D)\phi^{\bar{a}}.
\end{split}
\end{align*}
\end{lemma}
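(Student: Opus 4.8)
The plan is to derive all eight identities from the fundamental relations $[\phi_a \ _\Lambda \ \phi^{\bar{a}}] = 1$, $[\phi_a \ _\Lambda \ \phi_a] = [\phi^{\bar{a}} \ _\Lambda \ \phi^{\bar{a}}] = 0$ of Example \ref{exmaple2.12}, together with the three structural axioms of Definition \ref{definition2.10} and Definition \ref{definition2.11}: sesquilinearity, skew-symmetry, and the non-commutative Wick formula, plus the relations $D^{2} = \partial$, $\chi^{2} = -\lambda$ and $D\chi + \chi D = 2\lambda$ from Remark \ref{remark2.9}. I would organize the eight brackets into two groups. The four \emph{base} brackets — those whose right entry is $\phi_{a}$ or $\phi^{\bar{a}}$ (the first, second, fifth and sixth) — I compute directly; the remaining four, whose right entry carries a $D$, then follow from the base ones by a single application of sesquilinearity. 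Throughout I only need the even case, since the lemma is stated for $a \in A_{\bar{0}}$, so $\phi_{a}$ is even and $\phi^{\bar{a}}$ is odd, whence $D\phi_{a}\phi^{\bar{a}}$ and $\phi_{a}D\phi^{\bar{a}}$ are even.

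For a base bracket such as $[D\phi_{a}\phi^{\bar{a}} \ _\Lambda \ \phi_{a}]$, I would first move the single generator to the left by skew-symmetry, writing it as $(-1)^{0}[\phi_{a} \ _{-\nabla-\Lambda} \ D\phi_{a}\phi^{\bar{a}}]$, and expand $[\phi_{a} \ _\Lambda \ D\phi_{a}\phi^{\bar{a}}]$ by the non-commutative Wick formula. The point is that almost everything cancels: $[\phi_{a} \ _\Lambda \ D\phi_{a}] = \chi[\phi_{a} \ _\Lambda \ \phi_{a}] = 0$ by the first sesquilinearity relation, so both the leading term and the integral term (whose inner bracket is this vanishing quantity) drop out, leaving only the middle term $(-1)^{(\phi_a+1)(D\phi_a)}D\phi_{a}[\phi_{a} \ _\Lambda \ \phi^{\bar{a}}] = -D\phi_{a}$. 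As this has no $\Lambda$-dependence the skew-symmetry substitution $\Lambda \mapsto -\nabla-\Lambda$ acts trivially, giving $[D\phi_{a}\phi^{\bar{a}} \ _\Lambda \ \phi_{a}] = -D\phi_{a}$. The other three base brackets are identical in structure; the only nontrivial substitution occurs for $[D\phi_{a}\phi^{\bar{a}} \ _\Lambda \ \phi^{\bar{a}}]$, where the surviving term is $\chi\phi^{\bar{a}}$ and replacing $\chi \mapsto -D-\chi$ (with $-D$ acting on $\phi^{\bar{a}}$) yields $-(D+\chi)\phi^{\bar{a}}$ as claimed.

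For the four derivative brackets I would invoke the second sesquilinearity relation $[u \ _\Lambda \ Dv] = (-1)^{u+1}(D+\chi)[u \ _\Lambda \ v]$ with $u$ even, so the prefactor is $-1$. For instance $[D\phi_{a}\phi^{\bar{a}} \ _\Lambda \ D\phi^{\bar{a}}] = -(D+\chi)[D\phi_{a}\phi^{\bar{a}} \ _\Lambda \ \phi^{\bar{a}}] = (D+\chi)^{2}\phi^{\bar{a}}$, and $(D+\chi)^{2} = D^{2} + (D\chi + \chi D) + \chi^{2} = \partial + 2\lambda - \lambda = \partial + \lambda$ gives $(\partial+\lambda)\phi^{\bar{a}}$. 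The analogous step turns $[D\phi_{a}\phi^{\bar{a}} \ _\Lambda \ \phi_{a}] = -D\phi_{a}$ into $[D\phi_{a}\phi^{\bar{a}} \ _\Lambda \ D\phi_{a}] = (D+\chi)D\phi_{a} = (\partial + \chi D)\phi_{a}$, and likewise produces the seventh and eighth identities from the fifth and sixth.

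The main obstacle is not any single computation but the uniform bookkeeping of signs: the parity factors in the Wick formula and in skew-symmetry, and above all the correct execution of the substitution $(\lambda,\chi) \mapsto (-\partial-\lambda, -D-\chi)$ in the skew-symmetry step, where $D$ and $\chi$ are odd and must be ordered against the surviving field. I expect this to be manageable precisely because every intermediate bracket is at most linear in the generators with coefficients that are scalars or a single $\chi$; consequently all integral terms vanish and the $\Lambda$-dependence is low enough that the substitution and the simplification of $(D+\chi)^{2}$ via $D^{2}=\partial$, $\chi^{2}=-\lambda$, $D\chi+\chi D = 2\lambda$ are the only places requiring care.
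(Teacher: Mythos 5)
Your proposal is correct and takes essentially the same approach as the paper: the paper disposes of Lemma \ref{lemmaA.4} with ``direct computations,'' and its model computation in Lemma \ref{lemmaA.1} uses exactly your pattern of skew-symmetry plus the non-commutative Wick formula for the base brackets, with sesquilinearity and the relations $D^{2}=\partial$, $\chi^{2}=-\lambda$, $[D,\chi]=2\lambda$ supplying the brackets with derivative entries. All eight identities, the vanishing of the integral terms, and the sign bookkeeping in your outline (including the substitution $\chi \mapsto -D-\chi$ acting on the surviving field) check out.
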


\begin{proof}
We also get the results from direct computations.
\end{proof}


\section{Charge decomposition}\label{sectionB}
\setcounter{equation}{0}

\begin{definition}\label{definitionB.1}
\rm As in Section 5.1 of \textup{\cite{HZ10}}, the eigenvalue of ${J^{ch}_{sh}}_{(0|1)}$ corresponding to $v \in F^{ch}_{N=1}$ is called the \textit{charge} of $v$. Then, for any even element $a$ in $A$, the charges of $\phi_{a}$ and $\phi^{\bar{a}}$ are $t_{a}$ and $-(t_{a} + 1)$ respectively. For the odd elements, the charges of $\phi_{a}$ and $\phi^{\bar{a}}$ are $-(t_{a} + 1)$ and $t_{a}$ respectively. Also, if we define the \textit{BRST operator} $Q$ of $F^{ch}_{N=1}$ as follows:
\begin{align*}
\begin{split}
Q = \frac{1}{2}\left( {T^{ch}_{sh}}_{(0|1)} - {J^{ch}_{sh}}_{(0|0)} \right),
\end{split}
\end{align*}
then we obtain the following result.
\end{definition}

\begin{proposition}\label{propositionB.2}
The BRST operator $Q$ of $F^{ch}_{N=1}$ satisfies $Q^{2} = 0$.
\end{proposition}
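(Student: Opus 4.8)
The plan is to show that $Q$ is nothing but the zero mode of the non-SUSY supercurrent $G^{ch, +}_{sh}$, after which $Q^2 = 0$ falls out of the relation $[G^{ch, +}_{sh} \ _\lambda \ G^{ch, +}_{sh}] = 0$. First I would read off the two $N_{K}=1$ modes appearing in $Q$ from the superfield decomposition (4.22) of Remark \ref{remark4.15}. By Definition \ref{definition2.7}, the $\theta^{0}$-component of a superfield $Y(a, z, \theta)$ carries the modes $a_{(j|1)}$, while the $\theta^{1}$-component carries the modes $a_{(j|0)}$. Applying this to (4.22) gives ${T^{ch}_{sh}}_{(0|1)} = (G^{ch, +}_{sh} + G^{ch, -}_{sh})_{(0)}$ and ${J^{ch}_{sh}}_{(0|0)} = (G^{ch, -}_{sh} - G^{ch, +}_{sh})_{(0)}$. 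Substituting into the definition of $Q$ in Definition \ref{definitionB.1}, the $G^{ch, -}_{sh}$-contributions cancel and we obtain $Q = {G^{ch, +}_{sh}}_{(0)}$.

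Since $Q$ is an odd operator, $Q^{2} = \frac{1}{2}[Q, Q]$, where $[\ , \ ]$ denotes the supercommutator (here the anticommutator). By Remark \ref{remark4.15} the fields $L^{ch}_{sh}, G^{ch, +}_{sh}, G^{ch, -}_{sh}, J^{ch}_{sh}$ satisfy the non-SUSY $\lambda$-bracket relations of an $N=2$ superconformal vertex algebra (Definition \ref{definition3.3}); in particular $[G^{ch, +}_{sh} \ _\lambda \ G^{ch, +}_{sh}] = 0$, which is equivalent to ${G^{ch, +}_{sh}}_{(j)}G^{ch, +}_{sh} = 0$ for every $j \geq 0$. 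The commutator formula for modes of a vertex algebra, $[a_{(m)}, b_{(n)}] = \sum_{j \geq 0}\binom{m}{j}(a_{(j)}b)_{(m+n-j)}$, then yields $[{G^{ch, +}_{sh}}_{(0)}, {G^{ch, +}_{sh}}_{(0)}] = ({G^{ch, +}_{sh}}_{(0)}G^{ch, +}_{sh})_{(0)} = 0$. Hence $2Q^{2} = 0$ and $Q^{2} = 0$.

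The whole argument is short once the identification $Q = {G^{ch, +}_{sh}}_{(0)}$ is in hand, so the only real work is the bookkeeping in the first step: correctly matching the $(0|1)$ and $(0|0)$ modes of the odd vector $T^{ch}_{sh}$ and the even vector $J^{ch}_{sh}$ to the $\theta$-components in (4.22), and verifying that the $G^{ch, -}_{sh}$-terms cancel in $\frac{1}{2}({T^{ch}_{sh}}_{(0|1)} - {J^{ch}_{sh}}_{(0|0)})$. I expect no genuine obstacle beyond this translation between the $N_{K}=1$ mode notation and the non-SUSY fields of Remark \ref{remark4.15}; the vanishing bracket $[G^{ch, +}_{sh} \ _\lambda \ G^{ch, +}_{sh}] = 0$ does all the remaining work.
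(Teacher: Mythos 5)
Your proof is correct, but it takes a genuinely different route from the paper's. The paper stays entirely inside the $N_{K}=1$ SUSY formalism: setting $d = \sum_{a \in A} D\phi_{a}D\phi^{\bar{a}}$, it uses sesquilinearity (${J^{ch}_{sh}}_{(0|0)} = (DJ^{ch}_{sh})_{(0|1)}$) to rewrite $Q = d_{(0|1)}$, and then deduces nilpotency from the SUSY Jacobi identity applied to the bracket $[d \ _{\Lambda} \ d] = (\partial + 2\lambda)d$ of Lemma \ref{lemma4.4}, comparing $\chi\eta$-coefficients. Notably, in that argument the self-bracket of $d$ does \emph{not} vanish; nilpotency emerges only after extracting the right component of the Jacobi identity. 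You instead pass to the underlying non-SUSY vertex algebra via the superfield decomposition (4.22), identify $Q = {G^{ch,+}_{sh}}_{(0)}$ (which is exactly the identification the paper itself records, after the fact, in Remark \ref{remarkB.3}), and get $Q^{2}=0$ immediately from $[G^{ch,+}_{sh} \ _{\lambda} \ G^{ch,+}_{sh}] = 0$ together with the Borcherds commutator formula $[a_{(0)}, b_{(0)}] = (a_{(0)}b)_{(0)}$. Your route is conceptually cleaner — nilpotency of a zero mode of a field with vanishing self-bracket is the standard BRST mechanism — but it leans on two inputs outside the paper's own toolkit: Remark \ref{remark4.15} (that the component fields satisfy the $N=2$ relations of Definition \ref{definition3.3}, which the paper asserts rather than proves; though $[G^{ch,+}_{sh} \ _{\lambda} \ G^{ch,+}_{sh}]=0$ can be checked directly since $G^{ch,+}_{sh} = \sum_{a} c\beta$ and all $\lambda$-brackets among $c$ and $\beta$ vanish) and the non-SUSY commutator formula, which the paper never states. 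The paper's proof, by contrast, uses only its own Lemma \ref{lemma4.4} and the axioms of $N_{K}=1$ SUSY Lie conformal algebras, so it is self-contained, at the cost of the less transparent coefficient-extraction step.
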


\begin{proof}
If we set $d = \sum\limits_{a \in A}{D\phi_{a}D\phi^{\bar{a}}}$, we know that:
\begin{align*}
\begin{split}
Q & = \frac{1}{2}\left( {T^{ch}_{sh}}_{(0|1)} - {J^{ch}_{sh}}_{(0|0)} \right)
\\
& = \frac{1}{2}\left( {T^{ch}_{sh}}_{(0|1)} - {DJ^{ch}_{sh}}_{(0|1)} \right) = d_{(0|1)},
\end{split}
\end{align*}
by sesquilinearity of $N_{K}=1$ supersymmetric LCAs. Also we have:
\begin{align}
\begin{split}
[d \ _{\Lambda} \ [d \ _{\Gamma} \ v]] 
& = [[d \ _{\Lambda} \ d] \ _{\Lambda + \Gamma} \ v] + [d \ _{\Gamma} \ [d \ _{\Lambda} \ v]]
\\
& = [\partial d + 2\lambda d \ _{\Lambda + \Gamma} \ v] + [d \ _{\Gamma} \ [d \ _{\Lambda} \ v]]
\\
& = (\lambda - \gamma)[d \ _{\Lambda + \Gamma} \ v] + [d \ _{\Gamma} \ [d \ _{\Lambda} \ v]],
\end{split}
\end{align}
for $v \in F^{ch}_{N=1}$, from the Jacobi identity of $N_{K}=1$ supersymmetric LCAs and the equation (4.9) in Lemma \ref{lemma4.4}. Hence, by comparing the $\chi\eta$-terms in (B.1), we have the result:
\begin{align*}
\begin{split}
d_{(0|1)}(d_{(0|1)}v) = 0.
\end{split}
\end{align*}
\end{proof}

\begin{remark}\label{remarkB.3}
\rm The BRST operator $Q$ is invariant under any values of $t_{a}$, that is:
\begin{align*}
\begin{split}
Q = \frac{1}{2}\left( {T^{ch}_{st}}_{(0|1)} - {J^{ch}_{st}}_{(0|0)} \right).
\end{split}
\end{align*}
Indeed, if we decompose the fields $T^{ch}_{sh}$ and $J^{ch}_{sh}$ as in (4.22), then the BRST operator $Q$ is the zero mode of the field $G^{ch, +}_{sh}$. On the other hand, the ghost term contributes to the \textit{homotopy operator} $H$ of $F^{ch}_{N=1}$ defined by:
\begin{align*}
\begin{split}
H = \frac{1}{2}\left( {T^{ch}_{sh}}_{(0|1)} + {J^{ch}_{sh}}_{(0|0)} \right) = ( T^{ch}_{sh} - d )_{(0|1)},
\end{split}
\end{align*}
which is the zero mode of the field $G^{ch, -}_{sh}$.
\end{remark}

\begin{proposition}\label{propositionB.4}
Let  $v \in F^{ch}_{N=1}$ be a vector of the charge $m$. Then $Q(v)$ has the charge $m+1$.
\end{proposition}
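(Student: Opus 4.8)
The plan is to show that $Q$ shifts the charge by a fixed integer equal to the charge of the generator $d$, and then to verify that this integer is $+1$. Recall from the proof of Proposition \ref{propositionB.2} that $Q = d_{(0|1)}$ with $d = \sum_{a \in A} D\phi_{a}D\phi^{\bar{a}}$, and that the charge operator is $J_{0} := {J^{ch}_{sh}}_{(0|1)}$. Since $J_{0}$ is the even zero mode attached to the current $J^{ch}_{sh}$, it acts as a derivation of both the normally ordered product and the $\Lambda$-bracket; applying it to $[d \ _{\Lambda} \ v]$ and extracting the coefficient of $\chi\lambda^{0}$ gives the operator identity
\begin{align*}
\begin{split}
J_{0}(d_{(0|1)}v) = (J_{0}d)_{(0|1)}v + d_{(0|1)}(J_{0}v).
\end{split}
\end{align*}
Thus, if $v$ has charge $m$ and $d$ is homogeneous of charge $c_{d}$ (so $J_{0}d = c_{d}d$), then $J_{0}(Qv) = (c_{d}+m)Qv$, and the proposition reduces to the single claim $c_{d} = 1$.

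To compute $c_{d}$ I would use that $J_{0}$ is a derivation, so that the charge of $D\phi_{a}D\phi^{\bar{a}}$ is the sum of the charges of $D\phi_{a}$ and $D\phi^{\bar{a}}$. The crucial — and most sign-sensitive — point is that $D$ does not preserve the charge. Starting from $[J^{ch}_{sh} \ _{\Lambda} \ \phi_{a}] = -D\phi_{a} + t_{a}\chi\phi_{a}$ (obtained from Lemma \ref{lemmaA.4}) and applying sesquilinearity together with $D^{2}=\partial$, $\chi^{2}=-\lambda$ and the sign incurred when the odd operator $D$ passes the odd variable $\chi$, one finds
\begin{align*}
\begin{split}
[J^{ch}_{sh} \ _{\Lambda} \ D\phi_{a}] = \partial\phi_{a} + t_{a}\lambda\phi_{a} + (t_{a}+1)\chi D\phi_{a},
\end{split}
\end{align*}
so that the charge of $D\phi_{a}$ is $t_{a}+1$; that is, $D$ raises the charge by one. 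The same computation applied to $\phi^{\bar{a}}$ gives the charge of $D\phi^{\bar{a}}$ as $-t_{a}$. Hence, for even $a$, the charge of $D\phi_{a}D\phi^{\bar{a}}$ is $(t_{a}+1)+(-t_{a}) = 1$, and the odd case is identical after exchanging the roles of $\phi_{a}$ and $\phi^{\bar{a}}$. Therefore $c_{d}=1$ and $J_{0}(Qv) = (m+1)Qv$, as desired.

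The main obstacle is precisely this charge computation for $d$: the naive sum of the charges $t_{a}$ and $-(t_{a}+1)$ of $\phi_{a}$ and $\phi^{\bar{a}}$ would give $-1$, and it is only the two applications of $D$, each contributing $+1$, that produce the correct value $+1$. I would use the viewpoint of Remark \ref{remarkB.3} as an independent check on this sign: since $Q$ is the zero mode of $G^{ch,+}_{sh}$, and $G^{ch,+}_{sh}$ has $J$-charge $+1$ in the $N=2$ superconformal algebra (by skew-symmetry applied to $[G^{+} \ _{\lambda} \ J] = -G^{+}$ from Definition \ref{definition3.3}), the zero mode of $G^{ch,+}_{sh}$ must raise the charge by one, in agreement with the direct calculation.
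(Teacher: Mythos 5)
Your proof is correct, and it reaches the same pivotal identity as the paper — $J_{0}(Qv) = (J_{0}d)_{(0|1)}v + Q(J_{0}v)$ with $Q = d_{(0|1)}$, $d = \sum_{a\in A}D\phi_{a}D\phi^{\bar{a}}$ — but it gets the key input $J_{0}d = d$ by a genuinely different computation. The paper computes the full bracket $[J^{ch}_{sh} \ _{\Lambda} \ d] = -\lambda J^{ch}_{sh} + (\chi + D)d + \frac{c_{sh}}{6}\lambda^{2}$ head-on, by first obtaining $[d \ _{\Lambda} \ J^{ch}_{sh}]$ from the quadratic-against-quadratic brackets of Lemma \ref{lemma4.13} and then applying skew-symmetry; the charge $+1$ of $d$ is read off as the $\chi$-coefficient, and the proposition follows by extracting $\chi\eta$-coefficients from the Jacobi identity, exactly as in your reduction. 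You instead isolate only the scalar $c_{d}$ and compute it from elementary inputs: the linear brackets of Lemma \ref{lemmaA.4}, sesquilinearity (showing $D$ raises the charge by one, which is indeed the sign-sensitive crux — the naive sum $t_{a} - (t_{a}+1) = -1$ is wrong by exactly the two $D$'s), and additivity of the charge over normally ordered products. This bypasses Lemma \ref{lemma4.13} entirely, at the cost of having to justify the two derivation properties of $J_{0}$; both are legitimate, but note that they are not free — the derivation property on the normally ordered product requires observing that the integral term in the non-commutative Wick formula is divisible by $\lambda$ and so cannot contribute to the $\chi\lambda^{0}$-coefficient, and the derivation property on the $\Lambda$-bracket is precisely the $\chi\eta$-extraction of the Jacobi identity that the paper writes out as (B.2). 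Your closing cross-check, identifying $Q$ with the zero mode of $G^{ch,+}_{sh}$ and using $[G^{+} \ _{\lambda} \ J] = -G^{+}$ plus skew-symmetry, is consistent with Remark \ref{remarkB.3} and is a nice independent confirmation of the sign.

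One small slip that does not affect the argument: with the paper's conventions ($[D,\chi] = 2\lambda$, $\chi^{2} = -\lambda$), sesquilinearity applied to $[J^{ch}_{sh} \ _{\Lambda} \ \phi_{a}] = -D\phi_{a} + t_{a}\chi\phi_{a}$ gives
\begin{align*}
\begin{split}
[J^{ch}_{sh} \ _{\Lambda} \ D\phi_{a}] = \partial\phi_{a} - t_{a}\lambda\phi_{a} + (t_{a}+1)\chi D\phi_{a},
\end{split}
\end{align*}
i.e. the $\lambda$-term carries a minus sign, not the plus sign you wrote. Since only the $\chi$-coefficient enters the charge computation, your conclusion $c_{d} = 1$ and hence the proposition stand unaffected.
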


\begin{proof}
From the Lemma \ref{lemma4.13}, we have:
\begin{align*}
\begin{split}
[d \ _{\Lambda} \ J^{ch}_{sh}] = (\partial + \lambda)J^{ch}_{sh} - \chi d +  \frac{c_{sh}}{6}{\lambda}^{2},
\end{split}
\end{align*}
where $d = \sum\limits_{a \in A}{D\phi_{a}D\phi^{\bar{a}}}$. Hence we have the following $\Lambda$-bracket:
\begin{align*}
\begin{split}
[J^{ch}_{sh} \ _{\Lambda} \ d] & = [d \ _{-\Lambda-\nabla} \ J^{ch}_{sh}]
\\
& = - \lambda J^{ch}_{sh} + (\chi + D)d + \frac{c_{sh}}{6}{\lambda}^{2},
\end{split}
\end{align*}
by skew-symmetry of $N_{K}=1$ supersymmetric LCAs, and then:
\begin{align}
\begin{split}
[J^{ch}_{sh} \ _{\Lambda} \ [d \ _{\Gamma} \ v]] 
& = - [[J^{ch}_{sh} \ _{\Lambda} \ d] \ _{\Lambda + \Gamma} \ v] + [d \ _{\Gamma} \ [J^{ch}_{sh} \ _{\Lambda} \ v]]
\\
& = [\lambda J^{ch}_{sh} - (\chi + D)d - \frac{c_{sh}}{6} \ _{\Lambda + \Gamma} \ v] + [d \ _{\Gamma} \ [J^{ch}_{sh} \ _{\Lambda} \ v]]
\\
& = \lambda[J^{ch}_{sh} \ _{\Lambda + \Gamma} \ v] - \eta[d \ _{\Lambda + \Gamma} \ v] + [d \ _{\Gamma} \ [J^{ch}_{sh} \ _{\Lambda} \ v]],
\end{split}
\end{align}
from the Jacobi identity of $N_{K}=1$ supersymmetric LCAs. By comparing the coefficients of $\chi\eta$-terms in (B.2), we obtain the result:
\begin{align*}
\begin{split}
{J^{ch}_{sh}}_{(0|1)}Q(v) & = {J^{ch}_{sh}}_{(0|1)}(d_{(0|1)}v) 
\\
& = d_{(0|1)}v + d_{(0|1)}({J^{ch}_{sh}}_{(0|1)}v)
\\
& = Q(v) + Q(mv) = (m+1)Q(v).
\end{split}
\end{align*}
\end{proof}

\begin{remark}\label{remarkB.5}
\rm For the BRST operator $Q$ and the homotopy operator $H$ of $F^{ch}_{N=1}$, defined in Remark \ref{remarkB.3}, we have the following result from the simple calculations:
\begin{align*}
\begin{split}
Q(D^{i}\phi) =
\begin{cases}
D^{i+1}\phi , & \text{if} \ i \in 2\mathbb{Z}_{\geq 0}
\\
0 , & \text{if} \ i \in 2\mathbb{Z}_{\geq 0}+1,
\end{cases}
\\
H(D^{i}\phi) =
\begin{cases}
0 , & \text{if} \ i \in 2\mathbb{Z}_{\geq 0}
\\
D^{i+1}\phi , & \text{if} \ i \in 2\mathbb{Z}_{\geq 0}+1,
\end{cases}
\end{split}
\end{align*}
where $\phi$ stands either $\phi_{a}$ or $\phi^{\bar{a}}$ for any $a \in A$. Also the homotopy operator $H$ decreases the charge by $-1$. Moreover, if the charges of $u, v \in F^{ch}_{N=1}$ are $m_{u}$ and $m_{v}$ respectively, then $uv$ has the charge $m_{u} + (-1)^{u}m_{v}$.
\end{remark}

\begin{proposition}\label{propositionB.6}
For $(t_{a})_{a \in A} \in \mathbb{Z}^{\mathrm{dim}_{\mathbb{C}}U}$, the supersymmetric charged free fermion vertex algebra has a $\mathbb{Z}$-grading\textup{:}
\begin{align*}
\begin{split}
F^{ch}_{N=1} = \bigoplus_{m \in \mathbb{Z}}F^{ch, m}_{N=1},
\end{split}
\end{align*}
called the charge decomposition, where $F^{ch, m}_{N=1}$ is the subspace whose elements have the charge $m$.
\end{proposition}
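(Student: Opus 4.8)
The plan is to exhibit the charge operator $\mathcal{J} := {J^{ch}_{sh}}_{(0|1)}$ as a diagonalizable endomorphism of $F^{ch}_{N=1}$ all of whose eigenvalues lie in $\mathbb{Z}$, and then to set $F^{ch, m}_{N=1} := \ker(\mathcal{J} - m\,\mathrm{id})$. Once every element of a spanning set of $F^{ch}_{N=1}$ is shown to be a $\mathcal{J}$-eigenvector with integer eigenvalue, $\mathcal{J}$ is diagonal in that spanning set and the decomposition $F^{ch}_{N=1} = \bigoplus_{m\in\mathbb{Z}}F^{ch, m}_{N=1}$ is immediate.

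First I would record the PBW-type spanning set of $F^{ch}_{N=1} = V(R^{ch}_{N=1})$: since $D^{2} = \partial$, every vector is a $\mathbb{C}$-linear combination of right-nested normally ordered monomials in the letters $\partial^{n}\phi_{a}$, $\partial^{n}D\phi_{a}$, $\partial^{n}\phi^{\bar{a}}$, $\partial^{n}D\phi^{\bar{a}}$ for $a\in A$ and $n\geq 0$. It therefore suffices to establish three facts about $\mathcal{J}$: (i) each such letter is a $\mathcal{J}$-eigenvector of integer charge; (ii) $\mathcal{J}$ commutes with $\partial$; and (iii) $\mathcal{J}$ acts on normally ordered products by the charge-addition rule recorded in Remark \ref{remarkB.5}.

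For (i), Definition \ref{definitionB.1} already supplies the charges of $\phi_{a}$ and $\phi^{\bar{a}}$, and a short sesquilinearity computation settles the $D$-images: from $[J^{ch}_{sh}\ _{\Lambda}\ D\phi] = -(D+\chi)[J^{ch}_{sh}\ _{\Lambda}\ \phi]$ together with $\chi^{2} = -\lambda$ and $[D,\chi] = 2\lambda$, one reads off the $\chi\lambda^{0}$-coefficient and finds that, for even $a$, the letters $D\phi_{a}$ and $D\phi^{\bar{a}}$ have charges $t_{a}+1$ and $-t_{a}$ (with the mirrored values for odd $a$). All of these are integers precisely because $t_{a}\in\mathbb{Z}$; this is where the hypothesis $(t_{a})_{a\in A}\in\mathbb{Z}^{\dim_{\mathbb{C}}U}$ is used. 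For (ii) I would again invoke sesquilinearity: $[J^{ch}_{sh}\ _{\Lambda}\ \partial w] = (\partial+\lambda)[J^{ch}_{sh}\ _{\Lambda}\ w]$, whose $\chi\lambda^{0}$-coefficient gives $\mathcal{J}\partial w = \partial\mathcal{J}w$, so each letter $\partial^{n}D^{\epsilon}\phi$ inherits the integer charge of $D^{\epsilon}\phi$. Fact (iii) is exactly Remark \ref{remarkB.5}, namely that a product of two homogeneous vectors of charges $m_{u}$ and $m_{v}$ has charge $m_{u} + (-1)^{u}m_{v}$, which follows from the non-commutative Wick formula once one observes that the integral correction term contributes nothing at order $\chi\lambda^{0}$.

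Combining (i)–(iii), every monomial in the spanning set is a $\mathcal{J}$-eigenvector whose eigenvalue is an integer combination of the (integer) letter-charges, hence an integer; thus $\mathcal{J}$ is diagonal in this spanning set and $F^{ch}_{N=1}$ splits as the direct sum of its integer eigenspaces, giving the asserted $\mathbb{Z}$-grading. I expect the main obstacle to be bookkeeping rather than conceptual: one must verify carefully that $\mathcal{J}$ really acts as a signed derivation of the non-associative normally ordered product and track the Koszul signs of Remark \ref{remarkB.5} so that the total charge of a nested monomial is well defined independently of how the products are associated, and one must confirm that $\mathcal{J}$ commutes with $\partial$ while shifting charge only by integers under $D$. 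With those signs and commutation relations in hand, the integrality hypothesis on the $t_{a}$ passes directly to the grading.
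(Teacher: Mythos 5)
Your proposal is correct and takes essentially the same approach as the paper: both arguments show that the PBW spanning monomials of $F^{ch}_{N=1}$ are eigenvectors of ${J^{ch}_{sh}}_{(0|1)}$ with integer eigenvalues, by combining the integer charges of the generators from Definition \ref{definitionB.1} (which is where $t_{a} \in \mathbb{Z}$ enters) with the charge-addition rule for normally ordered products in Remark \ref{remarkB.5}. The only cosmetic difference is in how the charges of the derivatives $D^{i}\phi$ are obtained: the paper gets them by alternately applying the charge-raising operator $Q$ (Proposition \ref{propositionB.4}) and the charge-lowering operator $H$ (Remark \ref{remarkB.5}), whereas you compute them directly from sesquilinearity together with $[\mathcal{J}, \partial] = 0$; both computations give the same integer values, so the two proofs coincide in substance.
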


\begin{proof}
Combining the facts in Remark \ref{remarkB.5} and Proposition \ref{propositionB.4}, we obtain the result.
\end{proof}

\end{appendices}


\newpage

\end{document}